\title{Feedback Control of Negative-Imaginary Systems\\
{\Large
  Flexible structures with colocated actuators and sensors}
}
\author{Ian R. Petersen and Alexander Lanzon --- \today}
\newcommand{\Real}{\ensuremath{{\mathbb{R}}}}
\newcommand{\Complex}{\ensuremath{{\mathbb{C}}}}
\newcommand{\jw}{\ensuremath{\jmath\omega}}
\newcommand{\eqn}[1]{\begin{equation}#1\end{equation}}
\newcommand{\matrixB}[1]{\begin{bmatrix}#1\end{bmatrix}}
\newtheorem{definition}{Definition}
\newtheorem{theorem}{Theorem}
\newtheorem{corollary}[theorem]{Corollary}
\newtheorem{lemma}[theorem]{Lemma}
\newcommand{\transpose}{\rm T}
\renewcommand{\jw}{\ensuremath{\jmath \omega}}
\newcommand{\Maeig}{\ensuremath{\lambda_{\mbox{\small max}}}}
\begin{document}
\maketitle

\section{}

Highly resonant dynamics can severely degrade the performance of
technological systems.  Structural modes in machines and robots,
ground and aerospace vehicles, and precision instrumentation, such as
atomic force microscopes and optical systems, can limit the ability of
control systems to achieve the desired performance.  Consequently,
control systems must be designed to suppress the effects of these
dynamics, or at least avoid exciting them beyond open-loop levels.
Open-loop techniques for highly resonant systems, such as input
shaping \cite{PRE02}, as well as closed-loop techniques, such as
damping augmentation \cite{BM90,DLSS08}, can be used for this
purpose.

Structural dynamics are often difficult to model with high precision
due to sensitivity to boundary conditions as well as aging and environmental effects.
Therefore, active damping augmentation to counteract the effects of external commands
and disturbances must account for parametric uncertainty and unmodeled dynamics.
This problem is simplified to some extent by using force actuators combined with
colocated measurements of velocity, position, or acceleration, where colocated
refers to the fact that the sensors and actuators have the same
location and the same direction.
Colocated control with velocity measurements, called {\it negative-velocity feedback},
can be used to directly increase the effective damping, thereby facilitating the design of
controllers that guarantee closed-loop stability in the
presence of plant parameter variations and unmodeled dynamics \cite{BAL79,PRE02}.
This guaranteed stability property can be
established by using  results on passive systems
\cite{DV75,KHA01}. However, the theoretical properties of
negative-velocity feedback are based on the idealized assumption of
colocation and require the availability of
velocity sensors, which may be expensive.
Also,  the choice of measured variable may depend on whether the desired
 objective is shape control or damping augmentation.

An alternative approach to negative-velocity feedback is {\it positive-position feedback},
where position sensors are used in place of velocity sensors.  Although position sensors can facilitate the objective of shape control,
it is less obvious how they can be used for damping augmentation.  Nevertheless,  it is shown in \cite{GC85,FC90} that
a positive-position feedback controller can be designed to increase the damping of the modes of a flexible structure.
Furthermore, this controller is robust against
uncertainty in the modal frequencies as well as
unmodeled plant dynamics.
As shown in \cite{GC85,FC90,LP06a,LP06}, the robustness properties of positive-position feedback
are similar to those of
negative-velocity feedback.

The present article
investigates the robustness of positive-position feedback control of
flexible structures with colocated force actuators and position
sensors. In particular, the theory of negative-imaginary
systems \cite{LP06a,LP06} is used to reveal the robustness properties of multi-input,
multi-output (MIMO) positive-position feedback controllers and related types of controllers for flexible
structures \cite{HM01,PMS02,PRE02,AFM07,BMP1a}.
The negative-imaginary property of linear systems can be extended to nonlinear systems through the notion of  counterclockwise input-output dynamics
\cite{ANG04,ANG06}. It is shown in \cite{POB05} for
the single-input, single-output (SISO) linear case  that  the results of \cite{ANG04,ANG06} guarantee the stability  of a
positive-position feedback control system in the presence of  unmodeled dynamics and parameter
uncertainties that maintain the negative-imaginary property of the plant.

Positive-position feedback  can
be regarded as one of the last areas of classical control theory to be encompassed by modern control theory. In this article,
positive-position feedback, negative-imaginary systems, and related
control methodologies are brought together with the underlying systems
theory.

 Table \ref{notation} summarizes   notation used
 in this article, while Table \ref{acronyms} lists acronyms.

\section{Flexible Structure Modeling}
In modeling an undamped flexible structure with a
single  actuator and a single sensor, modal analysis
 can be applied to  the relevant partial differential equation
 \cite{meirovitch86},
 leading to the
transfer function
\begin{equation}
\label{siso_tf_general}
P(s) = \sum_{i=1}^\infty \frac{\phi_i(s)}{s^2
  +\omega_i^2},
\end{equation}
where each  $\omega_i > 0$ is a  modal frequency, the functions
$\phi_i(s)$ are first-order polynomials, and $\omega_i\neq \omega_j$
for $i\neq j$.
In the case of a structure with a
 force actuator and  colocated velocity sensor, the form of the numerator of (\ref{siso_tf_general}) is
determined by
the passive nature of the flexible structure. Since the product $u(t)y(t)$ of the force
actuator input $u(t)$ and the velocity sensor output $y(t)$ represents
the power  provided by the actuator to the
structure at time
$t$, conservation of energy
 implies
\begin{equation}
\label{siso_dissipation}
E(t) \leq E(0) + \int_0^t u(\tau)y(\tau) d\tau
\end{equation}
for all $t \geq 0$, where $E(t)\geq 0$ represents the energy stored in the
system at time $t$, and $E(0)$ represents the initial energy stored in the
system.   In this case, the variables $u(t)$ and $y(t)$ are {\em dual}. The passivity condition (\ref{siso_dissipation})
  implies that the  transfer
 function $P(s)$ is positive real according to the following
 definition  \cite{DV75}.

\begin{definition}(\cite{BLME07,AV73})
\label{D1}
The square transfer function matrix $P(s)$ is {\em positive real} if the
following conditions are satisfied:
\begin{enumerate}
\item
All of the poles of $P(s)$ lie in CLHP.
\item
For all $s$ in  ORHP,
\begin{equation}
\label{pr0}
P(s)+P^*(s) \geq 0.
\end{equation}
\end{enumerate}
\end{definition}

If $P(s)$ is positive real, then it follows that \cite{BLME07,AV73}
\begin{equation}
\label{siso_PR}
P(\jw) + P^*(\jw) \geq 0
\end{equation}
for all $\omega\in\mathbb{R}$ such that $s=\jw$ is not a pole of
$P(s)$. If $P(s)$ is a SISO transfer function, then, for all
$\omega\in\mathbb{R}$ such that $s=\jw$ is neither a pole nor a zero of
$P(s)$, (\ref{siso_PR}) is equivalent to the phase condition
$\angle
P(\jw) \in [-\frac{\pi}{2},\frac{\pi}{2}]$.

\begin{definition} (\cite{BLME07})
\label{D1a}
The nonzero square transfer function matrix $P(s)$ is {\em strictly positive real} if
there exists  $\varepsilon > 0$ such that the transfer function matrix $P(s-\varepsilon)$ is positive real.
\end{definition}

If $P(s)$ is strictly positive real, then it follows \cite{BLME07} that
all of the poles of $P(s)$ lie in OLHP and
\begin{equation}
\label{spr0}
P(\jw)+P^*(\jw) > 0
\end{equation}
for all $\omega\in\mathbb{R}$.
If P(s) is a SISO transfer function, then (\ref{spr0}) holds for all $\omega\in \mathbb{R}$ such that $s=\jw$ is neither a pole nor a zero of
$P(s)$ if and only if the phase condition $\angle
P(\jw) \in (-\frac{\pi}{2},\frac{\pi}{2})$ holds for all $\omega\in \mathbb{R}$ such that $s=\jw$ is neither a pole nor a zero of
$P(s)$.
%If $P(s)$ is a SISO transfer function,  then for all $\omega\in \mathbb{R}$ such that $s=\jw$ is neither a pole nor a zero of
% $P(s)$, (\ref{spr0})
%  is equivalent to the phase condition
% $\angle
% P(\jw) \in (-\frac{\pi}{2},\frac{\pi}{2})$.

Now consider the positive-real transfer function from force actuation to velocity measurement given by
\begin{equation}
\label{siso_tf_PR}
P(s) = \sum_{i=1}^\infty \frac{\psi_i^2 s}{s^2 + \kappa_i s
  +\omega_i^2},
\end{equation}
where,  for all $i$, $\kappa_i > 0$ is the viscous damping constant associated with the
$i$th mode and $\omega_i > 0$. The transfer function
(\ref{siso_tf_PR}) satisfies the phase condition
$\angle P(\jw) \in (-\frac{\pi}{2},\frac{\pi}{2})$ for all
$\omega > 0$. However, \eqref{siso_tf_PR} has a zero at the origin, and thus (\ref{spr0})
is not satisfied for $\omega =0$. Hence, \eqref{siso_tf_PR} is not strictly
positive real.

Now consider a lightly damped flexible structure with
$m$ colocated sensor and actuator pairs.  Let $u_1(t),
\ldots, u_m(t)$ denote the force actuator input signals, and let $y_1(t),
\ldots, y_m(t)$ denote the corresponding velocity sensor output signals. The actuator and
sensor in the $i$th colocated actuator and sensor pair are {\em dual} when the product
$u_i(t)y_i(t)$ is equal to the power provided to the structure
by the $i$th
actuator  at time $t$. Now, we let
\[
Y(s) = P(s) U(s),
\]
where
\[
U(s) = \left[\begin{array}{c} U_1(s)  \\ \vdots \\ U_m(s)
\end{array}\right],~~
Y(s) = \left[\begin{array}{c} Y_1(s)  \\ \vdots \\ Y_m(s)
\end{array}\right].
\]
For $i=1,2,\ldots,m$, $U_i(s)$ and $Y_i(s)$ are the Laplace transforms of  $u_i(t)$ and $y_i(t)$, respectively,
and $P(s)$ is the transfer function matrix of the
system.
Then $P(s)$ is positive real and has the form
\begin{equation}
\label{mimo_tf_PR}
P(s) = \sum_{i=1}^\infty \frac{s}{s^2 + \kappa_i s
  +\omega_i^2}\psi_i \psi_i^{\transpose} ,
\end{equation}
where,  for all $i$, $\kappa_i > 0$, $\omega_i > 0$, and $\psi_i$ is an $m \times 1$ vector.
A review of positive-real and passivity theory is given in ``What Is Positive-real and Passivity Theory?''

\section{Negative-Imaginary Systems}

Mechanical structures
with colocated force
actuators and position sensors  do not yield
positive-real systems because the product of force and position is not equal to the power provided by the actuator \cite{LP06a,LP06}.
In this case, the transfer function matrix from the force actuator
inputs  $u_1(t),
\ldots, u_m(t)$ to
the position sensor outputs $y_1(t),
\ldots, y_m(t)$ is of the form
\begin{equation}
\label{mimo_tf_NI}
P(s) = \sum_{i=1}^\infty \frac{1}{s^2 + \kappa_i s
  +\omega_i^2}\psi_i \psi_i^{\transpose},
\end{equation}
where, for all $i$, $\kappa_i > 0$, $\omega_i > 0$, and $\psi_i$ is an
$m \times 1$ vector.
Therefore, the {\em Hermitian-imaginary part} \[\Im_{\text{H}}[P(\jw)] = -\frac{1}{2}\jmath(P(\jw) - P^{*}(\jw))\] of the frequency response function matrix
$P(\jw)$
satisfies
\begin{equation}
\Im_{\text{H}}[P(\jw)]
= -\omega\sum_{i=1}^\infty \frac{
  \kappa_i}{\left(\omega_i^2-\omega^2\right)^2 + \omega^2
  \kappa_i^2}\psi_i \psi_i^{\transpose}
\label{mimo_ni}
\leq 0
\end{equation}
for all $\omega\geq 0.$
That is, the frequency response
function matrix for the transfer function matrix
(\ref{mimo_tf_NI}) has negative-semidefinite Hermitian-imaginary part for all
$\omega \geq  0$.  We thus refer to the
transfer function matrix $P(s)$ in (\ref{mimo_tf_NI})
as negative imaginary. A formal definition follows.

\begin{definition}
\label{D3}
The square transfer function matrix $P(s)$ is {\em negative-imaginary
  (NI)} if the
following conditions are satisfied:
\begin{enumerate}
\item
\label{D3.1}
All of the poles of $P(s)$ lie in OLHP.
\item For all $\omega \geq 0$,
\label{D3.2}
\begin{equation}
\label{NI}
\jmath[P(\jw)-P^{*}(\jw)] \geq 0.
\end{equation}
\end{enumerate}
A linear time-invariant system is NI if its transfer function matrix
is NI.
\end{definition}

A discussion of
negative-imaginary  transfer functions arising in
electrical circuits is given in ``Applications to Electrical Circuits.''

In the SISO case, a transfer function is negative imaginary if and only if it
has no poles in CRHP and its
phase  is
in the interval $[-\pi,~0]$ at all frequencies that do not correspond to imaginary-axis poles or zeros. Consequently, the positive-frequency Nyquist
plot of a SISO negative-imaginary transfer function  lies below
the real axis as shown in  Figure~\ref{one-NIFR-system}. Hence, a
negative-imaginary
transfer function can be viewed as a  positive-real transfer function rotated clockwise by $90$
deg in the Nyquist plane.
% The transfer function
% \eqref{mimo_tf_NI} cannot be positive real because its relative
% degree  exceeds unity.

Velocity sensors can be used in negative-velocity feedback control, whereas  position
sensors can be used in positive-position feedback
\cite{PRE02,GC85,FC90,BMP1a,AFM07,HM01,PMS02}.  Indeed,
positive-real theory and negative-imaginary theory
\cite{LP06a,LP06} achieve internal stability by a process referred to as {\em phase
  stabilization}, since instability is avoided by ensuring appropriate restrictions on the phase of
the corresponding open-loop systems. {\em Gain stabilization}, which is based on
the small-gain theorem \cite{BLME07}, guarantees robust
stability when the magnitude of the  loop transfer function is less than unity at
all frequencies. As in positive-real
analysis, robust stability of negative-imaginary systems \cite{LP06a,LP06} does not
require the magnitude of the loop transfer function to be less than
unity  at
all frequencies to guarantee
stability.
In order to present results on the robust stability of
positive-position feedback and related
control schemes, we now define MIMO
 strictly
negative-imaginary systems.

\begin{definition}
\label{D4}
The square transfer function matrix $P(s)$ is {\em strictly
  negative-imaginary (SNI)} if the
following conditions are satisfied:
\begin{enumerate}
\item
\label{D4.1}
All of the poles of $P(s)$ lie in  OLHP.
\item For all $\omega>0$,
\label{D4.2}
\begin{equation}
\label{SNI}
\jmath[P(\jw)-P^{*}(\jw)] > 0.
\end{equation}
\end{enumerate}
A linear time-invariant system is SNI if its transfer function matrix
is SNI.
\end{definition}

\begin{lemma}
\label{L1}
If the
$m\times m$ transfer function matrix $P_1(s)$
is   NI, respectively, SNI, and the
$m\times m$ transfer function matrix $P_2(s)$ is NI, then
\begin{equation}
\label{sum_ni}
P(s) = P_1(s)+P_2(s)
\end{equation}
is   NI, respectively, SNI.
\end{lemma}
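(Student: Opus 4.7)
The plan is to verify the two conditions of Definition~\ref{D3} (and Definition~\ref{D4} for the strict case) directly from the corresponding properties of $P_1(s)$ and $P_2(s)$. Since NI and SNI are defined pointwise in $\omega$ plus a pole-location condition, both requirements split linearly over the sum, so the proof should reduce to two short observations.

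First I would handle the pole condition. The poles of $P(s) = P_1(s) + P_2(s)$ are contained in the union of the poles of $P_1(s)$ and $P_2(s)$ (cancellations can only remove poles, never create new ones). Since by hypothesis both $P_1(s)$ and $P_2(s)$ have all their poles in OLHP, the same holds for $P(s)$. This establishes condition~\ref{D3.1} of Definition~\ref{D3}, and also condition~\ref{D4.1} of Definition~\ref{D4} when $P_1$ is SNI.

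Next I would address the Hermitian-imaginary part condition. Using linearity and the fact that $(A+B)^* = A^* + B^*$, I would write
\begin{equation*}
\jmath[P(\jw)-P^{*}(\jw)] \;=\; \jmath[P_1(\jw)-P_1^{*}(\jw)] \;+\; \jmath[P_2(\jw)-P_2^{*}(\jw)].
\end{equation*}
In the NI/NI case, both summands are Hermitian positive semidefinite for all $\omega \ge 0$ by Definition~\ref{D3}, and the sum of positive semidefinite Hermitian matrices is positive semidefinite, giving (\ref{NI}). In the SNI/NI case, for $\omega > 0$ the first summand is positive definite by Definition~\ref{D4} and the second is positive semidefinite by Definition~\ref{D3}; the sum of a positive definite matrix and a positive semidefinite matrix is positive definite, yielding (\ref{SNI}).

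There is no real obstacle here: NI and SNI are convex cone type conditions, and both pieces of the definition (pole location and sign of the Hermitian-imaginary part) are preserved under addition essentially by inspection. The only subtlety worth noting is that one must verify no new poles appear after summation and that strict positive definiteness is preserved when adding a positive semidefinite matrix to a positive definite one, both of which are standard linear-algebra facts.
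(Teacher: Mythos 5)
Your proposal is correct and follows the same route as the paper, which simply states that the result ``follows directly from Definition~\ref{D3} and Definition~\ref{D4}''; you have merely written out the details (pole containment in the union, linearity of the Hermitian-imaginary part, and closure of positive semidefinite/definite cones under the relevant additions) that the paper leaves implicit.
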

\begin{proof}
This result follows directly from  Definition \ref{D3} and Definition \ref{D4}.
\end{proof}

%%%AL{}
\begin{theorem}
\label{LL2}
Consider the NI transfer function matrices
  $M(s)$ and $N(s)$, and suppose that the positive-feedback interconnection
shown in Figure \ref{feedback-interconnection}
is internally stable. Then the corresponding $2m\times 2m$ closed-loop transfer function matrix
\begin{equation}
\label{T}
T(s) = \left[\begin{array}{cc} M(s)\left(I-N(s)M(s)\right)^{-1} &
M(s)\left(I-N(s)M(s)\right)^{-1}N(s) \\
N(s)\left(I-M(s)N(s)\right)^{-1}M(s) &
N(s)\left(I-M(s)N(s)\right)^{-1}
\end{array}\right]
\end{equation}
 is NI. Furthermore, if, in addition, either
$M(s)$ or $N(s)$ is  SNI, then (\ref{T}) is SNI.
\end{theorem}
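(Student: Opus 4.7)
The plan is to verify the two conditions of Definition~\ref{D3} for $T(s)$. Condition~\ref{D3.1}, that every pole of $T(s)$ lies in OLHP, is immediate from internal stability of the positive-feedback interconnection, since the four blocks displayed in \eqref{T} are precisely the closed-loop transfer functions rendered stable by that hypothesis. Because the NI definition already rules out imaginary-axis poles of $M(s)$ and $N(s)$, the frequency responses $M(\jw)$ and $N(\jw)$ are well defined at every $\omega \ge 0$, so the Hermitian-imaginary-part inequality can be checked pointwise in frequency.

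For that inequality I would set up a short loop-energy identity. Fix $\omega \ge 0$, abbreviate $M = M(\jw)$, $N = N(\jw)$, and pick an arbitrary $r \in \mathbb{C}^{2m}$ partitioned compatibly as $r_1, r_2 \in \mathbb{C}^{m}$. Set $y = T(\jw) r$ and introduce the internal loop signals $e_1 = r_1 + y_2$ and $e_2 = r_2 + y_1$; the positive-feedback structure then gives $y_1 = M e_1$ and $y_2 = N e_2$, and direct substitution yields
\[
r^{*} T(\jw) r \;=\; r_1^{*} y_1 + r_2^{*} y_2 \;=\; e_1^{*} M e_1 + e_2^{*} N e_2 \;-\; \bigl(y_2^{*} y_1 + y_1^{*} y_2\bigr).
\]
The bracketed term equals $2\,\mathrm{Re}(y_2^{*} y_1)$ and is therefore real, so
\[
\mathrm{Im}\bigl(r^{*} T(\jw) r\bigr) \;=\; \mathrm{Im}\bigl(e_1^{*} M e_1\bigr) + \mathrm{Im}\bigl(e_2^{*} N e_2\bigr).
\]
Negative-imaginariness of $M$ and $N$ makes both summands non-positive, and since $r^{*}\bigl(\jmath[T(\jw)-T^{*}(\jw)]\bigr)r = -2\,\mathrm{Im}(r^{*} T(\jw) r)$, this yields $\jmath[T(\jw) - T^{*}(\jw)] \ge 0$, establishing that $T(s)$ is NI.

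For the SNI refinement, assume without loss of generality that $M(s)$ is SNI and take $\omega > 0$. The identity above shows that saturation of the inequality forces $\mathrm{Im}(e_1^{*} M e_1) = 0$ and $\mathrm{Im}(e_2^{*} N e_2) = 0$; strict negative-imaginariness of $M$ then gives $e_1 = 0$, hence $y_1 = 0$, and the loop equations collapse to $e_2 = r_2$, $y_2 = N r_2$, together with the algebraic constraint $r_1 = -N r_2$. The main obstacle I expect is closing the argument from this residual data: concluding $r = 0$ requires combining the internal-stability-induced invertibility of $I - N(\jw) M(\jw)$ with a finer analysis of how strictness propagates around the closed loop so as to rule out a nontrivial $r_2$ lying in $\ker \mathrm{Im}_{\mathrm{H}}[N(\jw)]$. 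This strict-inequality propagation through the algebraic loop, rather than the clean blockwise summation that dispatches the NI case, is where the real work of the proof lies.
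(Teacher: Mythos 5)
Your proof of the NI claim is correct and is essentially the paper's own argument: the paper introduces the identical loop signals ($u_1=w_1+y_2$, $u_2=w_2+y_1$ in its notation), cancels the same real cross term $y_1^*y_2+y_2^*y_1$, and reduces $\jmath\, r^*[T(\jw)-T^*(\jw)]r$ to the sum of the two quadratic forms $\jmath\,u_1^*[M(\jw)-M^*(\jw)]u_1$ and $\jmath\,u_2^*[N(\jw)-N^*(\jw)]u_2$. There is nothing to add there.

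The SNI clause is where the comparison becomes interesting. The paper dismisses it with ``the SNI result follows using similar arguments,'' whereas you have correctly located the point where the similar arguments do \emph{not} close: strictness of $M$ alone forces $e_1=0$, leaving the residual configuration $y_1=0$, $e_2=r_2$, $r_1=-N(\jw)r_2$ with $r_2$ constrained only by $\jmath\,r_2^*[N(\jw)-N^*(\jw)]r_2=0$. The obstruction you flag is not a technicality to be overcome by ``a finer analysis''---it is fatal to the statement as written. If $N$ is NI but not SNI, there exist $\omega_0>0$ and a nonzero $v\in\mathbb{C}^m$ with $v^*\,\jmath[N(\jmath\omega_0)-N^*(\jmath\omega_0)]\,v=0$; taking $r_2=v$ and $r_1=-N(\jmath\omega_0)v$ yields, by well-posedness of the loop (invertibility of $I-N(\jmath\omega_0)M(\jmath\omega_0)$), exactly the residual configuration above, so this nonzero $r$ annihilates $\jmath[T(\jmath\omega_0)-T^*(\jmath\omega_0)]$ and $T$ is not SNI. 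The degenerate instance $N\equiv0$, for which $T=\operatorname{diag}(M,0)$, already exhibits this; the paper's own pair $M(s)=1/(s+1)$ with $N(s)$ equal to the non-SNI transfer function \eqref{second_order} gives a nondegenerate instance at $\omega_0=1$. So you should not expect to finish the strict half by propagating strictness around the loop: the ``furthermore'' clause requires a stronger hypothesis (for instance, that both $M$ and $N$ are SNI), and the paper's one-line proof sketch glosses over precisely the gap you identified.
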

\begin{proof}
The internal stability of the positive feedback interconnection shown in Figure \ref{feedback-interconnection} implies that $T(s)$ is asymptotically stable. Given  $\omega \geq 0$, $w_1\in\Complex^{m}$, and $w_2\in\Complex^{m}$, define
\[
\matrixB{y_1 \\ y_2} =T(\jw)\matrixB{w_1 \\ w_2}.
\]
Letting $u_1=w_1+y_2$ and $u_2=w_2+y_1$, it follows from the positive feedback
interconnection that $y_1=M(\jw)u_1$ and $y_2=N(\jw)u_2$. Furthermore, using the fact that $M(s)$ and $N(s)$ are NI, it follows that
\begin{eqnarray*}
\lefteqn{\jmath \matrixB{w_1^* & w_2^*}[T(\jw) - T^*(\jw)]\matrixB{w_1
    \\ w_2}} \\
&=& \jmath \matrixB{w_1^* & w_2^*}\matrixB{y_1 \\ y_2}
- \jmath \matrixB{y_1^* & y_2^*}\matrixB{w_1 \\ w_2} \\
&=& \jmath \matrixB{u_1^*-y_2^* & u_2^*-y_1^*}\matrixB{y_1 \\ y_2}
- \jmath \matrixB{y_1^* & y_2^*}\matrixB{u_1-y_2 \\ u_2-y_1} \\
&=& \jmath\left(u_1^*y_1+u_2^*y_2\right)
-\jmath\left(y_1^*u_1+y_2^*u_2\right) \\
&=& \jmath\left(u_1^*M(\jw)u_1-u_1^*M(\jw)^*u_1 \right)
+\jmath\left(u_2^*N(\jw)u_2-u_2^*N(\jw)^*u_2 \right)\\
&\geq &0.
\end{eqnarray*}
 Since $\omega \geq 0$,
$w_1\in\Complex^{m}$, and $w_2\in\Complex^{m}$ are arbitrary, it
follows that
\[
\jmath[T(\jw)-T(\jw)^*] \geq 0
\]
for all $\omega \geq 0$ and hence, $T(s)$ is NI. The SNI result follows using similar arguments.
\end{proof}

\begin{theorem}
\label{LL3}
Consider the $2m\times 2m$  NI transfer function matrices
\[
M(s) = \left[\begin{array}{cc}M_{11}(s) & M_{12}(s) \\
M_{21}(s) & M_{22}(s)
\end{array}\right],~~
N(s) = \left[\begin{array}{cc}N_{11}(s) & N_{12}(s) \\
N_{21}(s) & N_{22}(s)
\end{array}\right],
\]
and suppose that the feedback
  interconnection shown in Figure \ref{star-product}
is internally stable.
Then the corresponding $2m\times 2m$ closed-loop transfer function matrix
\begin{eqnarray}
\label{TR}
T(s) &=& \left[\begin{array}{c}
M_{11}(2)+M_{12}(s)\left(I-N_{11}(s)M_{22}(s)\right)^{-1}N_{11}(s)M_{21}(s)  \\
N_{21}(s)\left(I-M_{22}(s)N_{11}(s)\right)^{-1}M_{21}(s)
\end{array}\right. \nonumber \\
&& \hspace{4cm}\left.\begin{array}{c}
M_{12}(s)\left(I-N_{11}(s)M_{22}(s)\right)^{-1}N_{12}(s)\\
N_{22}(s)+N_{21}(s)\left(I-M_{22}(s)N_{11}(s)\right)^{-1}M_{22}(s)N_{12}(s)
\end{array}\right]\nonumber \\
\end{eqnarray}
 is NI. Furthermore, if in addition, either
$M(s)$ or $N(s)$ is  SNI, then (\ref{TR}) is SNI.
\end{theorem}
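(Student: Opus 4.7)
The plan is to mimic the proof of Theorem~\ref{LL2}, unfolding the star-product interconnection into the two sub-loops that pass through $M(s)$ and $N(s)$ and then summing the two NI inequalities after the internal cross terms cancel. Internal stability of the interconnection already places all poles of $T(s)$ in OLHP, so only item~\ref{D3.2} of Definition~\ref{D3} needs work.

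Fix $\omega \geq 0$ and $w = [w_1^{\transpose}\;w_2^{\transpose}]^{\transpose} \in \Complex^{2m}$, and set $y = T(\jw)w = [y_1^{\transpose}\;y_2^{\transpose}]^{\transpose}$. Following Figure~\ref{star-product}, I would introduce the internal bridge signals $v,z \in \Complex^{m}$ determined by the loop equations $z = M_{21}(\jw)w_1 + M_{22}(\jw)v$ and $v = N_{11}(\jw)z + N_{12}(\jw)w_2$, uniquely solvable because internal stability makes $I - M_{22}N_{11}$ and $I - N_{11}M_{22}$ invertible on the $\jw$-axis. Set $u_M := [w_1^{\transpose}\;v^{\transpose}]^{\transpose}$, $y_M := [y_1^{\transpose}\;z^{\transpose}]^{\transpose}$, $u_N := [z^{\transpose}\;w_2^{\transpose}]^{\transpose}$, $y_N := [v^{\transpose}\;y_2^{\transpose}]^{\transpose}$. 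A direct check, using the identity $N_{11}(I-M_{22}N_{11})^{-1} = (I-N_{11}M_{22})^{-1}N_{11}$, verifies that $y_M = M(\jw)u_M$, $y_N = N(\jw)u_N$, and that these signals are consistent with~(\ref{TR}).

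The key step is the cross-term cancellation. Computing
\[
u_M^{*}y_M + u_N^{*}y_N = w_1^{*}y_1 + v^{*}z + z^{*}v + w_2^{*}y_2 = w^{*}y + (v^{*}z + z^{*}v),
\]
and noting that the symmetric expression for $y_M^{*}u_M + y_N^{*}u_N$ carries the same real cross term $z^{*}v + v^{*}z$, subtraction and multiplication by $\jmath$ eliminates the bridge contribution and yields
\[
\jmath\, w^{*}[T(\jw) - T^{*}(\jw)]w \;=\; u_M^{*}\,\jmath[M(\jw)-M^{*}(\jw)]\,u_M \;+\; u_N^{*}\,\jmath[N(\jw) - N^{*}(\jw)]\,u_N.
\]
Each term on the right is nonnegative by Definition~\ref{D3} applied to $M$ and $N$, so $T$ is NI.

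For the SNI conclusion, assume without loss of generality that $M$ is SNI, fix $\omega>0$, and take $w \neq 0$. The easy case is $w_1 \neq 0$: then $u_M \neq 0$ and the SNI inequality for $M$ alone delivers strict positivity while the $N$-term stays nonnegative. The delicate case is $w_1 = 0$, $w_2 \neq 0$, in which $u_M$ may vanish; tracing the constraints, $u_M = 0$ forces $z = 0$ and hence $v = N_{12}(\jw)w_2 = 0$, so strict positivity must be extracted from the $N$-contribution aided by the structural restrictions that the degenerate subcase imposes. Showing that this residual subcase is incompatible with $w \neq 0$ under internal stability of the star product is, I expect, the main obstacle of the proof; the symmetric situation in which $N$ is SNI follows by interchanging the roles of $M$ and $N$ in the same identity.
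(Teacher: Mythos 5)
Your argument for the NI part is essentially the paper's own proof. The paper introduces internal signals $u_1,u_2$ (your $v,z$) so that $\left[\begin{smallmatrix}y_1\\u_2\end{smallmatrix}\right]=M(\jw)\left[\begin{smallmatrix}w_1\\u_1\end{smallmatrix}\right]$ and $\left[\begin{smallmatrix}u_1\\y_2\end{smallmatrix}\right]=N(\jw)\left[\begin{smallmatrix}u_2\\w_2\end{smallmatrix}\right]$, and then performs exactly the cancellation you describe: the bridge terms $v^{*}z+z^{*}v$ appear symmetrically in $u_M^{*}y_M+u_N^{*}y_N$ and in its conjugate, so they drop out and $\jmath\,w^{*}[T(\jw)-T^{*}(\jw)]w$ reduces to the sum of the two NI quadratic forms. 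Your push-through identity and the invertibility of $I-M_{22}N_{11}$ on the imaginary axis are the same bookkeeping the paper does implicitly. That portion is complete and correct.

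On the SNI clause, the obstacle you flag in the case $w_1=0$, $u_M=0$ is genuine, and the paper does not resolve it: its entire treatment is the sentence ``The SNI result follows using similar arguments.'' In fact the arguments are \emph{not} similar, and the degenerate subcase cannot be ruled out by internal stability. Take $m=1$, $M(s)=\frac{0.5}{s+1}I_2$ (SNI) and $N(s)=\operatorname{diag}\bigl(\frac{1}{s+1},\,0\bigr)$, which is NI with $N_{12}=N_{21}=N_{22}=0$; the loop $I-M_{22}N_{11}$ has all its zeros in OLHP, so the star product is internally stable, yet $T_{22}=N_{22}=0$ and hence $\jmath[T(\jw)-T^{*}(\jw)]$ has a vanishing diagonal block and cannot be positive definite. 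So with $w_1=0$ and $w_2\neq 0$ one gets exactly the situation you describe ($u_M=0$, and the $N$-term only nonnegative), and the ``furthermore'' conclusion fails as stated. Your instinct that this is ``the main obstacle'' is right; what you should conclude is not that it remains to be shown incompatible with $w\neq 0$, but that it is compatible, so the strict claim needs additional hypotheses (for instance, strictness of the relevant off-diagonal blocks, or SNI-ness localized to the blocks that actually appear in $T$). Relative to the paper you have proved exactly as much as it does, and you have additionally located the precise point where its omitted argument breaks down.
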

\begin{proof}
The internal stability of the feedback interconnection shown in Figure \ref{star-product}
implies that $T(s)$ is asymptotically stable. Given  $\omega \geq
0$, $w_1\in\Complex^{m}$, and $w_2\in\Complex^{m}$, define
\[
\matrixB{y_1 \\ y_2} =T(\jw)\matrixB{w_1 \\ w_2}.
\]
Letting
\[
\matrixB{u_1 \\ u_2} = \left[\begin{array}{cc}
\left(I-N_{11}(s)M_{22}(s)\right)^{-1}N_{11}(s)M_{21}(s) &
\left(I-N_{11}(s)M_{22}(s)\right)^{-1}N_{12}(s) \\
\left(I-M_{22}(s)N_{11}(s)\right)^{-1}M_{21}(s) &
\left(I-M_{22}(s)N_{11}(s)\right)^{-1}M_{22}(s)N_{12}(s)
\end{array}\right]
\matrixB{w_1 \\ w_2},
\]
it follows from the feedback
interconnection shown in Figure \ref{star-product} that
\begin{equation}
\label{MN}
\matrixB{y_1 \\ u_2} = M(\jw) \matrixB{w_1 \\ u_1},~~
\matrixB{u_1 \\ y_2} = N(\jw) \matrixB{u_2 \\ w_2}.
\end{equation}
Furthermore, using (\ref{MN}) and the fact that $M(s)$ and $N(s)$ are NI, it follows that
\begin{eqnarray*}
\lefteqn{\jmath \matrixB{w_1^* & w_2^*}[T(\jw) - T^*(\jw)]\matrixB{w_1
    \\ w_2}} \\
&=& \jmath \matrixB{w_1^* & w_2^*}\matrixB{y_1 \\ y_2}
- \jmath \matrixB{y_1^* & y_2^*}\matrixB{w_1 \\ w_2} \\
&=& \jmath \left(\matrixB{w_1^* & u_1^*}\matrixB{y_1 \\ u_2}
-  \matrixB{y_1^* & u_2^*}\matrixB{w_1 \\ u_1}\right)
+ \jmath \left(\matrixB{u_2^* & w_2^*}\matrixB{u_1 \\ y_2}
-  \matrixB{u_1^* & y_2^*}\matrixB{u_2 \\ w_2}\right)\\
&=& \jmath \left(\matrixB{w_1^* & u_1^*}M(\jw)\matrixB{w_1 \\ u_1}
-  \matrixB{w_1^* & u_1^*}M(\jw)^*\matrixB{w_1 \\ u_1}\right) \\
&&+ \jmath \left(\matrixB{u_2^* & w_2^*}N(\jw)\matrixB{u_2 \\ w_2}
-  \matrixB{u_2^* & w_2^*}N(\jw)^*\matrixB{u_2 \\ w_2}\right)\\
&\geq &0.
\end{eqnarray*}
Since $\omega \geq 0$,
$w_1\in\Complex^{m}$, and $w_2\in\Complex^{m}$ are arbitrary, it
follows that
\[
\jmath[T(\jw)-T(\jw)^*] \geq 0
\]
for all $\omega \geq 0$ and hence, $T(s)$ is NI.
The SNI result follows using similar arguments.
\end{proof}

Underlying the stability properties of positive-position feedback
is the observation that the transfer
function matrix  of a lightly damped flexible
structure with colocated
force actuators and position sensors is NI.
Indeed, note that all poles of
\[
P_i(s) =  \frac{1}{s^2 + \kappa_i s
  +\omega_i^2}\psi_i \psi_i^{\transpose}
\]
in the transfer function matrix (\ref{mimo_tf_NI}) lie in  OLHP. Also,
for all $\omega \geq 0$,
\[
\jmath[P_i(\jw)-P_i^{*}(\jw)] =\Im_{\text{H}}(P_i(\jw))= \frac{2 \kappa_i \omega }{\left(\omega_i^2 - \omega^2\right)^2 + \kappa_i^2\omega^2}\psi_i \psi_i^{\transpose}
\geq 0.
\]
Hence, it follows from Definition \ref{D3} that each $P_i(s)$ is NI. Therefore, it
follows from  Lemma \ref{L1} that the transfer function matrix (\ref{mimo_tf_NI}) is NI.

\subsection{The Negative-Imaginary Lemma}
The following theorem, which is proved in \cite{LP06,XiPL1a}, provides a
state-space characterization of NI systems in terms of a pair of
linear matrix inequalities (LMIs). This
result is analogous to the positive-real lemma \cite{AV73,BLME07}, and
thus is  referred to as the {\em negative-imaginary lemma}.

\begin{theorem}
\label{NIL}
Consider the minimal state-space system
\begin{eqnarray}
\label{nilss1}
&\dot x = Ax + Bu,&  \\
\label{nilss2}
&y = Cx + Du,&
\end{eqnarray}
where $A \in \mathbb{R}^{n\times n}$, $B \in \mathbb{R}^{n\times m}$,
$C \in \mathbb{R}^{m\times n}$, and $D \in \mathbb{R}^{m\times
  m}$.   The
system (\ref{nilss1}), (\ref{nilss2}) is NI if and only if
$A$ has no eigenvalues on the imaginary axis,
$D$ is symmetric, and
  there exists a  positive-definite  matrix $Y \in \Real^{n\times n}$
  satisfying
\begin{equation}
\label{nil1}
AY+YA^{\transpose}\leq 0,
\end{equation}
\begin{equation}
\label{nil2}
B+AYC^{\transpose}=0.
\end{equation}
\end{theorem}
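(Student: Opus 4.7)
The plan is to prove the two implications separately, with the necessity direction being the more involved one. Throughout I would use the frequency responses $P(\jw) = D + C(\jw I - A)^{-1}B$ and $P^*(\jw) = D^{\transpose} + B^{\transpose}(-\jw I - A^{\transpose})^{-1}C^{\transpose}$.

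\textbf{Sufficiency.} Given $Y > 0$ satisfying the stated conditions, I would first note that the Lyapunov inequality $AY + YA^{\transpose} \leq 0$ with $Y > 0$ places $\sigma(A)$ in the closed left half-plane, and the hypothesis that $A$ has no imaginary-axis eigenvalues then makes $A$ Hurwitz, so $P$ has all poles in OLHP. For the NI inequality, I would substitute $B = -AYC^{\transpose}$ into $C(\jw I - A)^{-1}B$, apply the identity $(\jw I - A)^{-1}A = -I + \jw(\jw I - A)^{-1}$ together with its conjugate-transpose analogue, and use $D = D^{\transpose}$ to cancel the constant $CYC^{\transpose}$ terms generated on each side. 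What remains collapses to
\[
\jmath\bigl[P(\jw) - P^*(\jw)\bigr] = -\omega\, C(\jw I - A)^{-1}\bigl(AY + YA^{\transpose}\bigr)(-\jw I - A^{\transpose})^{-1}C^{\transpose},
\]
which is positive semidefinite for every $\omega \geq 0$ because $-(AY + YA^{\transpose}) \geq 0$.

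\textbf{Necessity.} Assuming $P(s)$ is NI, condition (1) of Definition \ref{D3} gives $A$ Hurwitz. Letting $\omega \to \infty$ in the NI inequality collapses it to $\jmath(D - D^{\transpose}) \geq 0$; since $D$ is real, the skew matrix $D - D^{\transpose}$ yields a Hermitian quantity with spectrum symmetric about zero, forcing $D = D^{\transpose}$. To construct $Y$, I would introduce the auxiliary transfer function $F(s) := s[P(s) - D]$; the identity $s(sI-A)^{-1} = I + A(sI-A)^{-1}$ gives $F$ the realization $(A, B, CA, CB)$, and a short calculation using $D = D^{\transpose}$ shows $F(\jw) + F^*(\jw) = \omega\, \jmath[P(\jw) - P^*(\jw)] \geq 0$ for $\omega \geq 0$, hence (by real-rationality) for all real $\omega$. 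Combined with $A$ Hurwitz this says $F$ is positive real, so the classical positive-real lemma applied to $F$ supplies $X > 0$ satisfying the KYP LMI for $(A, B, CA, CB)$. Setting $Y := X^{-1}$ and conjugating the $(1,1)$ block by $Y$ would immediately give $AY + YA^{\transpose} \leq 0$, and the off-diagonal block (pinned by $F(0) = 0$ and minimality) would yield the equality $B + AYC^{\transpose} = 0$.

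\textbf{Main obstacle.} The hard part will be extracting the \emph{equality} $B + AYC^{\transpose} = 0$ rather than a mere inequality from the generic KYP LMI. I expect to exploit that $F(s)$ vanishes at $s = 0$ (since $P - D$ is strictly proper), together with minimality of $(A, B, C, D)$, to eliminate any slack in the off-diagonal KYP block; alternatively, an Anderson-style spectral factorization of the Popov function $F + F^*$ produces KYP \emph{equations} whose solution $X$ inverts to a $Y$ delivering the required equality directly. Reconciling sign conventions between the KYP LMI for $(A, B, CA, CB)$ and the NI conditions stated for $(A, B, C, D)$ is where the bookkeeping must be done most carefully.
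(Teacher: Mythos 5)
The paper itself does not prove Theorem \ref{NIL}; it defers entirely to \cite{LP06,XiPL1a}, so your attempt must be judged against the proofs in those references, whose architecture you have in fact reproduced: a direct algebraic verification for sufficiency and a reduction to the positive-real lemma for necessity. Your sufficiency direction is complete and correct — the identity $\jmath[P(\jw)-P^*(\jw)] = -\omega\, C(\jw I-A)^{-1}(AY+YA^{\transpose})(-\jw I-A^{\transpose})^{-1}C^{\transpose}$ checks out, and combined with the Lyapunov/no-imaginary-eigenvalue argument for Hurwitz stability it delivers both conditions of Definition \ref{D3}. The identification of $D=D^{\transpose}$ from the $\omega\to\infty$ limit is also sound.

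The necessity direction, however, contains the gap you flag, and as written it is a genuine one that your chosen auxiliary function cannot close. With $F(s)=s[P(s)-D]$ the realization $(A,B,CA,CB)$ has feedthrough $CB$, which is generally nonzero, so the Lur'e/KYP equations for a positive-real $F$ take the form $XA+A^{\transpose}X=-L^{\transpose}L$, $XB=A^{\transpose}C^{\transpose}-L^{\transpose}W$, $W^{\transpose}W=CB+B^{\transpose}C^{\transpose}$; the cross term $L^{\transpose}W$ survives, and neither $F(0)=0$ nor minimality eliminates it, so you cannot extract the exact equality you need. The fix used in the literature is to divide by $s$ rather than multiply: since $A$ is invertible (no imaginary-axis eigenvalues), set $N(s):=-\tfrac{1}{s}[P(s)-P(0)]=-C(sI-A)^{-1}A^{-1}B$, which satisfies $N(\jw)+N^*(\jw)=\tfrac{1}{\omega}\,\jmath[P(\jw)-P^*(\jw)]\ge 0$ for $\omega>0$ (hence for all real $\omega$ by continuity and conjugate symmetry) and is therefore positive real with minimal realization $(A,A^{-1}B,-C,0)$. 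Because the feedthrough is now zero, $W^{\transpose}W=0$ forces $W=0$, the Lur'e equations yield the exact equality $XA^{-1}B=-C^{\transpose}$, and $Y=X^{-1}$ then gives $AY+YA^{\transpose}\le 0$ and $B+AYC^{\transpose}=0$ with no slack left to argue away. So your plan is salvageable essentially in the manner you anticipated, but only after replacing $s[P(s)-D]$ by an auxiliary function whose feedthrough vanishes; with your choice the ``main obstacle'' is not a bookkeeping issue but an obstruction.
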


In Theorem \ref{NIL} it follows from the Lyapunov inequality~\eqref{nil1}, the positive definiteness of $Y$, and the assumption that $A$
has no eigenvalues on the imaginary axis that the matrix $A$ is asymptotically stable \cite[Corollary 11.8.1]{BER05}.

\begin{corollary}
\label{SNIL0}
Consider the minimal state-space system  (\ref{nilss1}),
(\ref{nilss2}),
where $A \in \mathbb{R}^{n\times n}$, $B \in \mathbb{R}^{n\times m}$,
$C \in \mathbb{R}^{m\times n}$, and $D \in \mathbb{R}^{m\times
  m}$.   The
system (\ref{nilss1}), (\ref{nilss2}) is SNI if and only if the
following conditions are satisfied:
\begin{enumerate}
\item
 $A$ has no eigenvalues on the imaginary axis.
\item
$D$ is symmetric.
\item
There exists a  positive-definite matrix $Y \in \mathbb{R}^{n\times
  n}$ such that
(\ref{nil1}) and (\ref{nil2}) are satisfied.
\item
The transfer function matrix $M(s) = C(sI-A)^{-1}B+D$ is such that
$M(s)-M^{\transpose}(-s)$ has no transmission
zeros on the imaginary axis except possibly at $s=0$.
%in the set $\jmath\Real/{0}$%$\{s \in \Complex: s = \jw, \omega \in (0,\infty)\}$.
\end{enumerate}
\end{corollary}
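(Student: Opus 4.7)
The plan is to leverage Theorem~\ref{NIL} to obtain conditions (1)--(3) and then use condition~(4) to upgrade the NI (weak) inequality to the SNI (strict) inequality on $(0,\infty)$. A repeated ingredient is that for a real rational $M(s)$ one has $M^{*}(\jw) = M^{\transpose}(-\jw)$, so evaluating the rational matrix $M(s)-M^{\transpose}(-s)$ on the positive imaginary axis agrees, up to the factor $\jmath$, with the Hermitian-imaginary part appearing in Definitions~\ref{D3} and~\ref{D4}. This is what makes a transmission-zero statement about $M(s)-M^{\transpose}(-s)$ equivalent to a rank statement about $\jmath[M(\jw)-M^{*}(\jw)]$.

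For necessity, SNI trivially implies NI, so Theorem~\ref{NIL} already supplies (1)--(3). For (4), fix any $\omega_{0}>0$. The SNI hypothesis forces $\jmath[M(\jw_0)-M^{*}(\jw_0)]$ to be strictly positive definite and hence nonsingular; equivalently, $M(s)-M^{\transpose}(-s)$ has full rank at $s=\jw_0$, so no transmission zero of that rational matrix can lie at $\jw_0$.

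For sufficiency, Theorem~\ref{NIL} applied to (1)--(3) yields NI, i.e., $\jmath[M(\jw)-M^{*}(\jw)]\geq 0$ for all $\omega\geq 0$. Suppose the strict inequality fails at some $\omega_0>0$: there exists a nonzero $v\in\Complex^{m}$ with $v^{*}\jmath[M(\jw_0)-M^{*}(\jw_0)]v=0$. Because the Hermitian matrix $\jmath[M(\jw_0)-M^{*}(\jw_0)]$ is positive semidefinite, the vanishing of the quadratic form at $v$ forces the matrix itself to annihilate $v$, so $[M(\jw_0)-M^{\transpose}(-\jw_0)]v=0$; that is, $M(s)-M^{\transpose}(-s)$ exhibits a pointwise rank drop at the nonzero imaginary point $\jw_0$. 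Combined with the fact that $M(s)-M^{\transpose}(-s)$ has full normal rank $m$, this rank drop is a genuine transmission zero at $\jw_0\neq 0$, contradicting (4) and thereby establishing strict positivity for all $\omega>0$.

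The main obstacle I anticipate is precisely the last step of sufficiency, namely promoting a pointwise rank deficiency at $\jw_0$ to a transmission zero of the para-Hermitian pencil $M(s)-M^{\transpose}(-s)$; this implicitly needs the rational matrix to have full normal rank $m$. I would argue that this normal-rank property follows from the minimality of $(A,B,C,D)$ together with the LMIs (\ref{nil1})--(\ref{nil2}): a drop in normal rank would mean that $\jmath[M(\jw)-M^{*}(\jw)]$ has a nontrivial common null vector on an open subset of $(0,\infty)$, and tracing this back through (\ref{nil1}) with $Y>0$ (and the no-imaginary-axis-eigenvalue condition from~(1)) contradicts minimality. A clean Smith--McMillan or, alternatively, a Kalman--Yakubovich--Popov style spectral-factorization argument should formalize this step.
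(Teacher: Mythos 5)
Your argument follows the same route as the paper's proof: Theorem~\ref{NIL} supplies the equivalence of conditions 1)--3) with the NI property, and the link between condition 4) and strictness of $\jmath[M(\jw)-M^{*}(\jw)]$ for $\omega>0$ is made by observing that, for a positive-semidefinite Hermitian matrix, vanishing of the quadratic form at a nonzero vector forces that vector into the kernel, hence a singularity of $M(\jw)-M^{\transpose}(-\jw)$. The paper states this more tersely (it passes directly from $\jmath u^{*}[M(\jw)-M^{*}(\jw)]u=0$ to ``transmission zero at $s=\jw$''), but the content of your first three paragraphs is identical to the paper's argument.

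The step that fails is the repair you propose in your last paragraph. The claim that minimality of $(A,B,C,D)$ together with (\ref{nil1})--(\ref{nil2}) forces $M(s)-M^{\transpose}(-s)$ to have full normal rank $m$ is false. Take $m=2$, $n=1$, $A=-1$, $B=[\,1~~0\,]$, $C=[\,1~~0\,]^{\transpose}$, $D=0$, so that $M(s)=\mathrm{diag}\bigl(\tfrac{1}{s+1},\,0\bigr)$. This realization is minimal, and $Y=1$ satisfies (\ref{nil1}) and (\ref{nil2}), yet $M(s)-M^{\transpose}(-s)=\mathrm{diag}\bigl(\tfrac{2s}{s^{2}-1},\,0\bigr)$ has normal rank $1$. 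So the normal-rank lemma you appeal to cannot be proved, and the Smith--McMillan route does not close the argument. What actually closes it is the convention in force here (implicit in the paper and in the converse half of its proof): a ``transmission zero'' of the square para-Hermitian matrix $M(s)-M^{\transpose}(-s)$ at a non-pole point $s_{0}=\jw_0$ means that the matrix is singular there, i.e., its rank drops below the full size $m$, not below the normal rank. Under that reading your pointwise rank drop at $\jw_0$ is already a transmission zero and your first three paragraphs constitute a complete proof. Note that this reading is not optional: in the example above, conditions 1)--4) hold under the normal-rank interpretation while the system is not SNI, so the corollary itself is only correct with the pointwise-singularity interpretation of condition 4).
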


\begin{proof}
Assuming conditions 1)
- 3),  it follows from Theorem \ref{NIL} that
(\ref{nilss1}), (\ref{nilss2})  is NI. Now suppose
that  (\ref{nilss1}), (\ref{nilss2}) is not SNI. Then using Definition \ref{D3} and
Definition \ref{D4}, it follows that
there exist  $\omega > 0$ and a nonzero vector
$u \in \Complex^m$ such that
\[
\jmath u^*[M(\jw)-M^{*}(\jw)]u = 0.
\]
Thus, $M(s)-M^{\transpose}(-s)$ has a transmission zero at $s=\jw$, which
contradicts condition 4). Hence  (\ref{nilss1}),
(\ref{nilss2}) is SNI.

Conversely, suppose that  (\ref{nilss1}), (\ref{nilss2})
is SNI. Then,  (\ref{nilss1}), (\ref{nilss2}) is NI and
Theorem \ref{NIL} implies that conditions 1)
- 3) are satisfied. Also, it follows from Definition \ref{D4} that
% \begin{equation}
% \label{SNIineq}
\[
\jmath [M(\jw)-M^{*}(\jw)] > 0
%\end{equation}
\]
for all $\omega >0$. Therefore
$M(s)-M^{\transpose}(-s)$ has no transmission zeros on the imaginary
axis except possibly at $s=0$,
and thus condition 4) is satisfied.
\end{proof}

To illustrate Theorem \ref{NIL} and Corollary
\ref{SNIL0}, consider  the system
\begin{eqnarray}
\label{sys2}
&\dot x = -x + u,&
\\
 \label{sys2a}
 &y = x&
\end{eqnarray}
with transfer function
\begin{equation}
\label{first_order}
M(s) = \frac{1}{s+1}.
\end{equation}
The positive-frequency Nyquist plot of (\ref{first_order}) given in Figure \ref{F4.1}
shows that  (\ref{sys2}), (\ref{sys2a})  is both SNI and strictly positive real.

Applying Theorem \ref{NIL}  with  $A=-1$, $B=1$, $C=1$,
and $D=0$,  condition \eqref{nil2} can be satisfied by choosing
$Y = -\frac{B}{AC} = 1 > 0$. Then,  $AY+YA^{\transpose} = -2 < 0$. It now
follows from Theorem \ref{NIL} that  (\ref{sys2}), (\ref{sys2a}) is
NI. Also, note that
\[
M(s)-M^{\transpose}(-s) =  \frac{1}{s+1} - \frac{1}{-s+1} = \frac{2s}{s^2-1}
\]
 has no zeros on the imaginary axis except at $s=0$. It then follows
 from Corollary \ref{SNIL0} that
 (\ref{sys2}), (\ref{sys2a}) is SNI.

Now consider the transfer function
\begin{equation}
\label{second_order}
M(s) =
\frac{2s^2+s+1}{(s^2+2s+5)(s+1)(2s+1)}.
\end{equation}
The positive-frequency Nyquist plot of $M(\jw)$ in Figure
\ref{F4.2} shows that $\Im[M(\jw)] \leq 0$ for all $\omega \geq 0$, and thus $M(s)$ is  NI. However,
Figure \ref{F4.2} shows that there exists $\omega > 0$ such that $\Im[M(\jw)] = 0$, and thus $M(s)$ is not SNI.
Now consider the minimal realization  (\ref{nilss1}),
(\ref{nilss2}) of (\ref{second_order}) given by
% \begin{eqnarray}
% \label{sys3}
% &\dot x = Ax + Bu, &  \\
% \label{sys3a}
% &y =  Cx &
% \end{eqnarray}
\begin{eqnarray}
\label{sys3b}
&A = \left[\begin{array}{cccc}
   -3.5 &  -8.5  & -8.5 &  -2.5 \\
    1   &   0    &  0   &   0\\
    0   &   1    &  0   &   0\\
    0   &   0    &  1   &   0
\end{array}\right],~~
B=\left[\begin{array}{c}
    2.5\\
   -3\\
    1\\
    0
\end{array}\right],& \\
\label{sys3c}
&C= \left[\begin{array}{cccc}
    0&     0&     0&     1
\end{array}\right],~~D=0.&
\end{eqnarray}
In order to construct a matrix $Y$ satisfying the assumptions of
Theorem \ref{NIL}, note that the assumptions of
Theorem \ref{NIL} are equivalent to  the
requirement that the matrix $A$ have no eigenvalues on the imaginary
axis  and
\begin{eqnarray*}
&\left[\begin{array}{cc}
AY+YA^{\transpose} & B+AYC^{\transpose} \\
B^{\transpose} + C Y A^{\transpose} & 0
 \end{array}\right] \leq 0,&\\
& Y > 0.&
\end{eqnarray*}
Using  LMI software \cite{BCPS09}, we obtain
\[
Y =  \left[\begin{array}{cccc}
  100.375 &  -36.75 &    2.5 &    3 \\
  -36.75 &   18.5 &   -3 &   -1 \\
    2.5 &   -3 &    1 &        0 \\
    3 &   -1  &       0 &    0.2
\end{array}\right] > 0.
\]
Therefore Theorem \ref{NIL} implies that
(\ref{nilss1}),
(\ref{nilss2}), (\ref{sys3b}), (\ref{sys3c}) is NI.

Now to determine whether  (\ref{nilss1}),
(\ref{nilss2}),
(\ref{sys3b}), (\ref{sys3c}) is SNI, note that
\begin{eqnarray*}
&M(s)-M^{\transpose}(-s) =  \frac{2s^2+s+1}{(s^2+2s+5)(s+1)(2s+1)} -
 \frac{2s^2-s+1}{(s^2-2s+5)(-s+1)(-2s+1)}&\\
 &= \frac{-24(s^2+1)^2}{4s^8
   + 19s^6 +71 s^4 -119s^2 +25},&
\end{eqnarray*}
 has a double  zero at $s = \jmath$. Consequently,
(\ref{nilss1}),
(\ref{nilss2}), (\ref{sys3b}), (\ref{sys3c}) is not
SNI.

\subsection{Two Strict Negative-Imaginary Lemmas}
The following theorems  give  sufficient conditions for the SNI
property.

\begin{theorem}
\label{SNIL1}
 Consider the minimal state-space system  (\ref{nilss1}),
(\ref{nilss2}),
where $A \in \mathbb{R}^{n\times n}$, $B \in \mathbb{R}^{n\times m}$,
$C \in \mathbb{R}^{m\times n}$, and $D \in \mathbb{R}^{m\times
  m}$. Suppose the following
conditions are satisfied:
\begin{enumerate}
\item
All eigenvalues of $A$ are in  OLHP.
\item
 $D$ is symmetric.
\item
  There exist a  positive-definite matrix $\tilde Y \in
  \mathbb{R}^{n\times n}$ and positive numbers
  $\alpha, \varepsilon$  such that $-\alpha$ is not an
  eigenvalue of $A$ and the
  matrices
\begin{equation}
\label{SNIL1mat}
\tilde A = \left[\begin{array}{cc} A & 0\\0 & -\alpha I \end{array}\right], ~~
\tilde B =  \left[\begin{array}{c} B \\ \varepsilon I \end{array}\right], ~~
\tilde C = \left[\begin{array}{cc} C & - I  \end{array}\right]
\end{equation}
satisfy
\[
\tilde A \tilde Y+\tilde Y \tilde A^{\transpose}\leq 0
\]
and
\[
\tilde B+\tilde A \tilde Y \tilde C^{\transpose}=0.
\]
\end{enumerate}
Then  (\ref{nilss1}),
(\ref{nilss2})
is SNI.
\end{theorem}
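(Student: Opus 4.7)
The plan is to apply the negative-imaginary lemma (Theorem \ref{NIL}) to an \emph{augmented} realization whose transfer function differs from $M(s) = C(sI-A)^{-1}B + D$ by a strictly negative-imaginary scalar-times-identity term, and then to invoke Lemma \ref{L1} to conclude that $M(s)$ itself is SNI. Specifically, choosing the augmented feedthrough to be $\tilde D = D$ (symmetric by condition 2), a direct block computation using the block-diagonal structure of $\tilde A$ yields
\begin{equation*}
\tilde M(s) \;=\; \tilde C (sI - \tilde A)^{-1}\tilde B + \tilde D \;=\; C(sI-A)^{-1}B + D - \frac{\varepsilon}{s+\alpha}\,I \;=\; M(s) - \frac{\varepsilon}{s+\alpha}\,I.
\end{equation*}

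First I would verify that Theorem \ref{NIL} applies to the augmented realization $(\tilde A, \tilde B, \tilde C, \tilde D)$, so that $\tilde M(s)$ is NI. The spectrum of $\tilde A$ consists of the eigenvalues of $A$ (in OLHP by condition 1) together with $-\alpha < 0$, so no eigenvalue lies on the imaginary axis; the LMIs (\ref{nil1}) and (\ref{nil2}) for $\tilde M$ are exactly what condition 3 supplies with $Y$ replaced by $\tilde Y$; and $\tilde D = D$ is symmetric. The remaining hypothesis is \emph{minimality} of $(\tilde A, \tilde B, \tilde C)$, which I would check by a PBH test: at every eigenvalue of $\tilde A$, one uses the minimality of $(A, B, C)$, the invertibility of $\varepsilon I$ and $-I$, and the crucial assumption that $-\alpha$ is not an eigenvalue of $A$, which decouples the spectra of the two diagonal blocks so that their modes cannot cancel.

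Next I would observe that the first-order scalar transfer function $\varepsilon/(s+\alpha)$ has $\operatorname{Im}\!\left[\varepsilon/(\jw+\alpha)\right] = -\varepsilon\omega/(\alpha^2+\omega^2) < 0$ for every $\omega > 0$, so $(\varepsilon/(s+\alpha))I$ satisfies Definition \ref{D4} and is SNI. Finally, from the decomposition
\begin{equation*}
M(s) \;=\; \frac{\varepsilon}{s+\alpha}\,I \;+\; \tilde M(s),
\end{equation*}
Lemma \ref{L1} — applied with the SNI summand $P_1 = (\varepsilon/(s+\alpha))I$ and the NI summand $P_2 = \tilde M(s)$ — immediately gives that $M(s)$ is SNI.

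The main obstacle I anticipate is the minimality verification for the augmented realization, since Theorem \ref{NIL} is stated only for minimal systems and the theorem statement does not assert minimality of $(\tilde A, \tilde B, \tilde C)$ outright; the hypothesis that $-\alpha$ not be an eigenvalue of $A$ is precisely what makes the PBH check go through. A secondary subtlety is that the theorem leaves the augmented feedthrough $\tilde D$ implicit — choosing $\tilde D = D$ is what exposes the clean additive split above and lets Lemma \ref{L1} finish the argument in one line.
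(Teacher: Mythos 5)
Your proposal is correct and follows essentially the same route as the paper: form the augmented realization $(\tilde A,\tilde B,\tilde C, D)$ realizing $M(s)-\frac{\varepsilon}{s+\alpha}I$, apply Theorem \ref{NIL} to conclude it is NI, note that $\frac{\varepsilon}{s+\alpha}I$ is SNI (the paper packages this as Lemma \ref{L2}; your direct check of Definition \ref{D4} is an equivalent shortcut), and finish with Lemma \ref{L1}. Your explicit attention to minimality of the augmented realization via the PBH test and the role of the hypothesis that $-\alpha$ is not an eigenvalue of $A$ is a point the paper passes over silently, but it does not change the argument.
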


The proof of Theorem \ref{SNIL1} requires the following lemma.

\begin{lemma}
\label{L2}
 Let $\varepsilon > 0$ and $\alpha > 0$. Then the transfer function matrix
\begin{equation}
\label{TFM}
M(s) = \frac{\varepsilon}{s+\alpha}I
\end{equation}
is SNI.
\end{lemma}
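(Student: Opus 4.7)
The plan is to verify the two conditions of Definition \ref{D4} directly, since $M(s)$ is a diagonal scalar transfer function multiplied by the identity, and the computation is explicit.

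First I would check condition \ref{D4.1}: the only pole of $M(s)$ is $s=-\alpha$, and since $\alpha > 0$ this pole lies in OLHP. Then I would compute the frequency response:
\[
M(\jw) = \frac{\varepsilon}{\jw+\alpha}I = \frac{\varepsilon(\alpha - \jw)}{\alpha^2+\omega^2}I,
\]
so that
\[
M(\jw) - M^{*}(\jw) = \frac{-2\jw\varepsilon}{\alpha^{2}+\omega^{2}}I.
\]
Multiplying by $\jmath$ yields
\[
\jmath\bigl[M(\jw)-M^{*}(\jw)\bigr] = \frac{2\omega\varepsilon}{\alpha^{2}+\omega^{2}}I,
\]
which is strictly positive definite for every $\omega > 0$ because $\varepsilon>0$ and $\omega > 0$. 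This verifies condition \ref{D4.2}.

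Since both conditions of Definition \ref{D4} are established, $M(s)$ is SNI. There is no genuine obstacle here; the only minor care is to keep the sign conventions straight when converting $1/(\jw+\alpha)$ into its real and imaginary parts. Alternatively, one could route the argument through Corollary \ref{SNIL0} using the minimal realization $A=-\alpha I$, $B=\varepsilon I$, $C=I$, $D=0$ with $Y = (\varepsilon/\alpha)I$, but the direct frequency-domain computation above is the most economical path.
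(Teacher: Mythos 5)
Your proof is correct, but it takes a genuinely different route from the paper's. You verify Definition \ref{D4} directly in the frequency domain: the single pole $s=-\alpha$ lies in OLHP, and the explicit computation
\[
\jmath\bigl[M(\jw)-M^{*}(\jw)\bigr] = \frac{2\omega\varepsilon}{\alpha^{2}+\omega^{2}}I > 0 \quad\text{for all } \omega>0
\]
is exactly right, so both conditions of the definition hold. The paper instead routes the argument through its state-space machinery: it takes the minimal realization $A=-\alpha I$, $B=\varepsilon I$, $C=I$, $D=0$, exhibits $Y=\frac{\varepsilon}{\alpha}I>0$ satisfying the LMI conditions of Theorem \ref{NIL} to conclude the system is NI, and then invokes Corollary \ref{SNIL0} after checking that $M(s)-M^{\transpose}(-s)=\frac{2\varepsilon s}{s^{2}-\alpha^{2}}I$ has no imaginary-axis transmission zeros except at $s=0$ --- precisely the alternative you sketch in your closing sentence. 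Your direct computation is the more elementary and economical argument for this particular lemma, since everything is scalar times identity; what the paper's longer route buys is a worked illustration of Theorem \ref{NIL} and Corollary \ref{SNIL0} in the simplest nontrivial setting, which is pedagogically consistent with how those results are used for the less transparent examples elsewhere in the article. Either proof is complete and acceptable.
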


\begin{proof}
Let the transfer function matrix \eqref{TFM} have minimal state-space realization
\begin{eqnarray}
\label{sysalph}
&\dot x = -\alpha x + \varepsilon u,& \\
 \label{sysalpha}
 &y =  x.
\end{eqnarray}
Theorem \ref{NIL} and Corollary \ref{SNIL0} can be applied to (\ref{sysalph}), (\ref{sysalpha}) with $A = -\alpha I$, $B = \varepsilon I$, $C = I$, and $D =
0$. Setting $Y = \frac{\varepsilon}{ \alpha} I > 0$, it follows that $AY+YA^{\transpose} = -2 \varepsilon I < 0$ and
$B+AYC^{\transpose} = \varepsilon I-\frac{\alpha \varepsilon}{\alpha} I = 0$. Hence,
Theorem \ref{NIL} implies that  (\ref{sysalph}), (\ref{sysalpha}) is NI. Furthermore,
%for  (\ref{sysalph}), the transfer function matrix
\begin{eqnarray*}
%\label{TF1a}
M(s)-M^{\transpose}(-s) &=& \frac{\varepsilon}{s+\alpha}I -
\frac{\varepsilon}{-s+\alpha}I  \nonumber \\
%\label{TF1}
&=& \frac{2 \varepsilon s}{s^2 - \alpha^2}I.
\end{eqnarray*}
Thus, $M(s)-M^{\transpose}(-s)$ has no  purely imaginary
transmission zeros
except possibly at $s=0$. Hence, it follows from Corollary \ref{SNIL0} that
(\ref{sysalph}), (\ref{sysalpha}) is SNI.
\end{proof}

{\em Proof of Theorem \ref{SNIL1}:}
Let $\hat{M}(s)$ be the transfer function matrix of
(\ref{nilss1}),
(\ref{nilss2}).
%\eqref{sys4}, \eqref{sys4a}.
Since $s=-\alpha$ is not a pole of $\hat{M}(s)$, a minimal state-space realization of the transfer function matrix
$M_1(s) = \hat{M}(s) - \frac{\varepsilon}{s+\alpha}I$ is
\begin{eqnarray*}
\label{sysauga}
&\dot x_1 = A x_1 + B u,&\\
\label{sysaugb}
&\dot x_2 = -\alpha x_2 + \varepsilon u,&\\
\label{sysaugc}
&y = Cx_1- x_2 + Du.&
\end{eqnarray*}
Let
\[
%\label{aug1}
\tilde A = \left[\begin{array}{cc} A & 0\\0 & -\alpha I \end{array}\right], ~~
\tilde B =  \left[\begin{array}{c} B \\ \varepsilon I \end{array}\right], ~~
\tilde C = \left[\begin{array}{cc} C & - I  \end{array}\right], ~~
\tilde D = D.
\]
Assuming conditions 1) - 3), it follows from  Theorem \ref{NIL} that $M_1(s)$ is
NI.
Then  Lemma \ref{L1} and Lemma \ref{L2} imply
that  $\hat{M}(s) =
M_1(s) + \frac{\varepsilon}{s+\alpha}I$  is SNI.
\hfill \QED

To illustrate Theorem~\ref{SNIL1}, we consider lightly damped flexible
structures with force actuators and position sensors.
 An \emph{integral resonant controller}
 \cite{AFM07,BMP1a} has the form
\begin{equation}
\label{irc}
C(s) = [sI+\Gamma \Phi]^{-1}\Gamma,
\end{equation}
 where
$\Gamma$ and $\Phi$ are positive-definite matrices.
In the SISO case \cite{AFM07}, integral resonant controllers are
derived by first adding a direct feedthough to a resonant system with a
colocated force actuator and position sensor. Then, application of
integral feedback leads to damping of the resonant poles. Combining
the direct feedthrough with the integral feedback leads to a SISO
controller of the form (\ref{irc}). In \cite{BMP1a}, this class of
SISO controllers is generalized to MIMO controllers of the form
(\ref{irc}).

Integral resonant controllers provide integral force feedback
\cite{PRE02}, which refers
to control that uses position actuators, force sensors, and integral feedback. In \cite{PRE02}, integral
feedback is modified by  moving the integrator pole
slightly to the left in the complex plane to alleviate
actuator saturation.  A SISO controller transfer function
of the form (\ref{irc}) results from this process.

\begin{theorem}
\label{IRC_SNI}
The transfer function matrix (\ref{irc}) with $\Gamma$  positive definite and $\Phi$
positive definite is SNI.
\end{theorem}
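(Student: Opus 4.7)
The plan is to verify Definition \ref{D4} directly for the transfer function matrix $C(s) = (sI + \Gamma\Phi)^{-1}\Gamma$. The natural minimal state-space realization is $A = -\Gamma\Phi$, $B = \Gamma$, $C_{\textup{out}} = I$, $D = 0$, which is minimal since both $B = \Gamma$ and $C_{\textup{out}} = I$ are nonsingular. To dispose of condition 1 of Definition \ref{D4}, I observe that although $\Gamma\Phi$ is not symmetric in general, the similarity transformation by $\Gamma^{1/2}$ gives $\Gamma^{-1/2}(\Gamma\Phi)\Gamma^{1/2} = \Gamma^{1/2}\Phi\Gamma^{1/2}$, which is symmetric positive definite. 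Hence every eigenvalue of $\Gamma\Phi$ is real and strictly positive, so every pole of $C(s)$ lies in OLHP.

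For condition 2, I use the same similarity to rewrite
\[
C(s) = \Gamma^{1/2}(sI + \tilde\Phi)^{-1}\Gamma^{1/2}, \qquad \tilde\Phi \triangleq \Gamma^{1/2}\Phi\Gamma^{1/2} = \tilde\Phi^{\transpose} > 0.
\]
Since $\Gamma^{1/2}$ and $\tilde\Phi$ are real and symmetric, $C^{*}(\jw) = \Gamma^{1/2}(-\jw I + \tilde\Phi)^{-1}\Gamma^{1/2}$. Applying the resolvent identity $X^{-1} - Y^{-1} = X^{-1}(Y-X)Y^{-1}$ to $(\jw I + \tilde\Phi)^{-1} - (-\jw I + \tilde\Phi)^{-1}$ yields
\[
\jmath[C(\jw) - C^{*}(\jw)] = 2\omega\, \Gamma^{1/2}(\jw I + \tilde\Phi)^{-1}(-\jw I + \tilde\Phi)^{-1}\Gamma^{1/2}.
\]
Writing $Q = (\jw I + \tilde\Phi)^{-1}$ and noting $(-\jw I + \tilde\Phi)^{-1} = Q^{*}$, the middle factor is the Hermitian product $QQ^{*}$, which is positive definite because $Q$ is nonsingular. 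Sandwiching by the invertible $\Gamma^{1/2}$ preserves positive definiteness, so $\jmath[C(\jw) - C^{*}(\jw)] > 0$ for every $\omega > 0$, which is exactly condition 2 of Definition \ref{D4}.

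The only subtlety, and the point where a naive proof would stall, is that $\Gamma\Phi$ is typically non-symmetric, so neither the eigenvalue statement nor the resolvent calculation goes through by inspection. The symmetrization via $\Gamma^{1/2}$ is the one nontrivial idea; once it is in place, both the stability claim and the strict imaginary-part inequality are immediate. If one prefers an LMI-style argument, the same proof can be packaged through Corollary \ref{SNIL0}: choosing $Y = \Phi^{-1} > 0$ gives $B + AYC_{\textup{out}}^{\transpose} = \Gamma - \Gamma\Phi\Phi^{-1} = 0$ and $AY + YA^{\transpose} = -2\Gamma \le 0$, which verifies NI, and the resolvent computation above shows that $C(s) - C^{\transpose}(-s)$ has no imaginary-axis transmission zeros away from the origin, verifying SNI.
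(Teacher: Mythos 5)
Your proof is correct, and it takes a genuinely different route from the paper's. The paper proves this theorem as an application of Theorem~\ref{SNIL1}: it forms the augmented matrices $\tilde A$, $\tilde B$, $\tilde C$ of (\ref{SNIL1mat}) for the realization $A=-\Gamma\Phi$, $B=\Gamma$, $C=I$, exhibits a specific $\tilde Y$ built from $\Phi^{-1}$ plus an $\varepsilon$-dependent block, verifies $\tilde B+\tilde A\tilde Y\tilde C^{\transpose}=0$ by construction, and then invokes Finsler's theorem (Lemma~\ref{Finsler}) to show that $\tilde A\tilde Y+\tilde Y\tilde A^{\transpose}\leq 0$ for all sufficiently small $\varepsilon>0$; the SNI property then follows because $\hat M(s)$ is the sum of an NI system and the SNI term $\frac{\varepsilon}{s+\alpha}I$. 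You instead verify Definition~\ref{D4} directly: the similarity $\Gamma^{-1/2}(\Gamma\Phi)\Gamma^{1/2}=\Gamma^{1/2}\Phi\Gamma^{1/2}>0$ settles the pole locations, and the resolvent identity gives the explicit formula $\jmath[C(\jw)-C^{*}(\jw)]=2\omega\,\Gamma^{1/2}QQ^{*}\Gamma^{1/2}>0$ for $\omega>0$, with $Q=(\jw I+\tilde\Phi)^{-1}$; both computations check out, as does your alternative LMI packaging with $Y=\Phi^{-1}$, which gives $B+AYC^{\transpose}=0$ and $AY+YA^{\transpose}=-2\Gamma<0$. What each approach buys: the paper's proof is deliberately structured to illustrate the machinery of Theorem~\ref{SNIL1} and the role of Finsler's theorem, which is its stated pedagogical purpose; your proof is shorter, entirely elementary, avoids the existence-of-$\varepsilon$ bookkeeping, and yields a closed-form expression for the Hermitian-imaginary part that makes the strictness of the inequality transparent at every positive frequency.
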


{\em Proof:}
 Consider the
minimal state-space realization of (\ref{irc}) given by
\begin{eqnarray*}
&\dot x = -\Gamma \Phi x + \Gamma u,& \\
&y = x.&
\end{eqnarray*}
Let $\varepsilon > 0$ and $\alpha > 0$ be such that $-\alpha$ is not an
eigenvalue of $-\Gamma \Phi$. The corresponding matrices in (\ref{SNIL1mat}) are
\[
\tilde A = \left[\begin{array}{cc} -\Gamma \Phi & 0\\0 & -\alpha I
  \end{array}\right], ~~
\tilde B =  \left[\begin{array}{c} \Gamma \\ \varepsilon I \end{array}\right], ~~
\tilde C = \left[\begin{array}{cc} I & - I  \end{array}\right], ~~
\tilde D = 0.
\]
Also, let
\[
\tilde Y = \left[\begin{array}{cc}  \Phi^{-1} & 0\\0 & 0
  \end{array}\right]
+ \varepsilon \left[\begin{array}{cc}
\left(\frac{1}{\alpha}+1\right)I &  \left(\frac{1}{\alpha}+1\right)I\\
 \left(\frac{1}{\alpha}+1\right)I &  I
  \end{array}\right].
\]
Thus,
\begin{equation}
\label{tildaeq}
\tilde B+\tilde A \tilde Y
\tilde C^{\transpose}=0.
\end{equation}
Furthermore, note that
\[
\left[\begin{array}{cc}  \Phi^{-1} & 0\\0 & 0
  \end{array}\right]
\]
is positive semidefinite, and
\[
\left[\begin{array}{cc}
\left(\frac{1}{\alpha}+1\right)I &  \left(\frac{1}{\alpha}+1\right)I\\
 \left(\frac{1}{\alpha}+1\right)I &  I
  \end{array}\right]
\]
is positive definite. Hence,
$\tilde Y > 0$. %for all $\varepsilon > 0$.

Using the definitions of $\tilde A$ and $\tilde Y$, it follows
that
\[
\tilde A \tilde Y + \tilde Y \tilde A^{\transpose} =
 \left[\begin{array}{cc}  -\Gamma & 0\\0 & 0
  \end{array}\right] +
\varepsilon  \left[\begin{array}{cc}
-\left(\frac{1}{\alpha}+1\right)(\Gamma \Phi+\Phi \Gamma)
&-\left(\frac{1}{\alpha}+1\right)(\Gamma \Phi +\alpha I) \\
-\left(\frac{1}{\alpha}+1\right)(\Gamma \Phi +\alpha I)^{\transpose}
& -2  \left(\alpha+1\right) I
  \end{array}\right].
\]
Furthermore, the matrix
\[
\left[\begin{array}{cc}  \Gamma & 0\\0 & 0
  \end{array}\right]
\]
is positive semidefinite.
For every nonzero vector of the form $x = [0~~x_2^{\transpose}]^{\transpose}$,
% which is in the kernel
%space of $\left[\begin{array}{cc}  \Gamma & 0\\0 & 0
%  \end{array}\right]$,
we have
\[
\left[\begin{array}{c} 0\\x_2
  \end{array}\right]^{\transpose}\left[\begin{array}{cc}
\left(\frac{1}{\alpha}+1\right)(\Gamma \Phi+\Phi \Gamma)
&\left(\frac{1}{\alpha}+1\right)(\Gamma \Phi +\alpha I) \\
\left(\frac{1}{\alpha}+1\right)(\Gamma \Phi +\alpha I)^{\transpose}
& 2  \left(\alpha+1\right) I
  \end{array}\right]
\left[\begin{array}{c} 0\\x_2
  \end{array}\right] > 0.
\]
Hence, it follows using Finsler's theorem (see ``What Is Finsler's Theorem?''), Lemma \ref{Finsler}, that
there exists $\bar \tau > 0$ such
that
\[
\left[\begin{array}{cc}
\left(\frac{1}{\alpha}+1\right)(\Gamma \Phi+\Phi \Gamma)
&\left(\frac{1}{\alpha}+1\right)(\Gamma \Phi +\alpha I) \\
\left(\frac{1}{\alpha}+1\right)(\Gamma \Phi +\alpha I)^{\transpose}
& 2  \left(\alpha+1\right) I
  \end{array}\right]
+ \tau
\left[\begin{array}{cc}  \Gamma & 0\\0 & 0
  \end{array}\right] \geq 0
\]
for all $\tau \geq \bar \tau$. Let $\hat \varepsilon = \bar \tau ^{-1}> 0$. Consequently, choosing $\varepsilon \leq\hat \varepsilon$ implies
\begin{equation}
\label{tildelmi}
\tilde A \tilde Y + \tilde Y
\tilde A^{\transpose} \leq 0.
\end{equation}
 Combining (\ref{tildaeq}) and (\ref{tildelmi}), it follows that
conditions 1) - 3) of Theorem \ref{SNIL1} are satisfied, and therefore, the
transfer function (\ref{irc}) is SNI.
\hfill $\blacksquare$

% Since the relative
% degree of  the transfer function matrix  (\ref{TFM}) used in
% the proof of Theorem \ref{SNIL1}
% % $M(s) =
% % \frac{\varepsilon}{s+\alpha}I$
%  is  unity, Theorem \ref{SNIL1}
% can  be used to establish  the SNI property only for  systems of relative degree not more than
% unity.  To establish that a system of
% relative degree two is SNI, Theorem \ref{NIL} needs to be used together with a modified version of the above
% approach where the transfer function matrix used to augment the given system is of
% relative degree two.

\begin{theorem}
\label{SNIL2}
Consider the minimal state-space system (\ref{nilss1}),
(\ref{nilss2}),
% \begin{eqnarray}
% \label{sys5a}
% &\dot x = Ax + Bu,  &\\
% \label{sys5}
% &y = Cx + Du&
% \end{eqnarray}
where $A \in \mathbb{R}^{n\times n}$, $B \in \mathbb{R}^{n\times m}$,
$C \in \mathbb{R}^{m\times n}$, and $D \in \mathbb{R}^{m\times
  m}$. Suppose the following
conditions are satisfied:
\begin{enumerate}
\item
All of the eigenvalues of $A$ are in  OLHP.
\item
$D$ is symmetric.
\item
  There exist a positive-definite matrix $\tilde Y\in \mathbb{R}^{n\times
  n}$ and positive numbers
  $\varepsilon$, $\alpha$, and $\beta$ such that $\alpha \neq
  \beta$, $-\alpha,-\beta$ are not eigenvalues of $A$, and the
  matrices
\[
\tilde A = \left[\begin{array}{ccc} A & 0 & 0\\
0 & -\alpha I & 0\\
0 &0 & -\beta I
\end{array}\right], ~~
\tilde B =  \left[\begin{array}{c} B \\ \varepsilon I \\  \varepsilon I
\end{array}\right], ~~
\tilde C = \left[\begin{array}{ccc} C & - I & -I \end{array}\right]
\]
satisfy
\[
\tilde A \tilde Y+\tilde Y \tilde A^{\transpose}\leq 0
\]
and
\[
\tilde B+\tilde A \tilde Y \tilde C^{\transpose}=0.
\]
\end{enumerate}
Then
(\ref{nilss1}),
(\ref{nilss2})
%(\ref{sys5a}), (\ref{sys5})
is SNI.
\end{theorem}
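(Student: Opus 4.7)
\emph{Proof proposal for Theorem~\ref{SNIL2}.}
The plan is to mirror the proof of Theorem~\ref{SNIL1}, replacing the single auxiliary first-order term by two such terms, and then reduce the SNI conclusion to Lemma~\ref{L1} and Lemma~\ref{L2}.

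First I would let $\hat M(s)=C(sI-A)^{-1}B+D$ denote the transfer function matrix of \eqref{nilss1}--\eqref{nilss2}, and introduce
\[
M_1(s)\;=\;\hat M(s)\;-\;\frac{\varepsilon}{s+\alpha}I\;-\;\frac{\varepsilon}{s+\beta}I.
\]
Because $-\alpha$ and $-\beta$ are not eigenvalues of $A$ and $\alpha\neq\beta$, the obvious parallel interconnection of the three blocks is minimal and gives the state-space realization
\begin{eqnarray*}
&\dot x_1=Ax_1+Bu,\quad \dot x_2=-\alpha x_2+\varepsilon u,\quad \dot x_3=-\beta x_3+\varepsilon u,& \\
&y=Cx_1-x_2-x_3+Du&
\end{eqnarray*}
of $M_1(s)$, whose system matrices are precisely $\tilde A,\tilde B,\tilde C,\tilde D=D$ appearing in the hypothesis.

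Next, I would invoke the negative-imaginary lemma (Theorem~\ref{NIL}) applied to this augmented realization. Condition~3) supplies a positive-definite $\tilde Y$ satisfying the two LMIs of Theorem~\ref{NIL}, and by conditions~1) and~2) together with the structure of $\tilde A$ (whose additional eigenvalues are $-\alpha,-\beta$, both off the imaginary axis) and the symmetry of $\tilde D=D$, all hypotheses of Theorem~\ref{NIL} are met. Hence $M_1(s)$ is NI.

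Finally, Lemma~\ref{L2} asserts that each of $\frac{\varepsilon}{s+\alpha}I$ and $\frac{\varepsilon}{s+\beta}I$ is SNI. Writing
\[
\hat M(s)\;=\;\Bigl(M_1(s)+\tfrac{\varepsilon}{s+\alpha}I\Bigr)+\tfrac{\varepsilon}{s+\beta}I,
\]
Lemma~\ref{L1} applied to the sum of the NI term $M_1(s)$ and the SNI term $\frac{\varepsilon}{s+\alpha}I$ yields that the parenthesized sum is SNI; a second application of Lemma~\ref{L1}, using that SNI implies NI, then shows that $\hat M(s)$ is SNI. I expect the only delicate point to be checking that the augmented realization is indeed minimal so that Theorem~\ref{NIL} may be invoked verbatim; this is exactly where the conditions $\alpha\neq\beta$ and $-\alpha,-\beta\notin\sigma(A)$ are used, and apart from that the argument is a routine adaptation of the proof of Theorem~\ref{SNIL1}. \hfill$\blacksquare$
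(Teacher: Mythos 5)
Your proof is correct and follows the same high-level strategy as the paper's: subtract an asymptotically stable SNI perturbation from $\hat M(s)$, verify via Theorem~\ref{NIL} (using the hypothesized $\tilde Y$ and the given LMIs) that the remainder $M_1(s)$ is NI, and recover the SNI property of $\hat M(s)$ by Lemma~\ref{L1}. The difference is the choice of perturbation. The paper sets $M_1(s)=\hat M(s)-\frac{\varepsilon}{(s+\alpha)(s+\beta)}I$ and invokes Lemma~\ref{L3} together with Lemma~\ref{L4} to pass to the matrix case; you instead set $M_1(s)=\hat M(s)-\frac{\varepsilon}{s+\alpha}I-\frac{\varepsilon}{s+\beta}I$ and use Lemma~\ref{L2} twice together with two applications of Lemma~\ref{L1}. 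Your choice is in fact the one consistent with the data in condition~3): the realization $(\tilde A,\tilde B,\tilde C,D)$ with $\tilde B=[B^{\transpose}~~\varepsilon I~~\varepsilon I]^{\transpose}$ and $\tilde C=[C~~-I~~-I]$ has transfer function $\hat M(s)-\frac{\varepsilon}{s+\alpha}I-\frac{\varepsilon}{s+\beta}I$, not $\hat M(s)-\frac{\varepsilon}{(s+\alpha)(s+\beta)}I$, since $\frac{\varepsilon}{s+\alpha}+\frac{\varepsilon}{s+\beta}=\frac{\varepsilon(2s+\alpha+\beta)}{(s+\alpha)(s+\beta)}$. So your decomposition quietly repairs a mismatch in the paper between the stated $M_1(s)$ and the realization to which Theorem~\ref{NIL} is actually applied (the same mismatch occurs inside Lemma~\ref{L3}, whose displayed state-space model does not realize \eqref{TF2}); both candidate perturbations happen to be SNI, so both routes reach the conclusion, but yours does so without needing Lemma~\ref{L3} or Lemma~\ref{L4} at all. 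Your observation that minimality of the augmented realization is the one point requiring care, and that it follows from minimality of \eqref{nilss1}, \eqref{nilss2} together with $\alpha\neq\beta$ and $-\alpha,-\beta\notin\sigma(A)$ (disjoint spectra of the diagonal blocks plus the PBH test), is also correct; the paper asserts this minimality without comment.
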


The proof of Theorem \ref{SNIL1} requires the following lemmas.

\begin{lemma}
\label{L3}
Let $\varepsilon > 0$,  $\alpha > 0$, and $\beta > 0$. Then the  transfer function
\begin{equation}
\label{TF2}
M(s) = \frac{\varepsilon}{(s+\alpha)(s+\beta)}
\end{equation}
is SNI.
\end{lemma}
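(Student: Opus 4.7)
The plan is to apply Corollary \ref{SNIL0} directly, exactly as in the proof of Lemma \ref{L2}: I will construct a minimal state-space realization of $M(s)$, solve the LMI conditions of Theorem \ref{NIL} to establish the NI property, and then verify the extra imaginary-axis transmission-zero condition to upgrade to SNI.

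For the realization, I would take a controllable canonical form
\[
A = \left[\begin{array}{cc} -(\alpha+\beta) & -\alpha\beta \\ 1 & 0 \end{array}\right], \quad
B = \left[\begin{array}{c} 1 \\ 0 \end{array}\right], \quad
C = \left[\begin{array}{cc} 0 & \varepsilon \end{array}\right], \quad D = 0,
\]
which is easily checked to be minimal (the controllability and observability matrices have nonzero determinant) and to have $A$-eigenvalues $-\alpha,-\beta$ in OLHP. Imposing $B + AYC^{\transpose} = 0$ for a symmetric $Y$ forces $Y_{12}=0$ and $Y_{22}=1/(\varepsilon\alpha\beta)$; substituting into $AY+YA^{\transpose}$, the off-diagonal entry vanishes precisely when $Y_{11}=1/\varepsilon$, and this choice renders $AY+YA^{\transpose}$ diagonal with entries $-2(\alpha+\beta)/\varepsilon$ and $0$, hence negative semidefinite. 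The resulting $Y$ is positive definite, so Theorem \ref{NIL} yields the NI property.

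For the strictness, I would verify condition 4) of Corollary \ref{SNIL0} by the one-line computation
\[
M(s) - M^{\transpose}(-s) = \frac{\varepsilon}{(s+\alpha)(s+\beta)} - \frac{\varepsilon}{(-s+\alpha)(-s+\beta)} = \frac{-2\varepsilon(\alpha+\beta)s}{(s^2-\alpha^2)(s^2-\beta^2)},
\]
whose only zero on the imaginary axis is the simple zero at $s=0$. Corollary \ref{SNIL0} then concludes SNI.

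I do not anticipate any substantive obstacle, as the statement is geometrically transparent: $\angle M(\jw) = -\arctan(\omega/\alpha) - \arctan(\omega/\beta) \in (-\pi,0)$ for all $\omega>0$, which is exactly the SISO SNI phase condition. The only care required is picking a realization for which the two equality-type conditions in Theorem \ref{NIL} can be solved by inspection, which the controllable canonical form above delivers. I note that trying to bootstrap from Lemma \ref{L1} and Lemma \ref{L2} via the partial-fraction decomposition $M(s) = c/(s+\alpha) - c/(s+\beta)$ with $c = \varepsilon/(\beta-\alpha)$ fails, since one of the two summands carries a negative residue and is therefore not NI; the direct state-space argument is the cleanest route.
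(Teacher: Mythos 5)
Your proof is correct and follows essentially the same route as the paper's: write down an explicit minimal realization, exhibit a positive-definite $Y$ satisfying the two conditions of Theorem \ref{NIL} to get the NI property, and then check that $M(s)-M^{\transpose}(-s)$ has no imaginary-axis zeros other than $s=0$ so that Corollary \ref{SNIL0} upgrades this to SNI; your closing remark that the partial-fraction route through Lemma \ref{L1} and Lemma \ref{L2} fails because one residue is negative is also accurate. The only difference is the choice of realization, and here your version is in fact the more careful one: the controllable canonical form genuinely realizes $\varepsilon/((s+\alpha)(s+\beta))$ and remains minimal when $\alpha=\beta$, whereas the paper's choice $A=\diag(-\alpha,-\beta)$, $B=[\varepsilon~~\varepsilon]^{\transpose}$, $C=[1~~1]$ actually realizes $\frac{\varepsilon}{s+\alpha}+\frac{\varepsilon}{s+\beta}$ rather than the stated $M(s)$.
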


\begin{proof}
 The transfer function  \eqref{TF2} has a  minimal state-space realization
\begin{eqnarray}
\label{sysaug1a}
&\dot x
  = Ax +
Bu,&\\
\label{sysaug1b}
&y =C  x, &
\end{eqnarray}
where
\begin{eqnarray*}
A &=& \left[\begin{array}{cc} -\alpha  & 0 \\
0 &  -\beta \end{array}\right],~~ B=\left[\begin{array}{c}  \varepsilon
 \\  \varepsilon  \end{array}\right],~~
C= \left[\begin{array}{cc}   1   &   1 \end{array}\right].
\end{eqnarray*}
Applying Theorem \ref{NIL}
and Corollary \ref{SNIL0}  to
(\ref{sysaug1a}), (\ref{sysaug1b}),
and
setting
\[
Y = \left[\begin{array}{cc}
\frac{\varepsilon}{ \alpha} & 0 \\ 0 & \frac{\varepsilon}{ \beta}
\end{array}\right]
 > 0,
\]
 it follows that $AY+YA^{\transpose} = -2 \varepsilon I < 0$ and
$B+AYC^{\transpose} = 0$. Hence,  Theorem \ref{NIL} implies that
(\ref{sysaug1a}), (\ref{sysaug1b}) is NI. Furthermore, for  (\ref{sysaug1a}), (\ref{sysaug1b}),
$M(s)-M^{\transpose}(-s)$ is given by
\begin{eqnarray*}
M(s)-M^{\transpose}(-s) &=& \frac{\varepsilon}{(s+\alpha)(s+\beta)} -
\frac{\varepsilon}{(-s+\alpha)(-s+\beta)} \\
&=& \frac{2 \varepsilon (\alpha+\beta)s}{s^2(\alpha+\beta)^2 -
  (s^2+\alpha\beta)^2}.
\end{eqnarray*}
Since $M(s)-M^{\transpose}(-s)$ has no  imaginary transmission zeros
except at $s=0$, it follows from Corollary \ref{SNIL0} that
(\ref{sysaug1a}), (\ref{sysaug1b}) is SNI.
\end{proof}

\begin{lemma}
\label{L4}
If $M(s)$ is an SISO SNI transfer function, then the transfer function matrix $M(s) I$ is SNI.
\end{lemma}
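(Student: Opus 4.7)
The plan is to verify the two conditions of Definition \ref{D4} for $M(s)I$ one at a time, using only the fact that $M(s)$ itself is SISO SNI. The first condition is essentially free: the poles of the matrix-valued transfer function $M(s)I$ are exactly the poles of the scalar $M(s)$, all of which lie in OLHP by assumption. So the stability condition transfers without any work.

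For the Hermitian-imaginary condition, I would exploit the fact that scalars commute with the identity and that $I$ is real symmetric. Specifically, I would write
\[
\jmath\bigl[M(\jw)I - (M(\jw)I)^{*}\bigr] = \jmath\bigl[M(\jw) - \overline{M(\jw)}\bigr]\,I,
\]
and observe that, because $M(s)$ is SISO SNI, the scalar coefficient $\jmath[M(\jw) - \overline{M(\jw)}] = -2\,\Im[M(\jw)]$ is strictly positive for every $\omega > 0$. Multiplying a strictly positive real number by $I$ yields a positive definite matrix, which is precisely the SNI inequality required at frequency $\omega$. Since this holds for every $\omega > 0$, condition \ref{D4.2} of Definition \ref{D4} is verified for $M(s)I$.

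There is no real obstacle here; the statement is essentially a bookkeeping observation that SNI as a property passes through scalar-times-identity lifts because both the Hermitian-adjoint operation and the positivity ordering behave trivially on scalar multiples of $I$. The only subtlety worth flagging is making sure that the strict inequality, not just the nonstrict one, is preserved, which is immediate because scaling $I$ by a strictly positive scalar preserves positive definiteness (as opposed to merely semidefiniteness).
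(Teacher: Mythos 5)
Your proof is correct and follows essentially the same route as the paper, which simply states that the result follows directly from Definition \ref{D4}; you have just written out the two conditions explicitly, including the key identity $\jmath[M(\jw)I-(M(\jw)I)^{*}]=-2\,\Im[M(\jw)]\,I$ and the observation that a strictly positive scalar times $I$ is positive definite. Nothing is missing.
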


\begin{proof}
This result follows directly from Definition \ref{D4}.
\end{proof}

\hspace{-0.5cm}{\em Proof of Theorem \ref{SNIL2}:}
Let  $\hat{M}(s)$ be the  transfer function matrix
of
(\ref{nilss1}),
(\ref{nilss2}).
Since neither $s=-\alpha$ nor $s=-\beta$ is a pole of $\hat{M}(s)$,
a minimal state-space realization  of
$M_1(s) = \hat{M}(s) - \frac{\varepsilon}{(s+\alpha)(s+\beta)}I$ is
\begin{eqnarray*}
&\dot x_1 = A x_1 + B u,&\\
&\dot x_2 = -\alpha x_2 + \varepsilon u,&\\
&\dot x_3 = -\beta x_3 + \varepsilon u,&\\
&y = Cx_1- x_2 -x_3+ Du.&
\end{eqnarray*}
Let
\[
\tilde A = \left[\begin{array}{ccc} A & 0 & 0\\
0 & -\alpha I & 0\\
0 &0 & -\beta I
\end{array}\right], ~~
\tilde B =  \left[\begin{array}{c} B \\ \varepsilon I \\  \varepsilon I
\end{array}\right], ~~
\tilde C = \left[\begin{array}{ccc} C & - I & -I \end{array}\right], ~~
\tilde D = D.
\]
Assuming conditions 1) - 3), it follows from Theorem \ref{NIL}
 that  $M_1(s)$
is NI. Finally,
 Lemma \ref{L1},  Lemma \ref{L3}, and Lemma \ref{L4}
imply that
$\hat{M}(s) =
M_1(s) + \frac{\varepsilon}{(s+\alpha)(s+\beta)}I$ is SNI.
\hfill\QED

\section{Robust Stability  of Negative-Imaginary Control   Systems}

We now present a result given by Theorem~\ref{thm:main_result} below that guarantees the
robustness and stability  of control systems involving the positive-feedback interconnection of an NI system
and an SNI system. This positive-feedback interconnection is illustrated in
Figure \ref{feedback-interconnection}.
The result is analogous  to the  passivity theorem given in ``What
Is Positive-real and Passivity Theory?''
concerning the negative-feedback interconnection of a positive-real system
and a strictly positive-real system.
% In Theorem~\ref{thm:main_result}, stability is
% established for the if an additional dc gain condition is
% satisfied.

Theorem~\ref{thm:main_result} guarantees the
internal stability of the positive-feedback interconnection of two systems
through phase stabilization, as opposed to
gain stabilization in the small-gain theorem.  In
phase stabilization the gains of the systems can be arbitrarily large, but
the phase of the loop transfer function  needs to be such that the
critical Nyquist point is not encircled by the Nyquist plot. In the  passivity theorem given in ``What
Is Positive-real and Passivity Theory?'', negative feedback is used, and thus the Nyquist point is
 at $s=-1+\jmath 0$. Then the cascade of  two positive-real
systems gives a loop transfer function whose phase is in the interval
$(-\pi,\pi)$. Hence, the Nyquist plot excludes the negative real axis. In
NI systems, positive feedback interconnection is
used and thus the Nyquist point is $s=1+\jmath 0$. This alternative Nyquist point is
required since an NI system has a phase lag in the interval $(-\pi,
0)$ and thus  two   NI systems in cascade have a phase
lag in the interval
$(-2\pi,0)$. That is, the Nyquist plot excludes the positive-real axis.

The following lemma is required in order to state the result given in Theorem \ref{thm:main_result} below.

\begin{lemma}
\label{L5}
Let $M(s)$ be an NI transfer function matrix. Then $M(\infty)$ and  $M(0)$ are symmetric, and
\begin{equation}
\label{ineq1}
  M(0)-M(\infty) \geq 0. 
\end{equation}
Also, let $N(s)$ be an  SNI transfer
  function matrix. Then $N(\infty)$ and $N(0)$ are symmetric, and
\begin{equation}
\label{ineq2}
  N(0)-N(\infty) >  0. 
\end{equation}
If, in addition, $N(\infty)$ is positive semidefinite, then $N(0)$ is positive definite and all
of the eigenvalues of the matrix $M(0)N(0)$ are
real.
\end{lemma}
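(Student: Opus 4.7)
The plan is to exploit the state-space characterizations provided by Theorem~\ref{NIL} and Corollary~\ref{SNIL0}, which reduce the symmetry and positivity claims to algebraic identities in the realization data. For the NI matrix $M(s)$, I would fix a minimal realization $(A,B,C,D)$ together with a symmetric positive-definite $Y$ satisfying $AY+YA^{\transpose}\le 0$, $B+AYC^{\transpose}=0$, and $D=D^{\transpose}$, with $A$ Hurwitz. Since $A$ is invertible, $M(0)=D-CA^{-1}B$ is well-defined and $M(\infty)=D$. Substituting $B=-AYC^{\transpose}$ gives
\[
M(0)-M(\infty) \;=\; -CA^{-1}B \;=\; CYC^{\transpose},
\]
which is symmetric and positive semidefinite because $Y>0$. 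Together with symmetry of $M(\infty)=D$, this delivers all three claims of the first assertion.

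For the SNI matrix $N(s)$, the same state-space structure applies (conditions 1)--3) of Corollary~\ref{SNIL0} coincide with those of Theorem~\ref{NIL}), so $N(0)-N(\infty)=C_N Y_N C_N^{\transpose}$ is again symmetric and PSD. The main obstacle, and the step I expect to be hardest, is upgrading this to strict positive definiteness, since the NI state-space conditions alone would give only the weak inequality. My plan here is a contradiction argument using condition \eqref{SNI}: if some nonzero $v$ satisfied $v^{\transpose} C_N Y_N C_N^{\transpose} v = 0$, then $Y_N>0$ would force $C_N^{\transpose} v = 0$; but this makes $v^{\transpose} N(\jw) v = v^{\transpose} D_N v$ a real constant independent of $\omega$, so $\jmath v^{*}[N(\jw)-N^{*}(\jw)]v = 0$ for every $\omega>0$, contradicting the strict inequality \eqref{SNI}.

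For the final assertion, assuming $N(\infty)\ge 0$, I would write $N(0) = N(\infty) + [N(0)-N(\infty)]$ as the sum of a symmetric PSD and a symmetric PD matrix, hence $N(0)>0$. Then $N(0)$ admits a symmetric positive-definite square root $N(0)^{1/2}$, and I would exhibit the similarity
\[
N(0)^{1/2}\bigl(M(0)N(0)\bigr)N(0)^{-1/2} \;=\; N(0)^{1/2} M(0) N(0)^{1/2}.
\]
The right-hand side is a real symmetric matrix because $M(0)$ is symmetric, so $M(0)N(0)$ is similar to a real symmetric matrix and therefore has only real eigenvalues, completing the proof.
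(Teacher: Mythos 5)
Your argument is correct, and it is worth noting that the paper itself supplies no proof of Lemma~\ref{L5} at all---it simply defers to the reference \cite{LP06}. What you have written is a self-contained derivation built entirely from the machinery the paper does develop: Theorem~\ref{NIL} gives a minimal realization with $Y>0$, $B+AYC^{\transpose}=0$ and $A$ Hurwitz (hence invertible), so that $M(0)-M(\infty)=-CA^{-1}B=CYC^{\transpose}\geq 0$ and both dc and feedthrough matrices are symmetric; this is in fact the same state-space route taken in the cited reference, so your proof reconstructs the missing argument rather than replacing it with a genuinely different one. The two places where you correctly identify and supply nontrivial extra work are (i) the upgrade from $\geq 0$ to $>0$ in the SNI case, where your contradiction argument ($C_N^{\transpose}v=0$ forces $v^{\transpose}N(\jw)v$ to equal the real constant $v^{\transpose}D_Nv$, annihilating the Hermitian-imaginary part and violating \eqref{SNI}) is exactly what is needed --- note that testing with real $v$ suffices since $C_NY_NC_N^{\transpose}$ is real symmetric, and the same argument shows as a byproduct that no constant transfer function can be SNI, so the realization is nondegenerate; and (ii) the similarity $N(0)^{1/2}M(0)N(0)N(0)^{-1/2}=N(0)^{1/2}M(0)N(0)^{1/2}$, which is the standard and correct way to get realness of the spectrum of $M(0)N(0)$ from symmetry of $M(0)$ and positive definiteness of $N(0)$. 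The only implicit assumptions are properness of $M$ and $N$ (needed for $M(\infty)$, $N(\infty)$ to exist, and already implicit in the lemma statement) and, for the first claim, the harmless degenerate case of a static NI gain, where $M(0)-M(\infty)=0$ and the inequality holds trivially.
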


\begin{proof}
See \cite{LP06}.
\end{proof}

\begin{theorem} \label{thm:main_result}
  Consider the NI transfer function matrix
  $M(s)$ and the SNI transfer
  function matrix $N(s)$, and suppose that $M(\infty)N(\infty)=0$
  and $N(\infty)\geq 0$. Then, the positive-feedback interconnection
  of $M(s)$ and $N(s)$
  is internally stable if and only if
\begin{equation}
\label{DC_gain}
  \Maeig(M(0)N(0))<1.
\end{equation}
\end{theorem}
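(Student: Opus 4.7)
The plan is a generalised Nyquist argument combined with a scaling homotopy in the loop gain. Since both $M(s)$ and $N(s)$ have all their poles in OLHP (Definitions~\ref{D3} and \ref{D4}), the loop transfer matrix $L(s) = M(s)N(s)$ has no CRHP poles, so by the argument principle the positive-feedback interconnection is internally stable if and only if the winding number of $\omega\mapsto\det(I - L(\jmath\omega))$ about the origin is zero along the standard Nyquist contour. The hypothesis $M(\infty)N(\infty)=0$ collapses the contribution from the large right-half-plane semicircle to the constant value $1$, so only the behaviour on the imaginary axis matters.

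For sufficiency, I would introduce the homotopy $\tau\in[0,1]\mapsto \tau N(s)$; by Definition~\ref{D4} each $\tau N$ is SNI for $\tau\in(0,1]$, and the pair $(M,\tau N)$ inherits the structural hypotheses of the theorem with the sharper bound $\Maeig(M(0)\cdot\tau N(0))=\tau\Maeig(M(0)N(0))<1$. The aim is to prove that $\det(I-\tau M(\jmath\omega)N(\jmath\omega))\neq 0$ for every $(\tau,\omega)\in[0,1]\times[0,\infty]$, so that the winding number is constant along the homotopy. At $\tau=0$ the curve is identically $1$ with winding number zero; hence the winding number is zero at $\tau=1$, giving internal stability. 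Non-vanishing at $\omega=\infty$ is immediate from $M(\infty)N(\infty)=0$, while non-vanishing at $\omega=0$ follows from Lemma~\ref{L5}: the eigenvalues of $M(0)N(0)$ are all real, so those of $\tau M(0)N(0)$ are bounded above by $\tau\Maeig(M(0)N(0))<1$, and $1$ is never among them.

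The main obstacle is non-vanishing at finite $\omega>0$, which I would establish by a contradiction built from the NI/SNI inequalities. Suppose $M(\jmath\omega)\tau N(\jmath\omega) x=x$ for some $\omega>0$, $\tau\in(0,1]$, and nonzero $x\in\Complex^m$, and set $y := \tau N(\jmath\omega) x$, so that $M(\jmath\omega) y = x$ and $y\neq 0$ (otherwise $x=My=0$). Using $Nx=y/\tau$, the SNI inequality $\jmath x^*(N-N^*)x>0$ reduces to $\Im(x^*y)<0$; using $My=x$, the NI inequality $\jmath y^*(M-M^*)y\geq 0$ reduces to $\Im(y^*x)\leq 0$, which via $y^*x=\overline{x^*y}$ says $\Im(x^*y)\geq 0$. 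These are incompatible, and the delicate step is keeping the complex conjugation and the factor $\tau$ straight when converting the matrix inequalities into scalar constraints on $\Im(x^*y)$. This is essentially the same mechanism that drives Theorem~\ref{LL2}, now repurposed to exclude singularities of the return difference.

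For necessity I would argue contrapositively. If $\Maeig(M(0)N(0))=1$ then $\det(I-M(0)N(0))=0$, placing a closed-loop pole at $s=0$ and ruling out internal stability. If $\Maeig(M(0)N(0))>1$ then at $\tau^{*}=1/\Maeig(M(0)N(0))\in(0,1)$ the homotopy curve passes through the origin at $\omega=0$; since the positive-frequency argument above rules out any other crossing on $(0,1]\times(0,\infty)$ and the $\omega=\infty$ endpoint is always $1$, the winding number must change at $\tau^{*}$. Tracking the transverse sign of the real eigenvalue of $\tau M(0)N(0)$ that crosses $1$ as $\tau$ increases through $\tau^{*}$ shows that this winding number becomes nonzero for $\tau>\tau^{*}$, and in particular is nonzero at $\tau=1$, contradicting the assumed internal stability.
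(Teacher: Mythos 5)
The paper does not actually prove Theorem~\ref{thm:main_result}; it cites \cite{LP06}, whose argument is state-space based and built on the negative-imaginary lemma (Theorem~\ref{NIL}), and it offers only a SISO Nyquist sketch of sufficiency. Your proposal is therefore a genuinely different, frequency-domain route. The sufficiency half is essentially sound: the quadratic-form contradiction is correctly set up (with $M(\jw)y=x$ and $N(\jw)x=y/\tau$ one gets $\Im(x^*y)\geq 0$ from the NI inequality and $\Im(x^*y)<0$ from the strict SNI inequality for $\omega>0$), the endpoints $\omega=0$ and $\omega=\infty$ are handled by Lemma~\ref{L5} and by $M(\infty)N(\infty)=0$ respectively, and the homotopy in $\tau$ then pins the winding number of $\det(I-\tau M(\jw)N(\jw))$ at zero. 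This is a clean MIMO generalization of the paper's SISO sketch and reuses the mechanism of Theorem~\ref{LL2}, as you note.

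The necessity half has a genuine gap when $\Maeig(M(0)N(0))>1$. Your claim that the winding number ``becomes nonzero'' once $\tau$ passes $\tau^{*}=1/\Maeig(M(0)N(0))$ is asserted, not proved. The parity of the winding number of a conjugate-symmetric curve closing at $+1$ is determined by the sign of $\det(I-\tau L(0))=\prod_i(1-\tau\lambda_i)$; if an \emph{even} number of eigenvalues of $M(0)N(0)$ exceed $1/\tau$, this determinant is positive and the parity argument yields nothing. To rescue the argument you must show that at each crossing $\tau=1/\lambda_i$ the zero of $\det(I-\tau L(s))$ sitting at $s=0$ migrates into the \emph{open right} half-plane as $\tau$ increases (so that the CRHP zero count accumulates rather than cancels). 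That is a first-order perturbation statement about $\frac{\partial}{\partial s}\det(I-\tau L(s))$ at $s=0$, and the NI/SNI hypotheses do not hand you its sign for free; this is precisely the point where \cite{LP06} resorts to the LMI machinery of Theorem~\ref{NIL} rather than a Nyquist count. The boundary case $\Maeig(M(0)N(0))=1$ is fine as you state it ($\det(I-L(0))=0$ forces $(I-MN)^{-1}\notin\RHinf$), but the case of eigenvalues strictly exceeding unity needs the missing transversality lemma before your necessity direction is complete.
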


\begin{proof}
See \cite{LP06}.
\end{proof}

In the MIMO case, the proof of
Theorem \ref{thm:main_result} given in \cite{LP06} uses Theorem
\ref{NIL}. In the SISO case, the sufficiency part of Theorem~\ref{thm:main_result} follows directly from  Nyquist
arguments and thus has an intuitive interpretation. For example, consider
\begin{equation}\label{M}
M(s) = \frac{1}{s+1},
\end{equation}
whose positive-frequency Nyquist plot is shown
in Figure \ref{F4.1}. Also consider
\begin{equation}\label{N}
N(s) =
\frac{2s^2+s+1}{(s^2+2s+5)(s+1)(2s+1)},
\end{equation}
whose positive-frequency Nyquist plot is
shown in Figure \ref{F4.2}. Figure \ref{F4.1} shows that $N(s)$ is SNI,
whereas Figure \ref{F4.2} shows that
$M(s)$ is NI but not SNI. The positive-frequency Nyquist plot of the
corresponding loop transfer function $L(s) = N(s)M(s)$ is shown
in Figure \ref{F5.2}.
 Since both
$N(s)$ and $M(s)$ have no poles in  CRHP,
%are both asymptotically stable,
and the Nyquist plot
of $L(s)$ does not encircle the critical point $s=1+\jmath 0$,   it follows
that the positive-feedback
interconnection of $M(s)$ and $N(s)$ is internally stable. A similar Nyquist argument  is
mentioned in \cite{FC90} as a justification for
the stability of SISO positive-position feedback systems.
Furthermore,  a condition equivalent to (\ref{DC_gain}) is
required in the result of \cite{ANG06}.

Consider $M(s)$ and $N(s)$ as in Theorem \ref{thm:main_result} in the SISO case. Since $N(s)$
is  SNI, it follows that
$\angle N(\jw) \in (-\pi,0)$ for all $\omega >0$. Furthermore, since $M(s)$ is  NI, it follows that  $\angle M(\jw) \in
[-\pi,0]$ for all $\omega \geq  0$ such that $M(\jw)\neq 0$. Hence,  $L(s) = M(s)N(s)$ satisfies $\angle L(\jw) \in
(-2\pi,0)$ for all $\omega >0$ such that $L(\jw)\neq 0$. Thus  the Nyquist plot
of $L(\jw)$ can  intersect the positive-real axis
 only at  $\omega = 0$ since at infinite frequency $M(\infty)N(\infty)=0$. Thus, the Nyquist plot
of $L(\jw)$ does not encircle the critical point $s=1+\jmath 0$ if
$M(0)N(0) < 1$. Hence, in the  SISO
case, the sufficiency part of Theorem \ref{thm:main_result} follows from the Nyquist
test.

A discussion on how rigid-body modes can be handled using Theorem \ref{thm:main_result} is given in ``How Are Rigid-Body Modes Handled?''.

\section{Negative-Imaginary  Feedback  Controllers}

We now apply  Theorem \ref{thm:main_result}
to NI feedback  control systems in the
case where one of the
blocks in the feedback connection shown in Figure
\ref{feedback-interconnection}
corresponds to the plant, while the other block corresponds to the
controller. This situation is shown in Figure \ref{F5.2a}.

Since flexible structures with
colocated force actuators and position sensors are typically SNI, Theorem \ref{thm:main_result} implies that  NI controllers guarantee closed-loop internal stability if the dc gain condition (\ref{DC_gain}) is satisfied. Indeed, many
schemes considered for controlling
flexible structures rely on controllers that are NI. These schemes include positive-position feedback
\cite{GC85,FC90,MVB06,PRE02}, resonant feedback
control
\cite{HM01,PMS02}, and integral resonant
control  \cite{AFM07,BMP1a}. We now consider each of these
control schemes in more detail.

\subsection{Positive-Position Feedback}
In the SISO case, a positive-position feedback
controller is a controller of the form
\begin{equation}
\label{siso-ppf}
C(s) = \sum_{i=1}^M\frac{k_i}{s^2 + 2\zeta_i\omega_{i} s +
  \omega_{i}^2},
\end{equation}
where $\omega_{i}>0$, $\zeta_i>0$, and $k_i>0$ for $i = 1,2,\ldots,M$.
Using Nyquist arguments, the SISO  transfer
function $C(s) = \frac{k}{s^2 + 2\zeta\omega s +
  \omega^2}$, where $\omega, \zeta, k>0$, is SNI. Consequently, it
follows from Lemma \ref{L1} that (\ref{siso-ppf}) is
SNI.
Furthermore,  this result can be extended to the
MIMO case to show that the transfer function
matrix
\begin{equation}
\label{mimo_ppf}
C(s) = K^{\transpose}(s^2I+Ds+\Omega)^{-1}K,
\end{equation}
where $D > 0$ and $\Omega > 0$, is SNI
\cite{LP06a}.
A MIMO positive-position feedback controller is a controller of the form (\ref{mimo_ppf}), while a  {\em positive-position feedback system} is a control system for a flexible structure with colocated
force actuators and position sensors with a controller of the
form (\ref{mimo_ppf})
\cite{GC85,FC90,MVB06,PRE02}.

The Nyquist proof of Theorem
\ref{thm:main_result}  justifies the use of positive-position feedback in
  the SISO case. That is, since the positive-position feedback
controller   (\ref{siso-ppf})
is  SNI, its phase is in the interval
$(-\pi,0)$ for all   $\omega >0$. Furthermore, since the flexible structure plant
is  NI, its phase is in the interval $[-\pi,0]$ for all
 $\omega \geq 0$ such that $\jw$ is not a zero. Hence,  the phase of the loop  transfer function is
in the interval $(-2\pi,0)$ for all   $\omega >0$ such that $\jw$ is not a zero. This fact, together with
the strict properness of the controller (\ref{siso-ppf}), implies
that the Nyquist plot
of the loop   transfer function can  intersect the positive-real
axis  at only
   the frequency $\omega =0$. Thus, the Nyquist plot
of the loop  transfer function does not encircle the critical point $s=1+\jmath 0$ if the
dc value of the loop transfer function  is strictly less than unity.

\subsection{Resonant Control}
We now consider the exactly proper SISO  SNI controller
\begin{equation}
\label{resonant1}
C(s) = \sum_{i=1}^M\frac{-k_is^2}{s^2 + 2\zeta_i\omega_{i} s +
  \omega_{i}^2},
\end{equation}
 where $\omega_{i} >0$, $\zeta_i>0$, and $k_i>0$ for $i =
 1,2,\ldots,M$. The controller (\ref{resonant1}) can be implemented as the
 positive-position feedback controller (\ref{siso-ppf}) using an acceleration
 sensor rather than a position sensor.  Alternatively, (\ref{resonant1}) can be
 implemented as the  positive-real feedback controller
\[
\bar C(s) = \sum_{i=1}^M\frac{k_is}{s^2 + 2\zeta_i\omega_{i} s +
  \omega_{i}^2},
\]
where $\omega_{i} >0$, $\zeta_i>0$, and $k_i>0$ for $i =
 1,2,\ldots,M$, using a velocity sensor rather than a position sensor.
To see that  (\ref{resonant1})  defines an NI
controller, we rewrite (\ref{resonant1}) as
$C(s) = -s^2\tilde C(s)$, where $\tilde C(s)$ is a
SISO  positive-position feedback
controller of the form defined in (\ref{siso-ppf}).  If
 $s=\jw$ and  $\omega > 0$, then $-s^2 = \omega^2 >
 0$. Therefore, since
 $\tilde C(s)$  is SNI, $C(s)$ is SNI.

Next consider the SISO  SNI controller
\begin{equation}
\label{resonant2}
C(s) = \sum_{i=1}^M\frac{-k_is(s+2\zeta_i\omega_{i})}{s^2 + 2\zeta_i\omega_{i} s +
  \omega_{i}^2},
\end{equation}
 where $\omega_{i} > 0$, $\zeta_i>0$, and $k_i>0$ for $i =
 1,2,\ldots,M$. Application of (\ref{resonant2}) is described
 in \cite{HM01,PMS02}.
By writing
\begin{equation}
\label{resonant_controller2}
\frac{-k_is(s+2\zeta_i\omega_{i})}{s^2 + 2\zeta_i\omega_{i} s +
  \omega_i^2} = -k_i+ \frac{k_i \omega_i^2}{s^2 + 2\zeta_i\omega_i s +
  \omega_i^2}
\end{equation}
for each $i$, it follows that the controller \eqref{resonant2}
is SNI. This result follows from the fact that  the first term on the right side of
(\ref{resonant_controller2}) has zero imaginary part, and the second
term on the right side of (\ref{resonant_controller2}) is
SNI as in the case of the positive-position feedback
controller (\ref{siso-ppf}). Using these facts, it follows from Lemma \ref{L1} that the controller (\ref{resonant2}) is SNI.

The SNI controllers (\ref{resonant1}) and
(\ref{resonant2}) can be  extended to the MIMO case
 to obtain the  MIMO
SNI controller
\eqn{\label{a}
C(s) = \sum_{i=1}^M\frac{-s^2}{s^2 + 2\zeta_i\omega_i s +
  \omega_i^2}\alpha_i\alpha_i^{\transpose}
}
and
\eqn{\label{b}
C(s) = \sum_{i=1}^M\frac{- s(s+2\zeta_i\omega_i) }{s^2 + 2\zeta_i\omega_i s +
  \omega_i^2}\beta_i\beta_i^{\transpose},
}
where  $\alpha_i$ and $\beta_i$ are $m \times 1$ vectors
\cite{MVB06}.
Control systems for flexible
structures with colocated force actuators and position sensors  using controllers
of the form \eqref{a}, \eqref{b} are {\em resonant
control systems} \cite{HM01,PMS02}.

\subsection{Integral Resonant Control}
Theorem \ref{IRC_SNI} shows that
MIMO transfer function matrices of the form
\[
C(s) = [sI+\Gamma \Phi]^{-1}\Gamma
\]
are SNI. Here,
$\Gamma$ is a positive-definite matrix and $\Phi$ is a
positive-definite matrix. The use of a controller of this
form when applied to a flexible structure with force actuators and
position sensors is referred to as {\em integral resonant control}, or
integral force control
\cite{PRE02,AFM07,BMP1a}.

To illustrate Theorem \ref{thm:main_result} and  integral resonant
control, consider a SISO
integral resonant
control system where the plant is a flexible structure with colocated force
actuation and position measurement. The plant is assumed to have the
transfer function
\eqn{\label{plantt}
P(s) = \sum_{k=1}^{10}\frac{1}{s^2+2s+10^{4}k^2}.
}

Now consider
 this system controlled with the SISO integral resonant controller
\begin{equation}\label{contrr}
C(s) = \frac{\Gamma}{s+\Gamma \Phi},
\end{equation}
where $\Gamma > 0$ and $\Phi > 0$. It follows from Theorem \ref{IRC_SNI} that (\ref{contrr}) is SNI. Using
Theorem \ref{thm:main_result}, it follows that the closed-loop system
is internally stable if the dc gain condition (\ref{DC_gain}) is satisfied. The dc value of the  plant
transfer function is $P(0) =  \sum_{k=1}^{10}\frac{1}{10^{4}k^2} =
1.5498\times 10^{-4}$, while the dc value of the controller transfer function is $C(0) =
1/\Phi$. By choosing $\Phi = 1.2 \times P(0) = 1.8597\times
10^{-4}$, the  condition $\Maeig[P(0)C(0)]  < 1$  is  satisfied. To choose the
parameter $\Gamma > 0$,  Figure \ref{F5.2b} shows the root locus of the closed-loop poles
for the  feedback control system with plant $P(s)$ given by \eqref{plantt}
and controller $C(s)$ given by \eqref{contrr} as the parameter $\Gamma > 0$ is varied. From
this root locus diagram, the parameter $\Gamma$ is chosen as $\Gamma =
9.6584\times 10^{5}$ to maximize the damping of the first resonant mode.

The damping of the resonant modes arising from the
integral resonant feedback controller \eqref{contrr} is illustrated in Figure
\ref{F5.2c}, which
shows the open-loop frequency response of the plant from the actuator
input to the sensor output. Also shown is the
closed-loop frequency response from the command input to the sensor output when
the integral resonant feedback controller $C(s) = \frac{9.6584\times
  10^{5}}{s+179.6379}$ is applied as in Figure
\ref{F5.2a}.

\subsection{State-Feedback Controller Synthesis}

An alternative approach to the direct use of Theorem \ref{thm:main_result} for
establishing the
closed-loop stability of a feedback control system is to
design the controller  to be robust against only a
specific uncertainty structure as shown in Figure \ref{F5.3}. In this case,
it  follows from Theorem \ref{thm:main_result}
that if the plant  uncertainty
 is known to be SNI, and if the
feedback controller is constructed so that the nominal closed-loop
system is NI and  the  dc  gain condition is satisfied,
then the resulting closed-loop
uncertain system is guaranteed to be robustly stable \cite{LP06}. We now present some further
results on this problem when full state-measurements are available using an LMI
approach. The assumption of full state-measurements means that there is a sensor available to measure each of the quantities that define a state variable in the state space model of the nominal plant shown in Figure  \ref{F5.3}.

Consider the  feedback control system  in Figure
\ref{F5.3} in the case that  full state feedback is available. In
  this case, Theorem
\ref{NIL} can be used to synthesize a state-feedback control law such
that the resulting closed-loop system is NI. Indeed, suppose  the  uncertain system shown in Figure
\ref{F5.3} is described by the state
equations
\begin{eqnarray}
\label{sys_ol}
&\dot x = Ax +B_1w+B_2u,& \\
\label{sys_ola}
&z = C_1x,&  \\
\label{sys_olb}
&w = \Delta(s)z,&
\end{eqnarray}
where  the uncertainty transfer function matrix  $\Delta(s)$ is assumed
to be  SNI with $|\Maeig(\Delta(0))|
\leq 1$ and $\Delta(\infty) \geq 0$.
Applying the state-feedback control law $u = Kx$ yields
the closed-loop uncertain system
\begin{eqnarray}
\label{sys_cl}
&\dot x = (A+B_2K)x +B_1w,& \\
\label{sys_cla}
&z = C_1x,&  \\
\label{sys_clb}
&w = \Delta(s)z.&
\end{eqnarray}
The corresponding nominal closed-loop transfer
function matrix is
\begin{equation}
\label{Gcl}
G_{\mbox{\small cl}}(s) = C_1(sI-A-B_2K)^{-1}B_1.
\end{equation}

\begin{theorem}
\label{SFNI_DCgain}
Consider the uncertain system (\ref{sys_ol}),  (\ref{sys_ola}),  (\ref{sys_olb}) and
suppose there exist  matrices $Y > 0$,
 $M$, and
 a scalar  $\varepsilon > 0$ such that
\begin{eqnarray}
& \left[\begin{array}{cc}
\label{lmi2}
AY+YA^{\transpose}+ B_2M+M^{\transpose}B_2^{\transpose} + \varepsilon I & B_1+AYC_1^{\transpose}+B_2MC_1^{\transpose} \\
B_1^{\transpose}+C_1YA^{\transpose}+C_1M^{\transpose}B_2^{\transpose} & 0
\end{array}\right] \leq   0,&\\
\label{lmi3}
 & C_1 Y C_1^T -I <  0,& \\
\label{lmi4}
&Y > 0.&
\end{eqnarray}
Here the parameter $\varepsilon > 0$ is chosen to be sufficiently small.
Then the state-feedback control law $u =  MY^{-1}x$ is robustly
stabilizing for the uncertain system (\ref{sys_ol}),  (\ref{sys_ola}),  (\ref{sys_olb}).
\end{theorem}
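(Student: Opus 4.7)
The plan is to reduce the claim to Theorem \ref{thm:main_result} by showing that, with $K = MY^{-1}$, the nominal closed-loop map $G_{\text{cl}}(s) = C_1(sI - A_{\text{cl}})^{-1}B_1$ (where $A_{\text{cl}} = A+B_2K$) is NI, is strictly proper, and satisfies the dc-gain condition against $\Delta(s)$.

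First I would substitute $M = KY$ into \eqref{lmi2} to rewrite it as
\[
\left[\begin{array}{cc}
A_{\text{cl}}Y + YA_{\text{cl}}^{\transpose} + \varepsilon I & B_1 + A_{\text{cl}}YC_1^{\transpose} \\
B_1^{\transpose} + C_1YA_{\text{cl}}^{\transpose} & 0
\end{array}\right] \leq 0.
\]
Because the $(2,2)$-block vanishes and the matrix is negative semidefinite, a standard test-vector argument (testing with $v=(tu, w)$ and letting $t$ vary over $\Real$) forces the off-diagonal block to be zero, so $B_1 + A_{\text{cl}}YC_1^{\transpose}=0$. The $(1,1)$-block then gives $A_{\text{cl}}Y + YA_{\text{cl}}^{\transpose} \leq -\varepsilon I < 0$; combined with $Y>0$, Lyapunov's theorem makes $A_{\text{cl}}$ Hurwitz (in particular, no imaginary-axis eigenvalues), so the state-feedback stabilizes the nominal plant.

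Next I would invoke Theorem \ref{NIL} on $(A_{\text{cl}}, B_1, C_1, 0)$: the hypotheses ($A_{\text{cl}}$ Hurwitz, $D=0$ symmetric, a $Y>0$ satisfying \eqref{nil1}--\eqref{nil2}) are exactly what we just extracted, so $G_{\text{cl}}(s)$ is NI. If minimality is a concern, the same identity $B_1 = -A_{\text{cl}}YC_1^{\transpose}$ yields, after a short manipulation, the closed form
\[
\jmath[G_{\text{cl}}(\jw) - G_{\text{cl}}^{*}(\jw)] = -\omega\, C_1(\jw I - A_{\text{cl}})^{-1}(A_{\text{cl}}Y+YA_{\text{cl}}^{\transpose})(-\jw I - A_{\text{cl}}^{\transpose})^{-1}C_1^{\transpose} \geq 0
\]
for $\omega \geq 0$, confirming NI without any minimality assumption. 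Strict properness gives $G_{\text{cl}}(\infty) = 0$, so $G_{\text{cl}}(\infty)\Delta(\infty) = 0$, and using $B_1 = -A_{\text{cl}}YC_1^{\transpose}$ one obtains $G_{\text{cl}}(0) = -C_1 A_{\text{cl}}^{-1}B_1 = C_1YC_1^{\transpose} \geq 0$.

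The remaining step — and the only real obstacle — is to translate $\eqref{lmi3}$ into the dc-gain condition $\Meig(G_{\text{cl}}(0)\Delta(0))<1$. LMI \eqref{lmi3} gives $G_{\text{cl}}(0) = C_1YC_1^{\transpose} < I$. By Lemma \ref{L5} applied to the SNI $\Delta(s)$ with $\Delta(\infty)\geq 0$, $\Delta(0)$ is symmetric with $\Delta(0) > \Delta(\infty) \geq 0$, hence positive definite; the bound $|\Meig(\Delta(0))|\leq 1$ then forces $\Delta(0) \leq I$, and the same lemma guarantees that the eigenvalues of $G_{\text{cl}}(0)\Delta(0)$ are real. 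Since $G_{\text{cl}}(0)\Delta(0)$ is similar to the symmetric matrix $\Delta(0)^{1/2}G_{\text{cl}}(0)\Delta(0)^{1/2}$, and
\[
\Delta(0)^{1/2}G_{\text{cl}}(0)\Delta(0)^{1/2} < \Delta(0)^{1/2}I\,\Delta(0)^{1/2} = \Delta(0) \leq I,
\]
we conclude $\Meig(G_{\text{cl}}(0)\Delta(0)) < 1$. Theorem \ref{thm:main_result} then yields internal stability of the positive-feedback interconnection of $G_{\text{cl}}(s)$ and $\Delta(s)$, i.e., robust stabilization of \eqref{sys_ol}--\eqref{sys_olb} by $u=MY^{-1}x$.
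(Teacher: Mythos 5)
Your proposal is correct and follows essentially the same route as the paper: substitute $K=MY^{-1}$ into the LMI, extract the closed-loop Lyapunov inequality and the identity $B_1+(A+B_2K)YC_1^{\transpose}=0$, apply Theorem \ref{NIL} to conclude $G_{\mbox{\small cl}}(s)$ is NI, compute $G_{\mbox{\small cl}}(0)=C_1YC_1^{\transpose}$, and invoke Theorem \ref{thm:main_result}. You additionally justify two steps the paper states without proof (the vanishing of the off-diagonal block via the test-vector argument, and the final bound $\Maeig(G_{\mbox{\small cl}}(0)\Delta(0))<1$ via the congruence $\Delta(0)^{1/2}G_{\mbox{\small cl}}(0)\Delta(0)^{1/2}<I$), which is a welcome tightening rather than a departure.
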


\begin{proof}
Suppose
the LMIs (\ref{lmi2}), (\ref{lmi4}) are  satisfied and
%   the corresponding state
% feedback control law is given by $u = MY^{-1}x$.
let
\[
K=MY^{-1}.
\]
%  where $Y >
% 0$ and $M$ are matrices to be determined.
Then, (\ref{lmi2}) implies
\begin{eqnarray}
\label{lmis1}
&(A+B_2K)Y+Y(A+B_2K)^{\transpose} = AY+YA^{\transpose}+ B_2M+M^{\transpose}B_2^{\transpose} + \varepsilon I \leq 0,&  \\
\label{lmis1a}
& B_1+(A+B_2K)YC_1^{\transpose} = B_1+AYC_1^{\transpose}+B_2MC_1^{\transpose} = 0.&
\end{eqnarray}
It follows from (\ref{lmis1}) that $A+B_2K$ has no eigenvalues on the imaginary axis. Furthermore, Theorem \ref{NIL} implies that
the closed-loop transfer function $G_{\mbox{\small cl}}(s)$ (\ref{Gcl}) is NI.
 % if and only if there exist matrices $M$ and $Y > 0$ such
% that $A+B_2K = A+B_2MY^{-1}$ has no eigenvalues on the imaginary axis and
% \begin{eqnarray}
% \label{lmis1}
% &(A+B_2K)Y+Y(A+B_2K)^{\transpose} = AY+YA^{\transpose}+ B_2M+M^{\transpose}B_2^{\transpose} \leq 0,&  \\
% \label{lmis1a}
% & B_1+(A+B_2K)YC_1^{\transpose} = B_1+AYC_1^{\transpose}+B_2MC_1^{\transpose} = 0.&
% \end{eqnarray}
% Furthermore, the requirement that  can be
% guaranteed if
% % the inequality
% % \[
% % (A+B_2K)Y+Y(A+B_2K)^{\transpose} \leq 0
% % \]
% %  is replaced by
% \begin{equation}
% \label{epslmi}
% (A+B_2K)Y+Y(A+B_2K)^{\transpose} + \varepsilon I \leq 0,
% \end{equation}
% where $\varepsilon > 0$.
% The conditions (\ref{epslmi}),  (\ref{lmis1}), and (\ref{lmis1a}) are
% satisfied

We now show that the  feedback system defined by (\ref{sys_cl}),
(\ref{sys_cla}), (\ref{sys_clb}), corresponding
to the state feedback control law $u = MY^{-1}x$, satisfies the assumptions of Theorem \ref{thm:main_result}.
Since $G_{\mbox{\small cl}}(s)$  is strictly proper, it follows that $G_{\mbox{\small cl}}(\infty) = 0$ and hence $\Delta(\infty)G_{\mbox{\small cl}}(\infty) =
0$. Also, it follows from (\ref{lmis1a}) that
\[
G_{\mbox{\small cl}}(0) = -C_1\left(A+B_2K\right)^{-1}B_1 = C_1 Y C_1^T.
\]
Therefore,  the LMI (\ref{lmi3}) implies $G_{\mbox{\small cl}}(0)< I$ and hence
\begin{equation}
\label{sigmaxGcl}
\sigma_{\mbox{\small max}}(G_{\mbox{\small cl}}(0)) < 1.
\end{equation}
However, since  $G_{\mbox{\small cl}}(s)$ is  negative imaginary, it follows from Lemma \ref{L5} that $G_{\mbox{\small cl}}(0) \geq G_{\mbox{\small cl}}(\infty) $. Moreover,  $G_{\mbox{\small cl}}(\infty) = 0$ and thus $G_{\mbox{\small cl}}(0) \geq 0$. Hence, $\Maeig(G_{\mbox{\small cl}}(0))
=\sigma_{\mbox{\small max}}(G_{\mbox{\small cl}}(0))$, and consequently $|\Maeig(G_{\mbox{\small cl}}(0))| < 1$. Also,
the
assumptions on
$\Delta(s)$ in (\ref{sys_ol}),  (\ref{sys_ola}),  (\ref{sys_olb}) imply that $\Delta(\infty) \geq 0$ and $|\Maeig(\Delta(0))| \leq
1$. From these conditions, it follows that
 $\Maeig(\Delta(0)G_{\mbox{\small cl}}(0)) < 1$.

Thus, we have
$\Delta(\infty)G_{\mbox{\small cl}}(\infty) = 0$, $\Delta(\infty) \geq 0$,
and
$
\Maeig(\Delta(0)G_{\mbox{\small cl}}(0)) < 1.
$
Therefore,  the assumptions of Theorem \ref{thm:main_result} are satisfied.
Now Theorem \ref{thm:main_result} implies that
the closed-loop system
(\ref{sys_ol}),  (\ref{sys_ola}),  (\ref{sys_olb}) with the state-feedback controller $u = MY^{-1}x$
is robustly stable.
\end{proof}

\subsection{An LMI State-Feedback Synthesis Example}

To   illustrate  Theorem \ref{SFNI_DCgain}, consider the system shown in Figure
\ref{F5.4}, which includes a flexible structure. The force
applied to the  flexible structure is denoted by
$x_2$, and the deflection of the structure at the same location is
denoted $y$. The transfer function from $x_2$ to $y$ is denoted
$G(s)$. The flexible structure has a colocated force
actuator and position sensor and the transfer function
$G(s)$ is  assumed to be SNI.
It is desired to construct a state
feedback controller for this system, which is robust against
unmodeled
flexible dynamics. Indeed, in order to apply the method of Theorem
\ref{SFNI_DCgain} to this example, the transfer function
$G(s)$ is replaced by a constant unity gain,  and the resulting
error is the SNI transfer function
$\Delta(s) = G(s) -1$. The transfer
function $\Delta(s)$  is  treated as
an uncertainty in the system as shown in Figure \ref{F5.5}.
A state-space realization of this uncertain system is
\begin{eqnarray*}
%\label{sys_ol1}
& \left[\begin{array}{c}
\dot x_1 \\ \dot x_2 \\ \dot x_3
\end{array}\right] =
 \left[\begin{array}{rrr}
-1 & 0 & 0\\
1 & -1 & 1\\
0 & 1 & -1
\end{array}\right]
 \left[\begin{array}{c}
 x_1 \\  x_2 \\  x_3
\end{array}\right]+
\left[\begin{array}{c}
 0 \\  0 \\  1
\end{array}\right]w+
\left[\begin{array}{r}
 -2 \\  1 \\  0
\end{array}\right]u,&\nonumber \\
&z = \left[\begin{array}{ccc}
0 & 1 & 0
\end{array}\right] \left[\begin{array}{c}
 x_1 \\  x_2 \\  x_3
\end{array}\right],& \nonumber \\
&w = \Delta(s)z.&
\end{eqnarray*}
Then, Theorem \ref{SFNI_DCgain} can be applied with
\begin{eqnarray*}
&A =
 \left[\begin{array}{rrr}
-1 & 0 & 0\\
1 & -1 & 1\\
0 & 1 & -1
\end{array}\right],~~
B_1=\left[\begin{array}{c}
 0 \\  0 \\  1
\end{array}\right], ~~
B_2=\left[\begin{array}{c}
  -2 \\  1 \\  0
\end{array}\right],& \\
&C_1= \left[\begin{array}{ccc}
0 & 1 & 0
\end{array}\right].&
\end{eqnarray*}
To apply  Theorem \ref{SFNI_DCgain}, we choose $\varepsilon = 10^{-6}$. Then the
LMIs (\ref{lmi2}) - (\ref{lmi4}) are solved using  LMI
software \cite{BCPS09} to find the
matrices $Y$ and  $M$ as
\begin{eqnarray*}
&Y =  \left[\begin{array}{ccc}
   3.9594 \times 10^9 & -2.0008 &  -3.9594\times 10^9 \\
  -2.0008 &   0.72850 &   1.7293\\
  -3.9594\times 10^9  & 1.7293 &   3.9594\times 10^9
\end{array}\right] > 0,& \\
&M = \left[\begin{array}{ccc}
-2.8122 &   1.0000 &  2.6260\end{array}\right].&
\end{eqnarray*}
Therefore, using Theorem \ref{SFNI_DCgain}, the required state
feedback gain matrix $K$ can be constructed as
\[
K = MY^{-1} = \left[\begin{array}{ccc}
0.22927 &   1.4581  & 0.22927
\end{array}\right].
\]
The Bode plot of the corresponding closed-loop transfer function from
$w$ to $z$, given by (\ref{Gcl})
% \[
% G_{\mbox{\small cl}}(s) = C_1(sI-A-B_2K)^{-1}B_1,
% \]
is shown in Figure \ref{F5.6}.
From this Bode plot, it is seen that $G_{\mbox{\small cl}}(s)$ is  SNI since
\[
\angle G_{\mbox{\small cl}}(\jw) \in (-\pi,0)
\]
for all $\omega >0$.
Also, the Bode plot of Figure 12 shows that the magnitude of the dc value of $G_{\mbox{\small cl}}(s)$
is less than unity. Since the uncertainty transfer function
$\Delta(s)$ in this example is SNI,
it  follows from Theorem \ref{thm:main_result} that if $|\Delta(0)| \leq 1$, then the
closed-loop system is internally stable.

In the above example, the nominal system is obtained by
replacing the flexible structure transfer function $G(s)$ by a fixed
unity gain. This  gain can be regarded as an approximation
of the dc value of the flexible structure transfer function  $G(0)$. If  the
dc value of the flexible structure transfer function
 is known to be exactly unity, then it follows that $\Delta(0)
= G(0) - 1 =0$. In this case, the dc gain condition in Theorem
\ref{thm:main_result} is automatically satisfied, and there is
no need to require the  LMI condition (\ref{lmi3}) in constructing the
state-feedback controller. However, the current approach means that
the dc value of the flexible structure transfer function does not have
to be known exactly, and the control system is
robust against uncertainty in $G(0)$.

\section{Conclusion}
This article describes properties of a class of systems termed
NI systems using ideas from classical control
theory. Connections to positive-real and passive systems
are also given. It is also shown that the
class of NI systems  yields a
robust stability analysis result, which broadly speaking can be captured
by saying that if one system is negative imaginary and the other system
is strictly negative imaginary, then a necessary and sufficient
condition for internal stability of the positive-feedback
interconnection of  the two  systems is  that  the dc loop gain
is less than unity. This result provides a  natural
framework for the analysis of robust stability of lightly damped flexible structures
with unmodeled dynamics. This result also captures, in a systematic
framework, graphical design methods adopted in the 1980s by practical
engineers related to positive-position feedback and similar
techniques. This article also provides a full state-feedback controller
synthesis technique that achieves a
NI closed-loop system. The use of this theory is similar to the use of passivity
theory, and hence extends and complements
existing
passivity results \cite{DV75,KHA01}.
\section{Acknowledgments}
The authors wish to acknowledge many discussions on the
topic of this article with Professor Reza Moheimani, Dr. Junlin
Xiong, Dr. Sourav Patra and Zhuoyue Song. They also wish to acknowledge financial support from the
Australian Research Council, the Engineering and Physical Sciences Research Council and the Royal Society.

\newpage
\clearpage

\eject
\newpage
\mbox{}

\begin{table}
\begin{center}
\begin{tabular}{|l|l|}
\hline
$A^*$ & Complex conjugate transpose of the complex matrix
$A$ \\
$A^{\transpose}$ &  Transpose of the  matrix
$A$ \\
$A>0$ & The matrix $A$ is positive definite \\
$A\geq 0$ &  The matrix $A$ is positive semidefinite\\
$\Re[s]$ & Real part of the complex number $s$ \\
$\Im[s]$ & Imaginary part of the complex number $s$ \\
$\Im_{\text{H}}[A]$ & Hermitian-Imaginary part $\jmath[A-A^*]$ \\
$\Maeig(A)$ &  Maximum eigenvalue of the
  matrix $A$  whose eigenvalues are all  real.\\
$\sigma_{\mbox{\small max}}(A)$ & Maximum singular value of the matrix $A$. \\
CRHP & closed right half of the complex plane \\
ORHP & open right half of the complex plane \\
CLHP & closed left half of the complex plane \\
OLHP & open left half of the complex plane \\
\hline
\end{tabular}
\end{center}
\caption{Notation.}
\label{notation}
\end{table}

\eject
\newpage
\mbox{}

\begin{table}
\begin{center}
\begin{tabular}{|l|l|}
\hline
SISO & single-input, single-output\\
MIMO &  multi-input, multi-output  \\
NI & negative-imaginary \\
SNI & strictly  negative-imaginary \\
LMI & linear matrix inequality \\
RLC & resistor, inductor, capacitor \\
\hline
\end{tabular}
\end{center}
\caption{List of Acronyms.}
\label{acronyms}
\end{table}

\newpage
\begin{figure}[H]
\psfrag{M(jw)}{$P(\jw)$}
\centering
\includegraphics[scale=1.25]{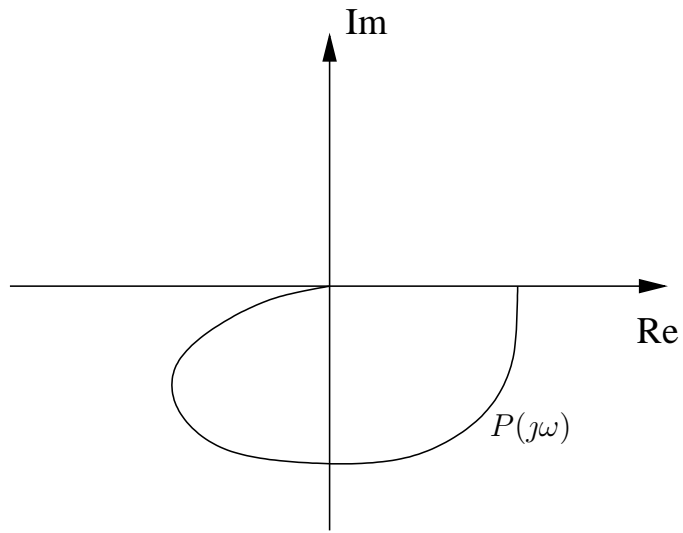}
\caption{Positive-frequency Nyquist plot of a negative-imaginary system. A single-input, single-output
  negative-imaginary  transfer function has no poles in CRHP and has a frequency response
  with negative imaginary part for all frequencies. Consequently,  the
  Nyquist plot for $\omega>0$
  is contained  in the lower half  of the complex plane.}
\label{one-NIFR-system}
\end{figure}

\newpage
\begin{figure}[H]
\centering
\psfrag{M}{$M(s)$}
\psfrag{N}{$N(s)$}
\psfrag{w1}{$w_1$}
\psfrag{u1}{$u_1$}
\psfrag{y1}{$y_1$}
\psfrag{w2}{$w_2$}
\psfrag{u2}{$u_2$}
\psfrag{y2}{$y_2$}
\includegraphics[scale=1.25]{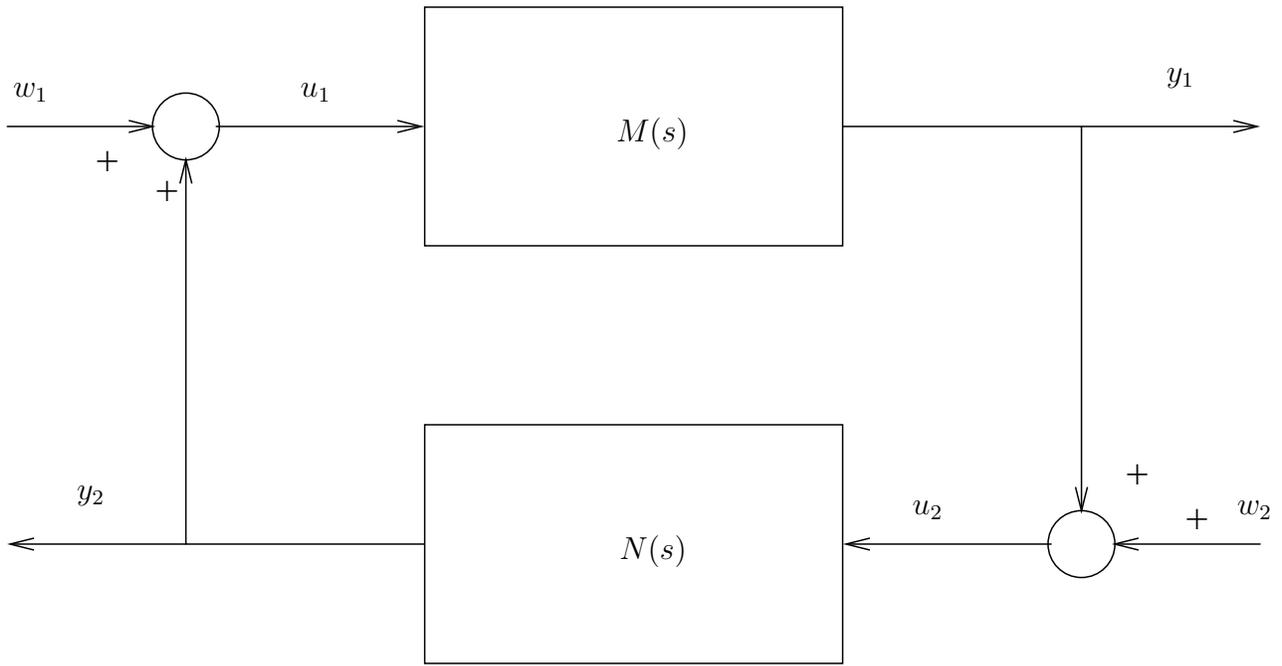}
\caption{A positive feedback interconnection. The transfer functions
  $M(s)$ and $N(s)$ are interconnected by
  positive feedback. The stability of this feedback interconnection
  can be guaranteed by using either the small-gain theorem or Theorem \ref{thm:main_result}. The relevant stability result
   depends on the properties of  $M(s)$
  and $N(s)$.}
\label{feedback-interconnection}
\end{figure}
%%%AL
\newpage
\begin{figure}[H]
\centering
\setlength{\unitlength}{0.00083333in}
\begingroup\makeatletter\ifx\SetFigFontNFSS\undefined%
\gdef\SetFigFontNFSS#1#2#3#4#5{%
  \reset@font\fontsize{#1}{#2pt}%
  \fontfamily{#3}\fontseries{#4}\fontshape{#5}%
  \selectfont}%
\fi\endgroup%
{\renewcommand{\dashlinestretch}{30}
\begin{picture}(6428,4821)(0,-10)
\path(2265,4737)(4065,4737)(4065,3312)
	(2265,3312)(2265,4737)
\path(2265,1437)(4065,1437)(4065,12)
	(2265,12)(2265,1437)
\path(5865,4512)(4065,4512)
\blacken\path(4185.000,4542.000)(4065.000,4512.000)(4185.000,4482.000)(4149.000,4512.000)(4185.000,4542.000)
\path(5865,312)(4065,312)
\blacken\path(4185.000,342.000)(4065.000,312.000)(4185.000,282.000)(4149.000,312.000)(4185.000,342.000)
\path(2265,312)(390,312)
\blacken\path(510.000,342.000)(390.000,312.000)(510.000,282.000)(474.000,312.000)(510.000,342.000)
\path(2265,4512)(465,4512)
\blacken\path(585.000,4542.000)(465.000,4512.000)(585.000,4482.000)(549.000,4512.000)(585.000,4542.000)
\path(2265,3612)(1665,3612)(1665,2712)
	(4665,1812)(4665,1137)(4065,1137)
\blacken\path(4185.000,1167.000)(4065.000,1137.000)(4185.000,1107.000)(4149.000,1137.000)(4185.000,1167.000)
\path(2265,1137)(1665,1137)(1665,1812)
	(4665,2712)(4665,3612)(4065,3612)
\blacken\path(4185.000,3642.000)(4065.000,3612.000)(4185.000,3582.000)(4149.000,3612.000)(4185.000,3642.000)
\put(2940,3912){\makebox(0,0)[lb]{\smash{{\SetFigFontNFSS{12}{14.4}{\rmdefault}{\mddefault}{\updefault}M}}}}
\put(3090,537){\makebox(0,0)[lb]{\smash{{\SetFigFontNFSS{14}{16.8}{\rmdefault}{\mddefault}{\updefault}N}}}}
\put(6090,4587){\makebox(0,0)[lb]{\smash{{\SetFigFontNFSS{14}{16.8}{\rmdefault}{\mddefault}{\updefault}w1}}}}
\put(15,312){\makebox(0,0)[lb]{\smash{{\SetFigFontNFSS{14}{16.8}{\rmdefault}{\mddefault}{\updefault}y2}}}}
\put(1290,3087){\makebox(0,0)[lb]{\smash{{\SetFigFontNFSS{14}{16.8}{\rmdefault}{\mddefault}{\updefault}u2}}}}
\put(4890,3087){\makebox(0,0)[lb]{\smash{{\SetFigFontNFSS{14}{16.8}{\rmdefault}{\mddefault}{\updefault}u1}}}}
\put(6090,312){\makebox(0,0)[lb]{\smash{{\SetFigFontNFSS{14}{16.8}{\rmdefault}{\mddefault}{\updefault}w2}}}}
\put(15,4587){\makebox(0,0)[lb]{\smash{{\SetFigFontNFSS{14}{16.8}{\rmdefault}{\mddefault}{\updefault}y1}}}}
\end{picture}
}
\caption{A Redheffer star product feedback interconnection. If this
  feedback interconnection is internally stable and the
  transfer function matrices $M(s)$ and $N(s)$ are negative imaginary,
  then $T(s)$, the closed-loop transfer function matrix  from
 $\left[w_1^T ~~ w_2^T\right]^T$ to
 $\left[y_1^T ~~y_2^T\right]^T$, is
 negative imaginary.
Furthermore, if in addition, either
 $M(s)$ or $N(s)$ is  strictly negative imaginary, then $T(s)$ is
 strictly negative imaginary.
}
\label{star-product}
\end{figure}

\newpage
\begin{figure}[H]
\centering
\psfrag{Nyquist Plot}{}
\psfrag{real}{Real (rad/s)}
\psfrag{imaginary}{Imaginary (rad/s)}
\includegraphics[width=16cm]{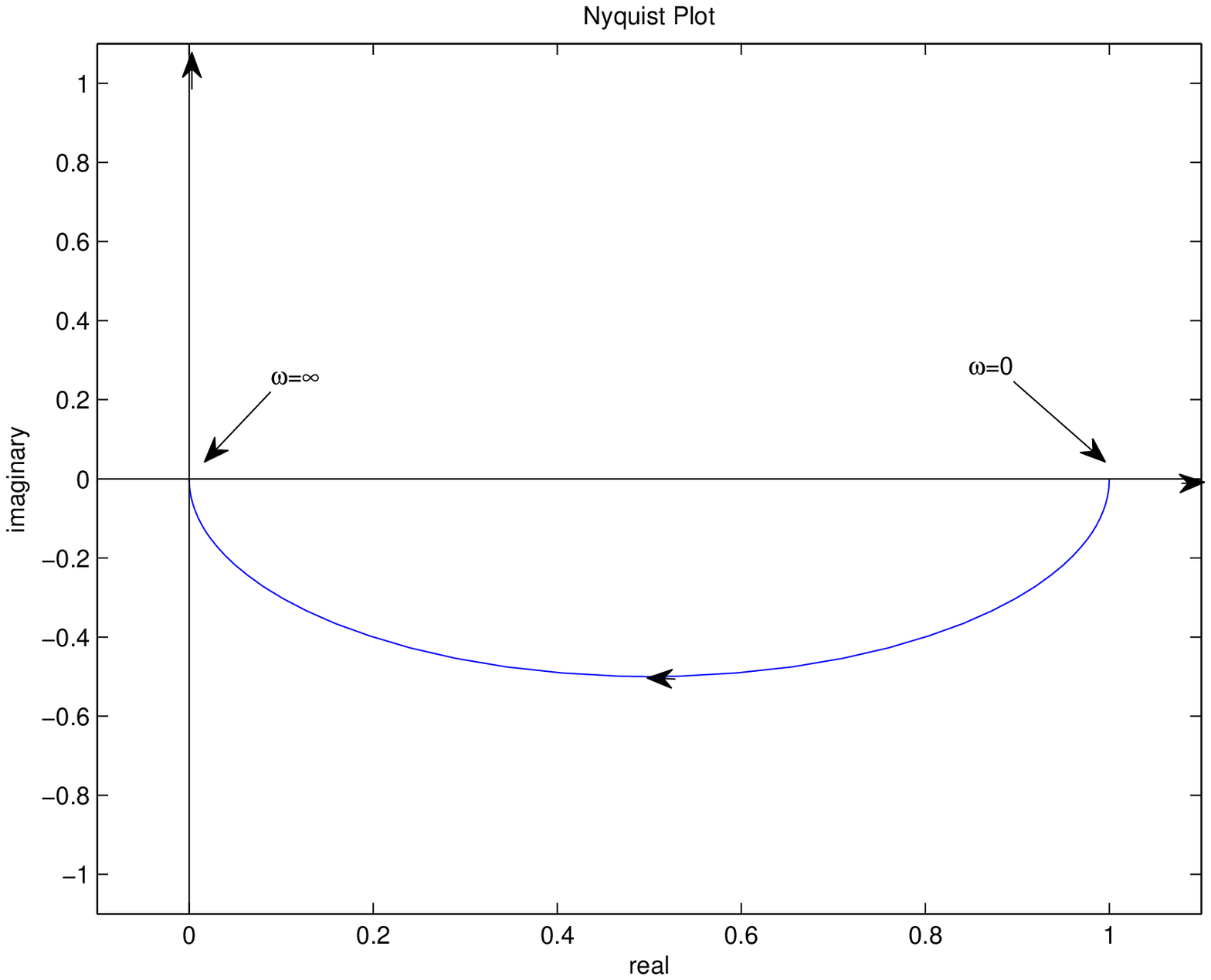}
\caption{Positive-frequency Nyquist plot of the transfer function $M(s) =
  \frac{1}{s+1}$.   The imaginary part of $M(\jw)$ is
   negative  for all $\omega > 0$, and thus $M(s)$ is  strictly
  negative imaginary.}
\label{F4.1}
\end{figure}
\newpage
\begin{figure}[H]
\centering
\psfrag{Nyquist Plot}{}
\psfrag{w=0}{$\scriptstyle \omega = 0$}
\psfrag{w=1}{$\scriptstyle\omega = 1$}
\psfrag{w=infty}{$\scriptstyle\omega = \infty$}
\psfrag{real}{Real (rad/s)}
\psfrag{imaginary}{Imaginary (rad/s)}
\includegraphics[width=16cm]{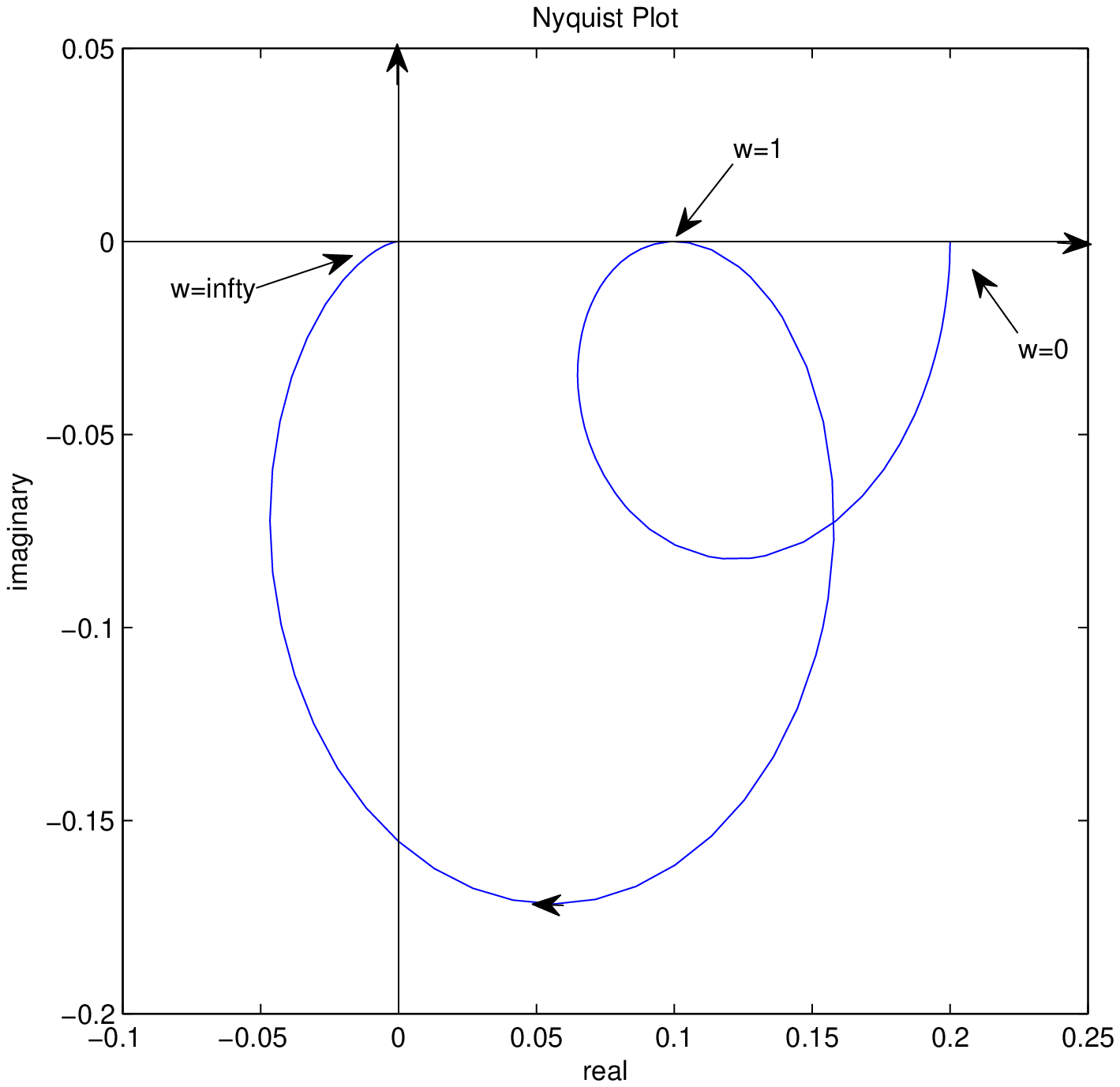}
\caption{Positive-frequency Nyquist plot of the transfer function $M(s) =
  \frac{2s^2+s+1}{(s^2+2s+5)(s+1)(2s+1)}$. This Nyquist plot
  shows  that the imaginary part of $M(\jw)$ is
   negative for all $\omega \geq  0$ except $\omega
  =0$ and $\omega
  =1$, where the imaginary part of $M(\jw)$ is zero. Thus $M(s)$ is negative
  imaginary, but not strictly negative imaginary.}
\label{F4.2}
\end{figure}
\newpage

\begin{figure}[H]
\centering
\psfrag{Nyquist Plot}{}
\psfrag{w=0}{$\omega = 0$}
\psfrag{w=infty}{$\omega = \infty$}
\psfrag{real}{Real (rad/s)}
\psfrag{imaginary}{Imaginary (rad/s)}
\includegraphics[width=16cm]{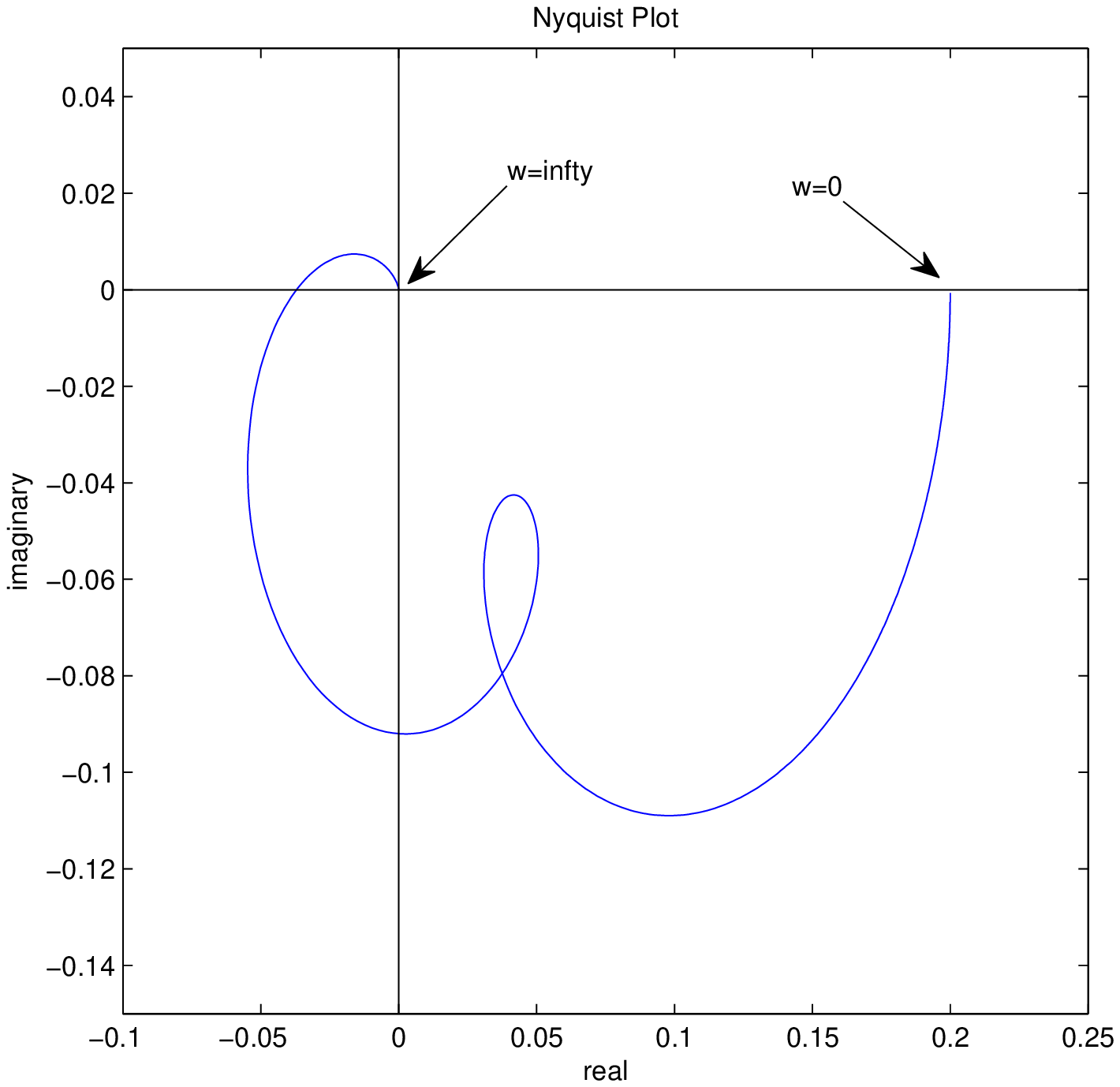}
\caption{Positive-frequency Nyquist plot of the loop transfer function $L(s) =
  N(s)M(s)$ corresponding to the positive-feedback interconnection
  of $M(s)= \frac{1}{s+1} $ and $N(s)=
\frac{2s^2+s+1}{(s^2+2s+5)(s+1)(2s+1)}$. Here  $M(s)$ is
strictly negative imaginary, and $N(s)$ is negative imaginary. Since $M(s)$ and
$N(s)$  both have no poles in CRHP  and the Nyquist plot does not encircle the
critical point $s=1 + \jmath 0$, it follows from the Nyquist stability
criterion that the positive feedback interconnection of $M(s)$ and $N(s)$ is internally stable.
 }
\label{F5.2}
\end{figure}
\newpage
\begin{figure}[H]
\centering
%\psfrag{Stable}{}
\psfrag{P(s)}{$P(s)$}
\psfrag{C(s)}{$C(s)$}
\psfrag{Stable}{}
\includegraphics[width=14cm]{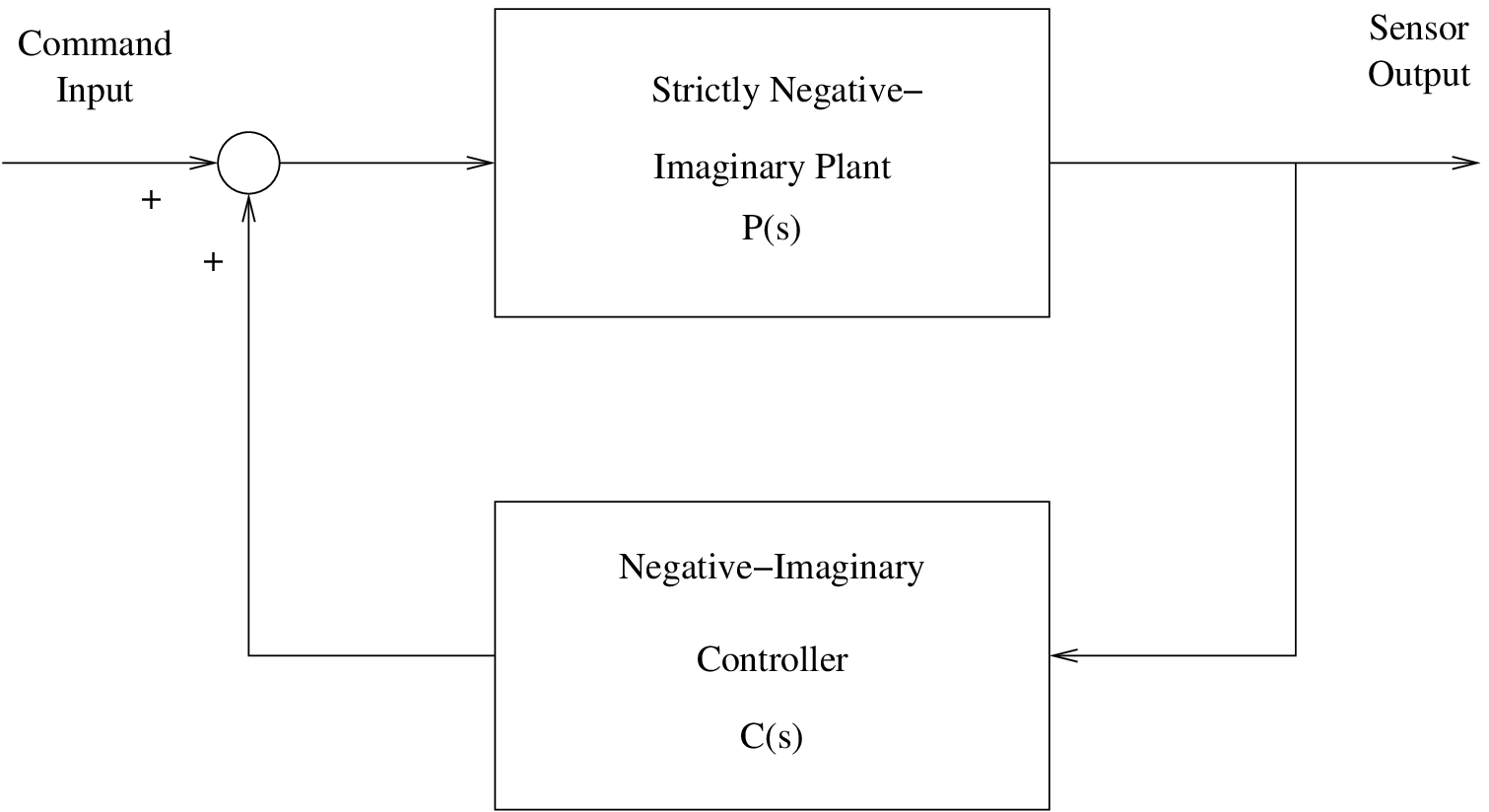}
\caption{Negative-imaginary  feedback control system. If the plant
  transfer function matrix $P(s)$ is strictly negative
  imaginary, and the  controller transfer
  function matrix $C(s)$ is negative imaginary, then the closed-loop
  system is internally stable if and only if the dc gain condition
  $\Maeig[P(0)C(0)]  < 1$ is satisfied.
% Here, the notation
% $\Maeig(M)$  denotes
% the maximum eigenvalue of a
%   matrix $M$ all of whose eigenvalues are real.
 }
\label{F5.2a}
\end{figure}
\newpage
\begin{figure}[H]
\centering
\psfrag{Root Locus}{}
\psfrag{Real Axis}{Real  (rad/s)}
\psfrag{Imaginary Axis}{Imaginary  (rad/s)}
\includegraphics[width=16cm]{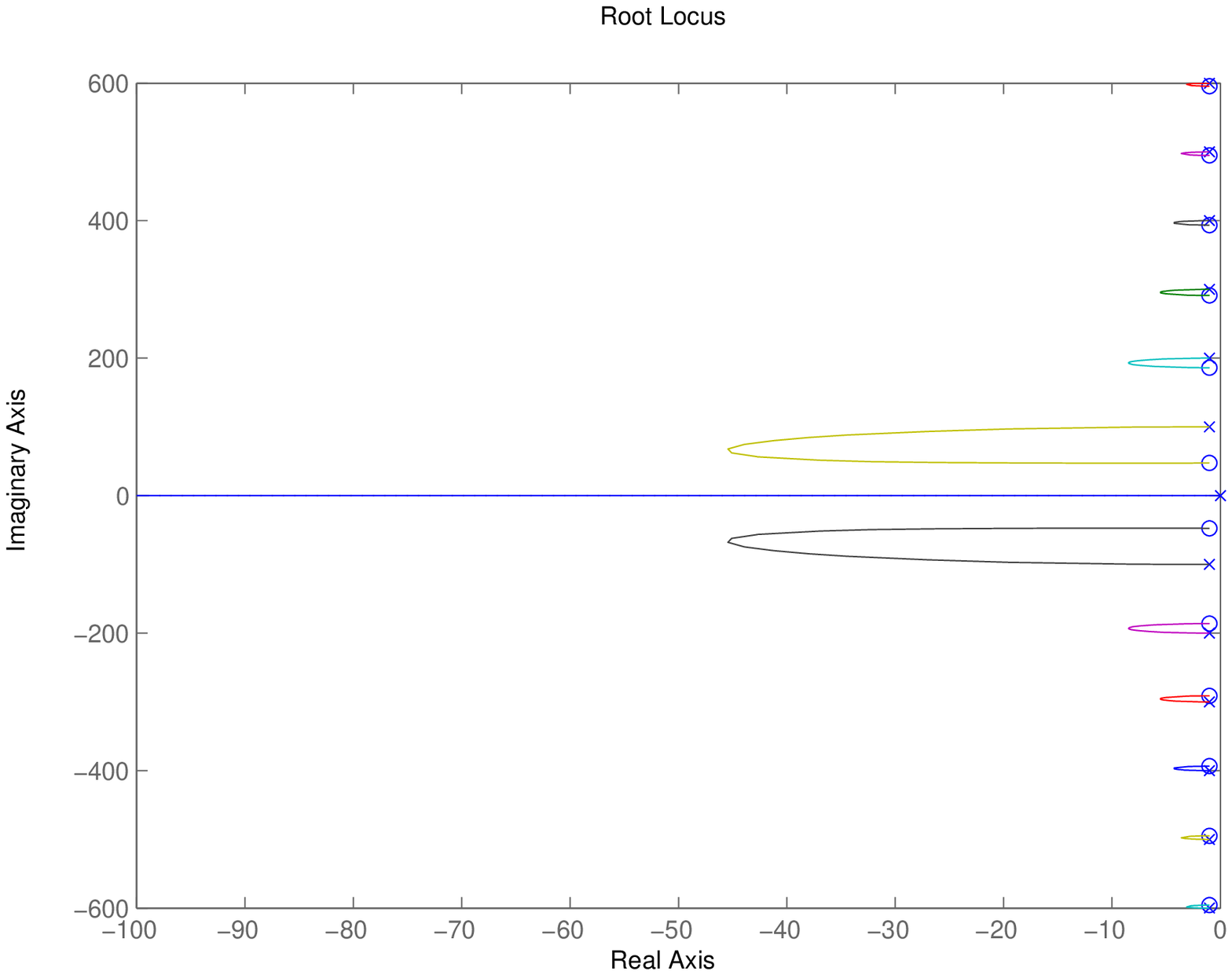}
\caption{Root locus of the closed-loop poles of a control system
  consisting of a flexible structure
  plant and an integral resonant
  controller. Here, the plant transfer function is $P(s) =
  \sum_{k=1}^{10}\frac{1}{s^2+2s+10^{4}k^2}$, and the
  integral resonant
  controller transfer function is $C(s) = \frac{\Gamma}{s+\Gamma
    \Phi}$. In this control system, both the plant and the controller
  are strictly negative imaginary. The root locus is obtained by varying the
  parameter $\Gamma > 0$ with
  $\Phi= 1.8597\times
10^{-4} \texttt{ m/N}$.
 }
\label{F5.2b}
\end{figure}
\newpage

\begin{figure}[H]
\centering
\psfrag{Magnitude dB}{Magnitude (dB)}
\psfrag{Frequency rad/s}{Frequency (rad/s)}
\includegraphics[width=16cm]{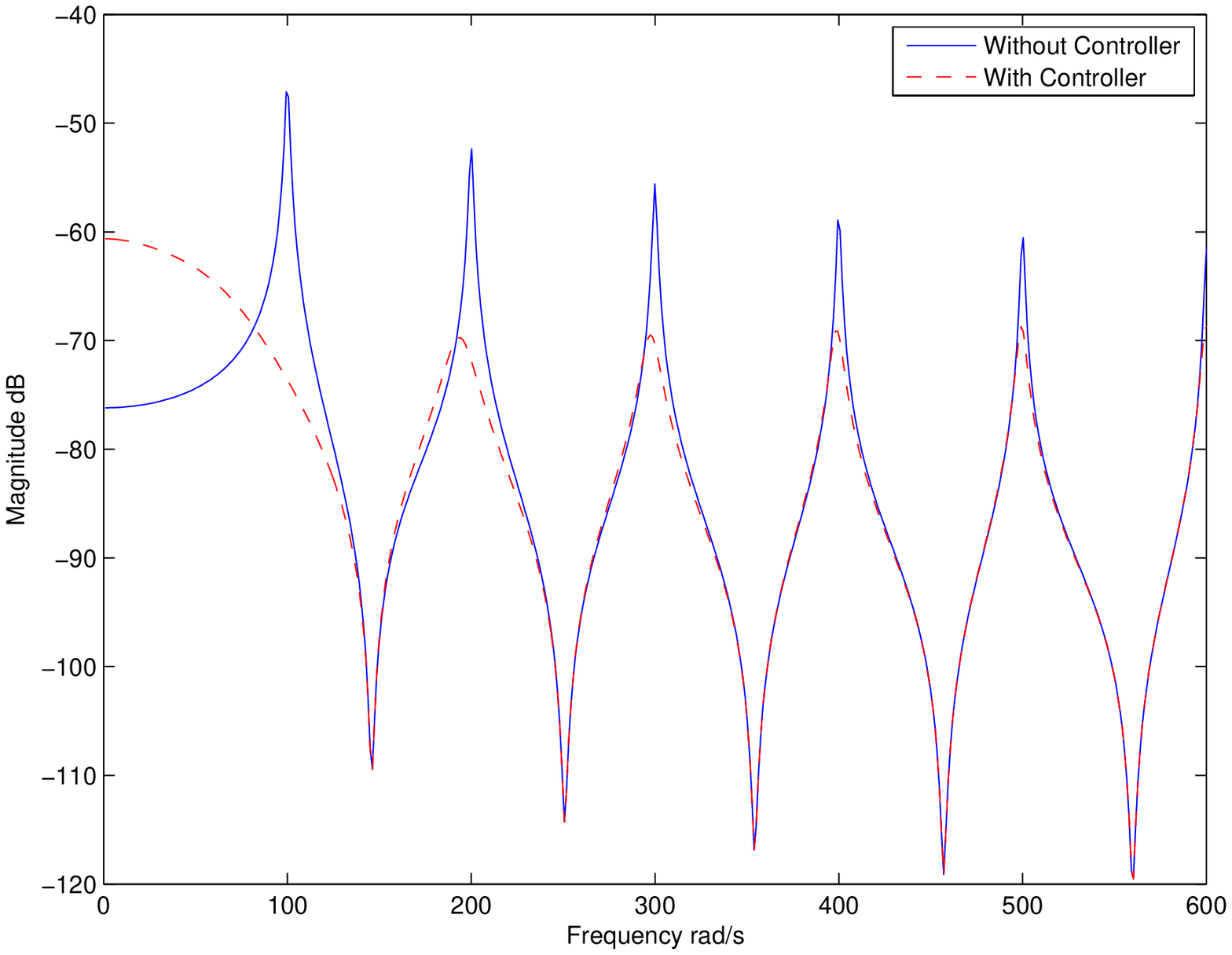}
\caption{
Open- and closed-loop frequency responses for a lightly damped flexible
structure with an integral resonant feedback
  controller. Here the plant transfer function is $P(s) =
  \sum_{k=1}^{10}\frac{1}{s^2+2s+10^{4}k^2}$, and the frequency
  response is taken from the command input to the sensor output. The
  closed-loop  uses the integral resonant feedback
  controller $C(s) = \frac{\Gamma}{s+\Gamma \Phi}$. The  parameter
  $\Gamma = 9.6584 \times 10^5 \texttt{ rad-N/(s-m)}$ is chosen to provide adequate damping
  of the low-frequency resonant modes, and the parameter $\Phi$ is
  fixed at $\Phi= 1.8597\times
10^{-4} \texttt{ m/N}$.
}
\label{F5.2c}
\end{figure}

\newpage
\begin{figure}[H]
\centering
\psfrag{D(s)}{$\Delta(s)$}
\psfrag{G(s)}{$G_{\mbox{\small cl}}(s)$}
\includegraphics[width=16cm]{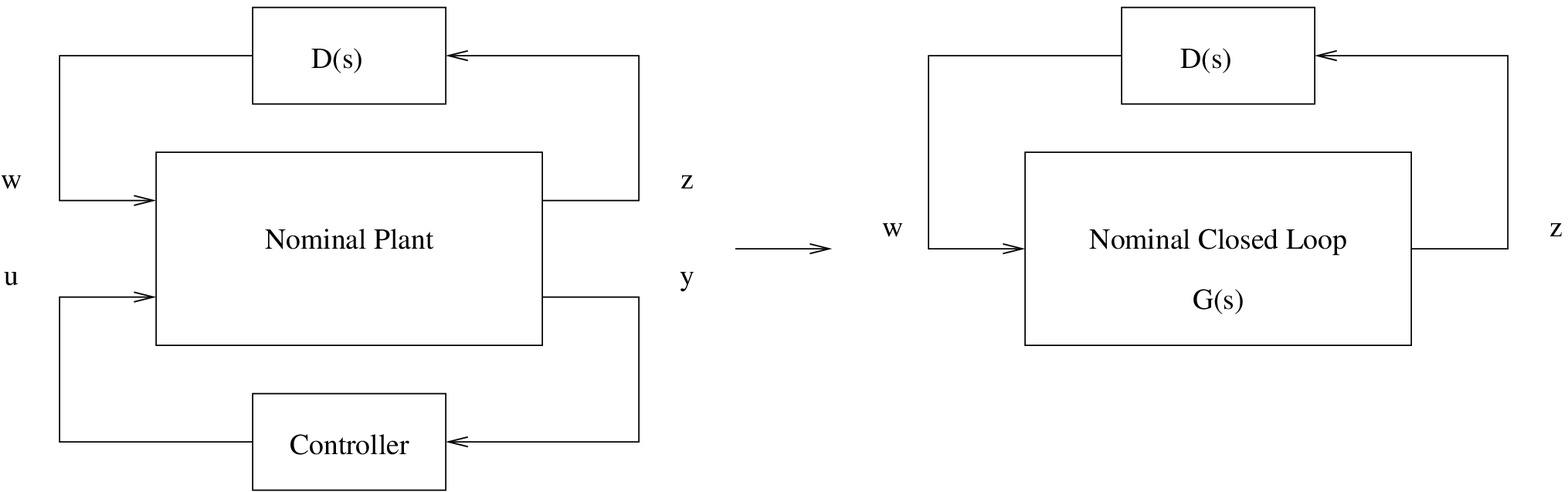}
\caption{A feedback control system.  The plant uncertainty
  $\Delta(s)$ is strictly negative imaginary, and satisfies the dc
  gain condition $\sigma_{\mbox{\small max}}(\Delta(0)) <
  \mu$ and $\Delta(\infty)\geq 0$. If the controller
  is chosen so that the nominal closed-loop transfer function matrix $G_{\mbox{\small cl}}(s)$ is strictly proper,
   negative
  imaginary, and satisfies the  dc gain condition $\sigma_{\mbox{\small max}}(G_{\mbox{\small cl}}(0))
  \leq   \frac{1}{\mu}$, then the closed-loop system is  robustly stable for all strictly negative imaginary
  uncertainty $\Delta(s)$ satisfying $\sigma_{\mbox{\small max}}(\Delta(0)) <
  \mu$ and $\Delta(\infty)\geq 0$.
 }
\label{F5.3}
\end{figure}
\newpage
\begin{figure}[H]
\centering
\psfrag{s+1}{$\scriptstyle s+1$}
\psfrag{y}{$y$}
\psfrag{F(s)}{$G(s)$}
\psfrag{u}{$u$}
\psfrag{x1}{$x_1$}
\psfrag{x2}{$x_2$}
\psfrag{x3}{$x_3$}
\includegraphics[width=16cm]{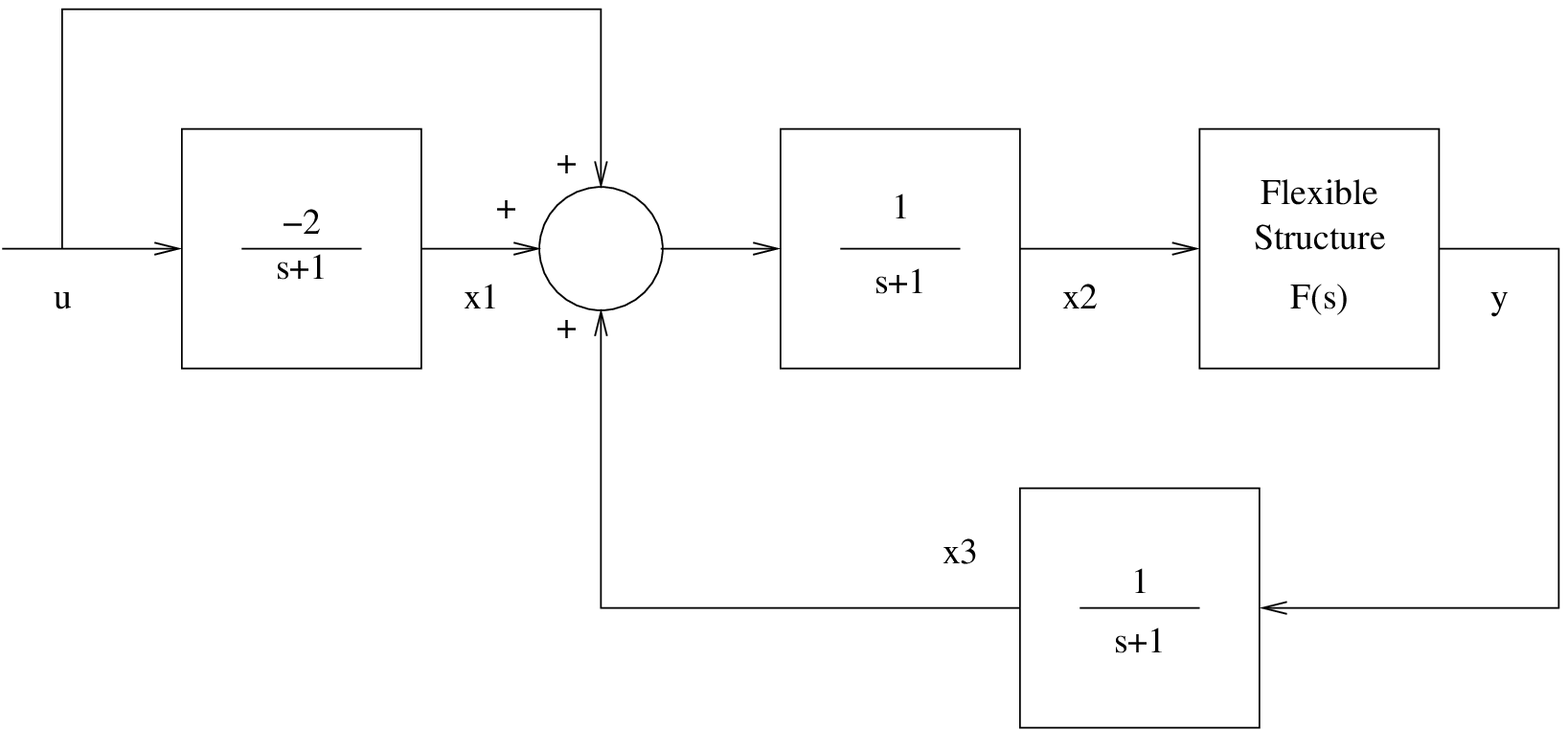}
\caption{Control of a flexible structure system using a
  state-feedback linear matrix inequality (LMI) approach to robust
  controller design. This system
  includes the
  unknown flexible structure transfer function $G(s)$. In this system,
  the force applied to the structure is labeled $x_2$, and the
  deflection of the structure at the same location is labeled $y$. A
  state-feedback controller is to be designed for this system by
  replacing
  the flexible structure transfer function $G(s)$ by a unity gain, and
  treating the
  resulting error $\Delta(s) = G(s) - 1$ as a strictly
  negative-imaginary uncertainty. The  state-feedback
  controller
gain matrix  can be obtained by solving an LMI feasibility problem.
 }
\label{F5.4}
\end{figure}
\newpage
\begin{figure}[H]
\centering
\psfrag{z}{$z$}
\psfrag{w}{$w$}
\psfrag{u}{$u$}
\psfrag{y}{$y$}
\psfrag{s+1}{$\scriptstyle s+1$}
\psfrag{x1}{$x_1$}
\psfrag{x2}{$x_2$}
\psfrag{x3}{$x_3$}
\psfrag{D}{$\Delta(s)$}
\includegraphics[width=16cm]{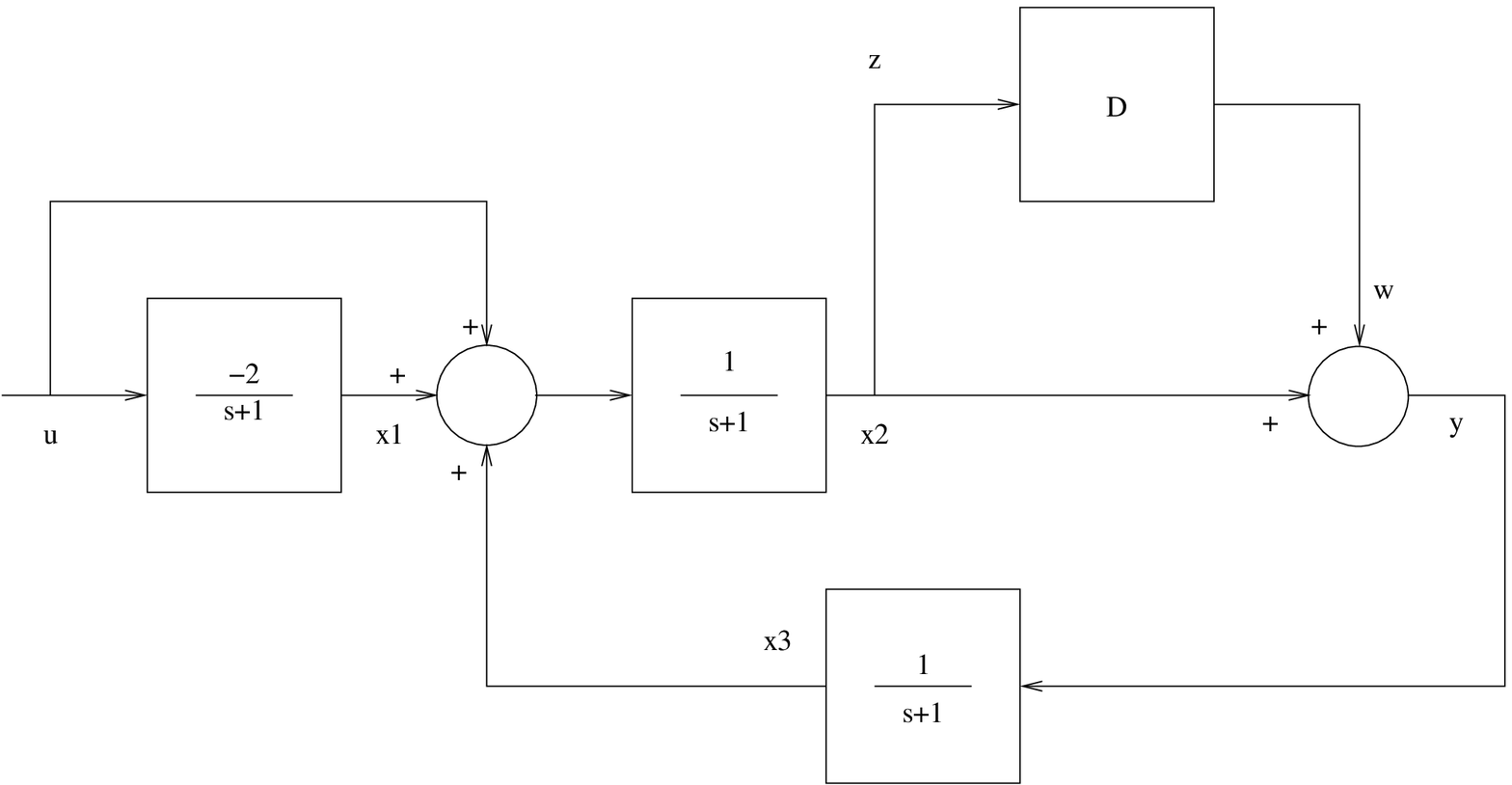}
\caption{Control of an uncertain system  using full-state-feedback control.  The uncertain system is constructed from the
  system shown in Figure \ref{F5.4} by
  replacing the flexible structure transfer
  function $G(s)$ by $1+\Delta(s)$, where $\Delta(s)$ is an uncertain but strictly negative-imaginary  transfer function. The signal $z$ is
treated as an uncertainty output and the signal $w$ is treated as an
uncertainty input.
 }
\label{F5.5}
\end{figure}
\newpage
\begin{figure}[H]
\centering
\psfrag{Bode Diagram}{}
\includegraphics[width=16cm]{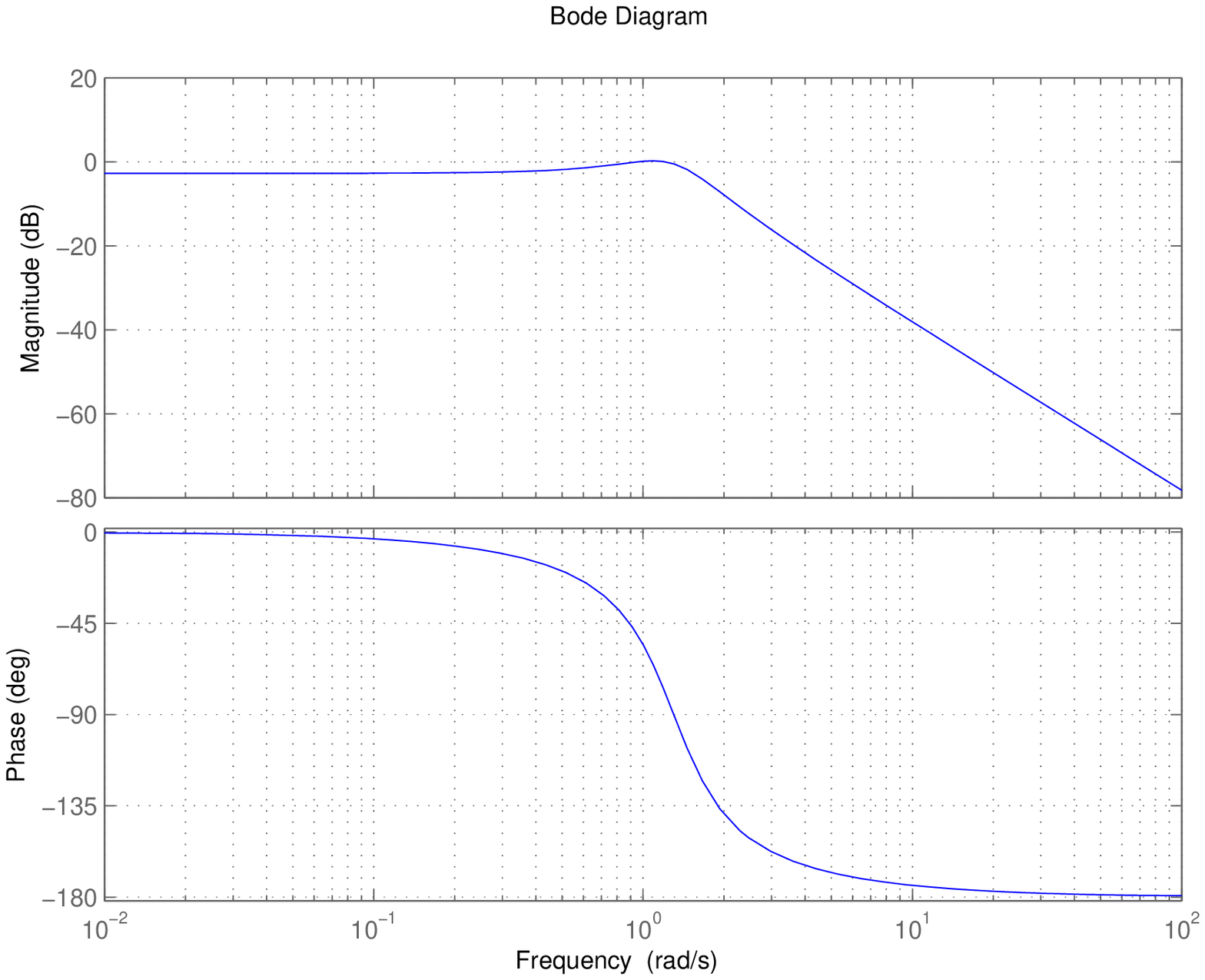}
\caption{Bode plot of the closed-loop transfer function $G_{\mbox{\small cl}}(s)$ from  the
  uncertainty input $w$ to the uncertainty output
  $z$. This closed-loop system is obtained from the  system shown in Figure \ref{F5.5} using a
  full-state-feedback controller obtained from Theorem \ref{SFNI_DCgain}. The fact that  $\angle G_{\mbox{\small cl}}(\jw) \in (-\pi,0)$ for all
   $\omega > 0$  implies that this transfer
  function is
  strictly negative imaginary. Also, since $G_{\mbox{\small cl}}(s)$ has no
  poles in CRHP and   $|G_{\mbox{\small cl}}(0)| < 1$, it follows that the closed-loop uncertain system is
  internally stable for all uncertainties $\Delta(s)$ that are
  strictly negative imaginary and satisfy $|\Delta(0)| < 1$. The fact that $|G_{\mbox{\small cl}}(0)| < 1$ c
  an be seen in the magnitude Bode plot.
 }
\label{F5.6}
\end{figure}

\newpage
\section{Sidebar 1\\ What Is Positive-real and Passivity Theory?}
\renewcommand{\thetheorem}{S\arabic{theorem}}
\setcounter{theorem}{0}
\renewcommand{\thedefinition}{S\arabic{definition}}
\setcounter{definition}{0}
\renewcommand{\thefigure}{S\arabic{figure}}
\setcounter{figure}{0}

A SISO positive-real transfer function    has a positive real part at all frequencies; a typical frequency response is depicted in
Figure~\ref{one-positive-real-system}.
The passivity theorem, which underpins
much of the robust and adaptive control literature~\cite{SAN65}, concerns the internal stability of
the negative-feedback interconnection, as shown in
Figure~\ref{negative_feedback-interconnection}, of two positive-real
transfer function matrices.
\begin{definition} (\cite{ZDG96})
\label{D2}
The feedback interconnection of two systems with transfer function
matrices $M(s)$ and $N(s)$ as shown in Figure
\ref{negative_feedback-interconnection} is  {\em internally
  stable} if the interconnection does not contain an algebraic loop
and the transfer
function matrix from exogenous signals to internal signals has no
poles in  CRHP.
\end{definition}

The following result is the {\em passivity theorem} \cite{DV75},\cite[Section 6.5]{KHA01}.

\begin{theorem}
The negative-feedback interconnection  of the positive-real
transfer function matrix $M(s)$ and the strictly positive-real transfer function matrix $N(s)$
is internally stable.
\end{theorem}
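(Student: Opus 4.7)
My plan is to use the Kalman--Yakubovich--Popov characterization of (strict) positive realness to build a joint Lyapunov/storage function for the closed loop. First, I would check that the interconnection in Figure~\ref{negative_feedback-interconnection} is well posed: since $N(s)$ is strictly positive real, $N(\infty)+N^{\transpose}(\infty)>0$ and in particular $N(\infty)$ is nonsingular, so a straightforward calculation shows $I+M(\infty)N(\infty)$ is nonsingular, ruling out the algebraic loop. I would then take minimal state-space realizations $(A_M,B_M,C_M,D_M)$ and $(A_N,B_N,C_N,D_N)$ of $M(s)$ and $N(s)$ and invoke the positive-real lemma: there exist $P_M>0$ and $P_N>0$, and for the strictly positive real $N$ an $\varepsilon>0$, such that the standard KYP inequalities hold. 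In particular $A_N$ is Hurwitz (a strict positive real transfer function matrix has all poles in OLHP), while for $M$ the poles lie in CLHP.

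Next I would write the closed-loop dynamics driven by an exogenous input, setting $u_M=-y_N$ and $u_N=y_M$, and use $V(x_M,x_N)=x_M^{\transpose}P_M x_M+x_N^{\transpose}P_N x_N$ as a candidate Lyapunov function. Differentiating along the unforced closed-loop trajectories and substituting the two KYP dissipation inequalities collapses the cross-terms $u_M^{\transpose}y_M$ and $u_N^{\transpose}y_N$ exactly because of the sign convention of negative feedback, yielding
\begin{equation*}
\dot V \le -\varepsilon\,\|x_N\|^2 ,
\end{equation*}
with the strict $\varepsilon$-term coming from the strict positive realness of $N$. A LaSalle-style argument then forces $x_N\equiv 0$ and hence $y_N\equiv 0$; feeding this back into the $M$-dynamics and using the minimality (observability) of the realization of $M$ drives $x_M$ to zero as well, so the unforced interconnection is asymptotically stable and the closed-loop $A$-matrix is Hurwitz. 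Internal stability in the sense of Definition~\ref{D2} then follows by writing out the four exogenous-input-to-internal-signal transfer functions and noting they share this Hurwitz dynamics matrix.

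The main obstacle I expect is the LaSalle step, which is where the asymmetry between positive real and strictly positive real actually gets used. The $-\varepsilon\|x_N\|^2$ term gives decay only in the $N$-states; one must argue carefully that on the largest invariant set where $x_N=0$ the $M$-subsystem is also driven to zero, and this requires combining the observability of the $M$-realization with the fact that $y_N\equiv 0$ feeds a zero disturbance back into $M$. A secondary subtlety is the $D$-matrix bookkeeping: when $D_M$ or $D_N$ is nonzero the KYP inequalities carry extra terms, and one has to verify that these extra terms cancel under negative feedback in the same way the dynamic cross-terms do. Once these two points are handled the conclusion is immediate.
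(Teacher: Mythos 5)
Your route is genuinely different from the paper's: the article justifies this theorem only by citation to \cite{DV75} and \cite{KHA01} together with a SISO Nyquist phase argument (the loop phase lies in $(-\pi,\pi)$, so the plot cannot touch the negative real axis and the point $-1+\jmath 0$ is not encircled), whereas you propose the standard state-space KYP/Lyapunov proof, which is the right machinery for the MIMO statement. However, as sketched, your argument has two concrete gaps. First, strict positive realness does \emph{not} imply $N(\infty)+N^{\transpose}(\infty)>0$: the paper's own example $N(s)=\frac{1}{s+1}$ is strictly positive real with $N(\infty)=0$, since condition (\ref{spr0}) is only required at finite frequencies. So your well-posedness argument rests on a false premise (although well-posedness itself is immediate when $N$ is strictly proper, and in general follows from a small quadratic-form argument using $D_M+D_M^{\transpose}\geq 0$ and the structure of $D_N+D_N^{\transpose}\geq 0$).

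Second, and more seriously, your resolution of the LaSalle step would fail. From $x_N\equiv 0$ you get $y_N=D_Nu_N$, which is zero only when $D_N=0$; and even granting $y_N\equiv 0$, this only gives $u_M\equiv 0$, so the $M$-subsystem evolves as $\dot x_M=A_Mx_M$. A positive-real $M$ may have imaginary-axis poles (\eg\ $M(s)=1/s$), so a zero input does \emph{not} drive $x_M$ to zero, and observability of $(A_M,C_M)$ cannot rescue this: observability plus $y_M\equiv 0$ would force $x_M\equiv 0$, but you have only established $u_M\equiv 0$, not $y_M\equiv 0$. The argument must instead exploit the full invariance conditions: $x_N\equiv 0$ forces $\dot x_N=B_Nu_N\equiv 0$, and the residual term $Wu_N$ from the SPR dissipation identity (where $W^{\transpose}W=D_N+D_N^{\transpose}$) must also vanish; since $N$ is SPR, a vector $v$ with $B_Nv=0$ and $v^{\transpose}(D_N+D_N^{\transpose})v=0$ would make $v^{*}(N(\jw)+N^{*}(\jw))v=0$, a contradiction unless $v=0$. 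This yields $u_N=y_M\equiv 0$, and only then does observability of the $M$-realization give $x_M\equiv 0$. Without this chain, the invariant set need not reduce to the origin and internal stability does not follow.
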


The SISO positive-real transfer function $M(s)$ satisfies
$\angle M(\jw)\in[-\pi/2,\pi/2]$ for
all  $\omega \geq 0$. Also, the SISO strictly positive-real
transfer function  $N(s)$ satisfies
$\angle N(\jw)\in(-\pi/2,\pi/2)$ for all  $\omega \geq
0$. From $\angle M(\jw)\in[-\pi/2,\pi/2]$  and $\angle N(\jw)\in(-\pi/2,\pi/2)$ for all  $\omega \geq
0$, it follows
that $\angle M(\jw)N(\jw)=\angle M(\jw)+\angle
N(\jw)\in(-\pi,\pi)$ for all  $\omega \geq 0$, and hence
the Nyquist plot of $M(\jw) N(\jw)$ cannot intersect
the negative real axis. Consequently, the  Nyquist plot of $M(s)N(s)$ cannot encircle the  Nyquist point $s=-1+\jmath 0$, and
internal stability of the negative-feedback  interconnection of $M(s)$
and $N(s)$  follows
from the  Nyquist stability criterion as depicted
in Figure~\ref{passivity}.

 The above concepts relating to  positive-real systems and the
 passivity theorem  generalize
 to MIMO linear time-invariant  systems and also to a
nonlinear and time-varying setting~\cite{DV75}.

\newpage
\begin{figure}[H]
\psfrag{M}{$M(\jw)$}
\centering
\includegraphics[scale=1.25]{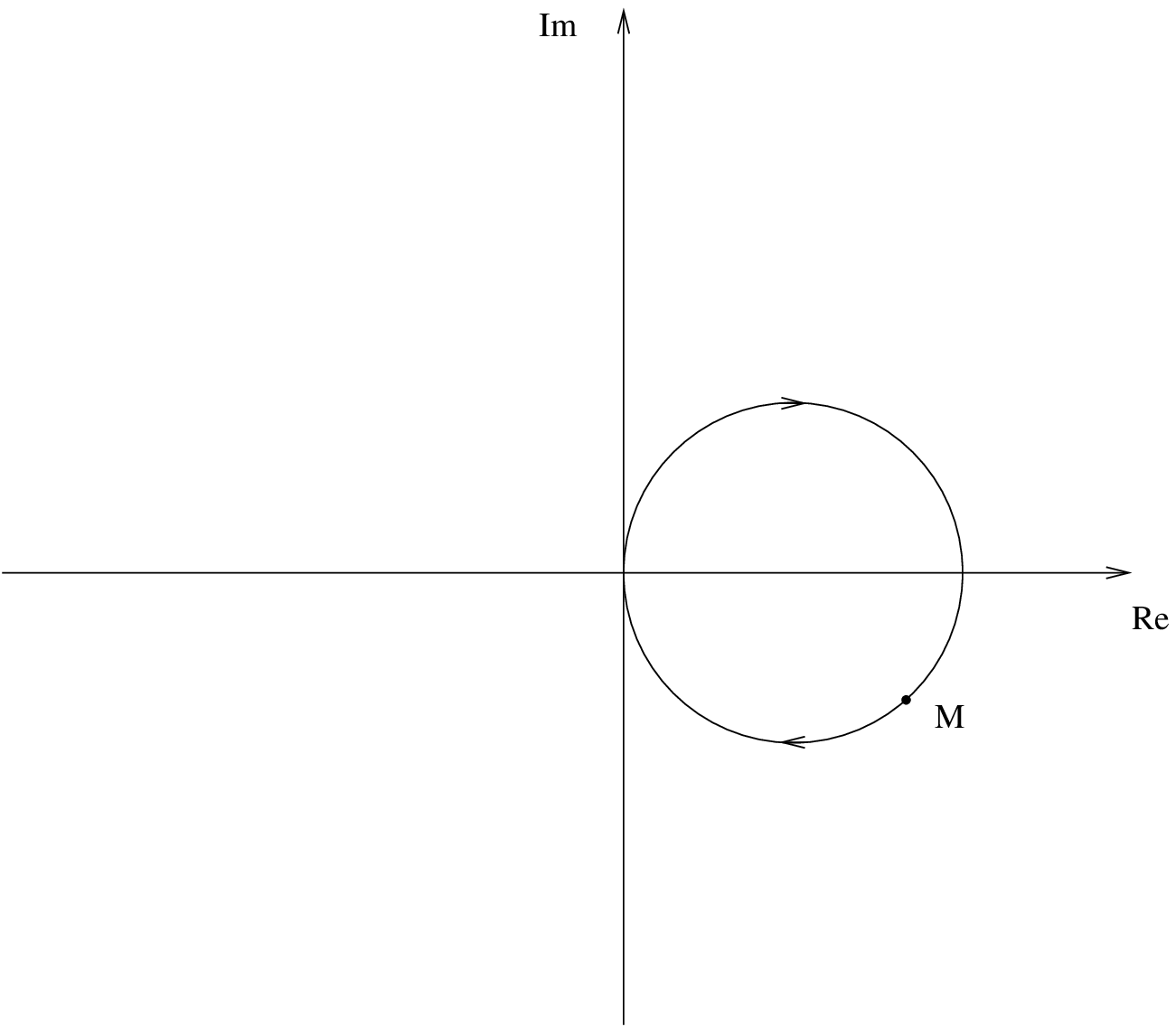}
\caption{The Nyquist plot of the positive-real transfer function $M(s)=\frac{1}{s+1}$. This plot
  illustrates the fact that, for a  single-input, single-output  positive-real
  transfer function,  the real part of its frequency response is positive for all frequencies.
  Consequently,  the Nyquist plot is contained in CRHP.}
\label{one-positive-real-system}
\end{figure}

\newpage
\begin{figure}[H]
\centering
\psfrag{M}{$M(s)$}
\psfrag{N}{$N(s)$}
\psfrag{w1}{$w_1$}
\psfrag{u1}{$u_1$}
\psfrag{y1}{$y_1$}
\psfrag{w2}{$w_2$}
\psfrag{u2}{$u_2$}
\psfrag{y2}{$y_2$}
\includegraphics[scale=1.25]{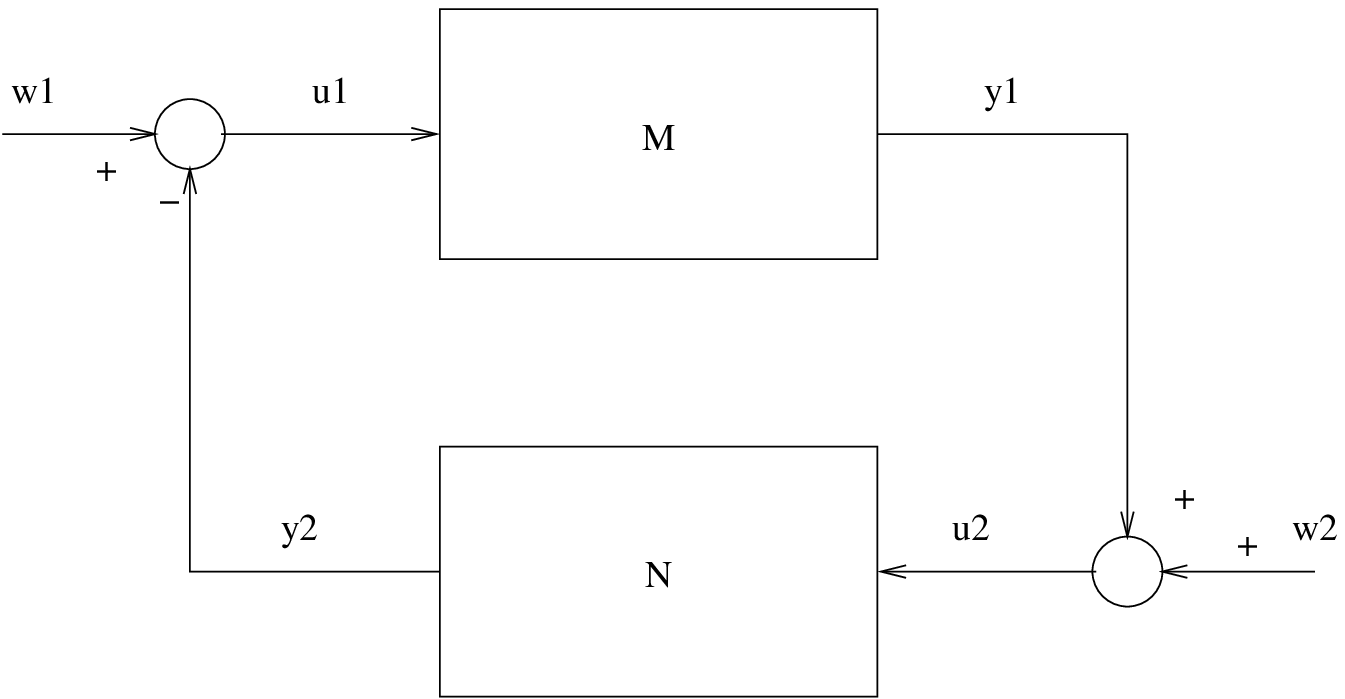}
\caption{A negative-feedback interconnection. This figure shows the
  negative-feedback interconnection of the transfer functions
  $M(s)$ and $N(s)$. The stability of this feedback
interconnection can be guaranteed using the passivity theorem if
$M(s)$ and $N(s)$ are positive real and either $M(s)$ or $N(s)$ is strictly
positive real.}
\label{negative_feedback-interconnection}
\end{figure}

\newpage
\begin{figure}[H]
\centering
\mbox{\psfrag{M}{$\scriptscriptstyle M(\jw)$}\psfrag{N}{$\scriptscriptstyle N(\jw)$}\psfrag{LG}{$\scriptscriptstyle M(\jw)N(\jw)$}\includegraphics[scale=.4]{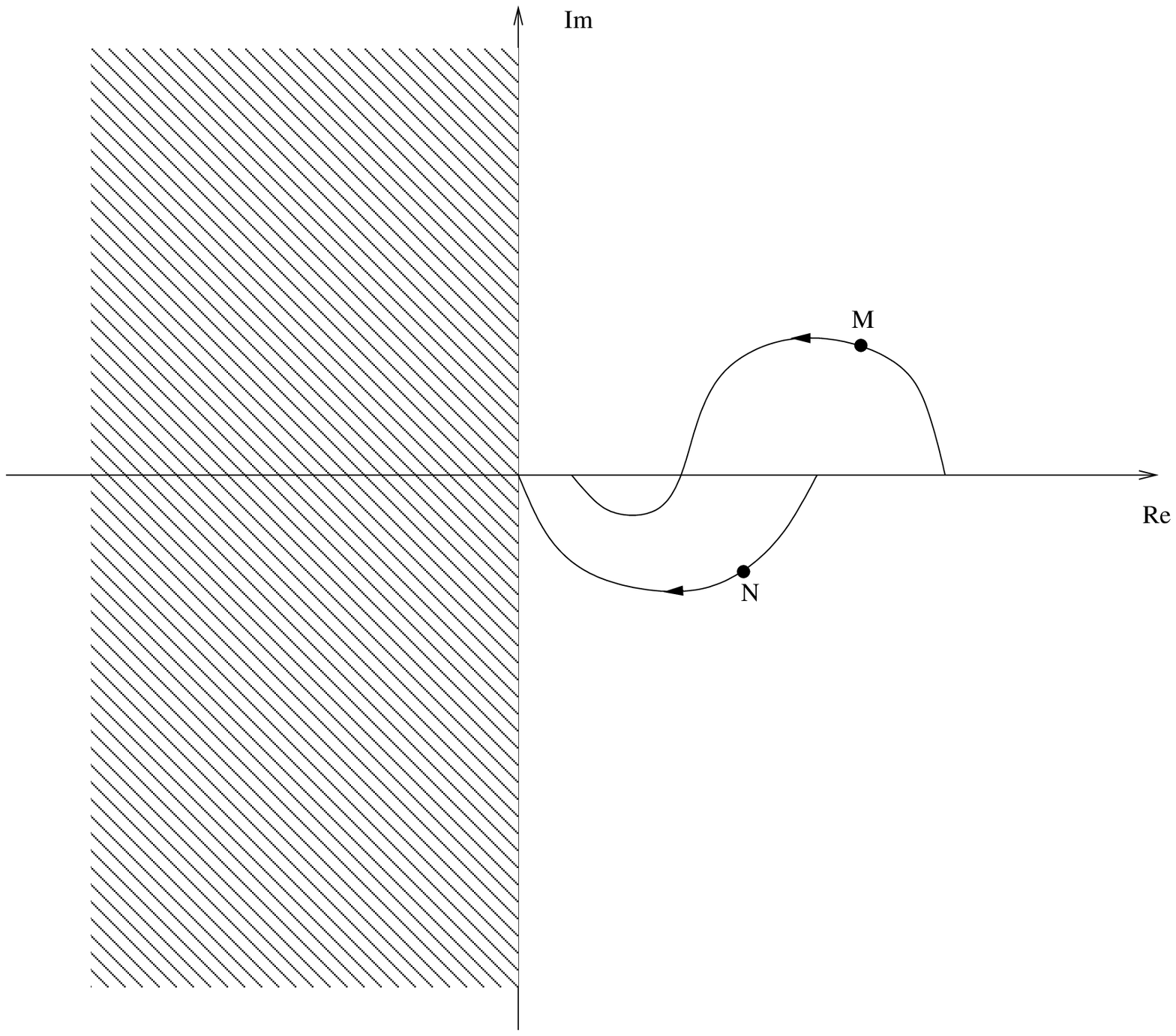}\quad
\includegraphics[scale=.4]{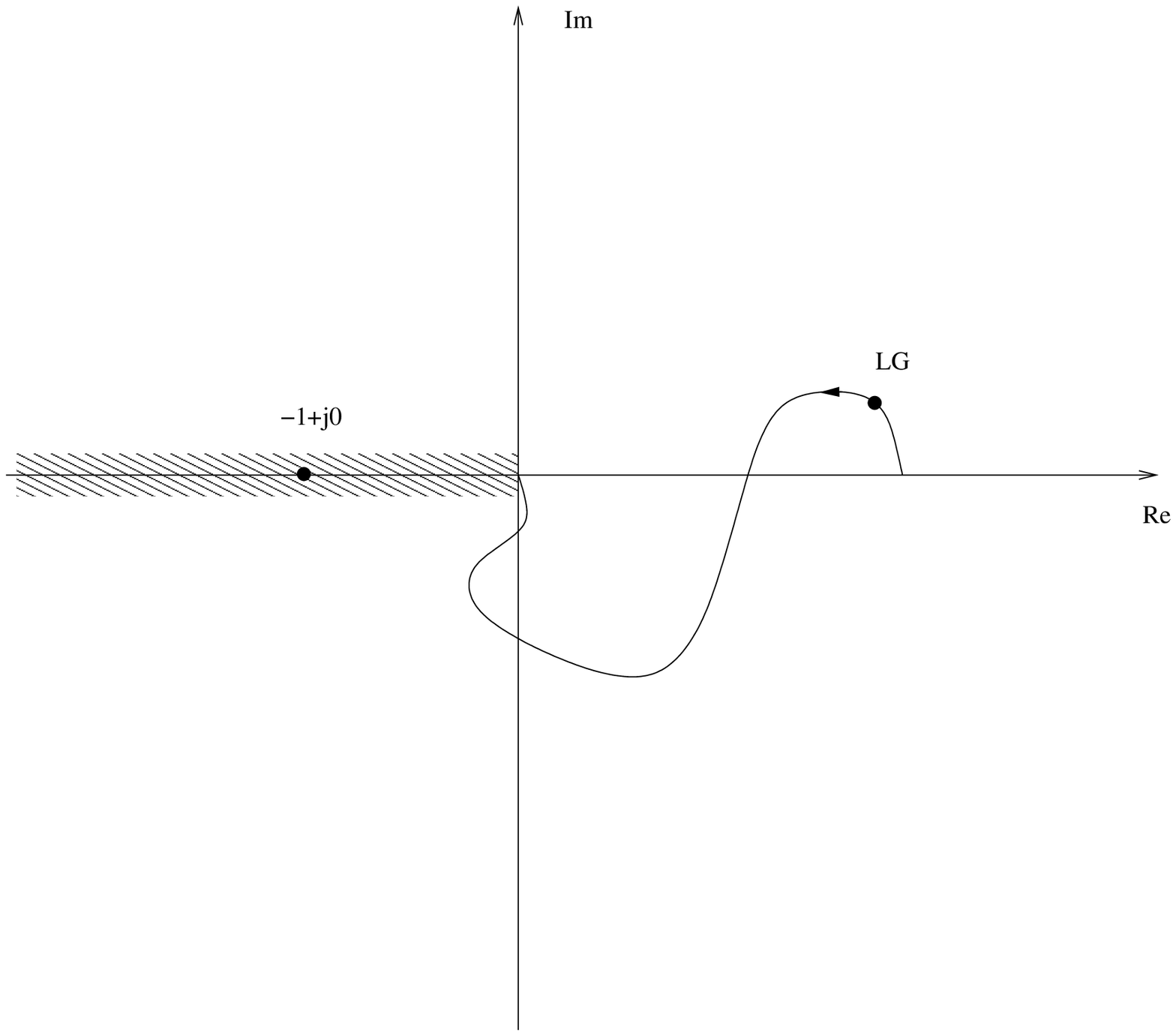}}
\caption{The passivity theorem. This plot shows two
  single-input single-output positive-real  transfer functions $M(s)$ and $N(s)$,
  both of whose Nyquist plots are  contained in  CRHP, and one of
  which
  is  contained in ORHP.
  Therefore, the Nyquist plot of the loop transfer function $M(s)N(s)$  cannot
  intersect the negative real axis.  Since the
  critical point $s=-1+\jmath 0$ cannot be encircled, it follows from
  the Nyquist stability criterion that  the negative-feedback
  interconnection of
  $M(s)$ and $N(s)$ must be internally  stable   }
\label{passivity}
\end{figure}

\newpage
\section{Sidebar 2\\ Applications to Electrical Circuits}

The  properties of a flexible structure with colocated actuators
and sensors  have counterparts in passive electrical circuits
driven by voltage or current sources.
Consider a resistor, inductor, capacitor  (RLC) electrical circuit with
$m$ voltage or current
sources. Assume that, for each voltage source input,
the  current flowing through the source is the corresponding output of the
system. Also, assume that,
for each current source input to the system, the  voltage across the source
is the corresponding  output of the system.   Let $v_1(t)$,
$\ldots$, $v_m(t)$ denote the voltage signals, and let $i_1(t)$,
$\ldots$, $i_m(t)$ denote the current signals. These signals are
dual in the sense that the product
$v_k(t)i_k(t)$ is equal to the power provided to the circuit
by the $k$th source at time $t$. Then,  let $u(t)$ be the vector of
 voltage- or current-source inputs at time $t$, and let $y(t)$ be the
vector of   voltage or current outputs at time $t$. Writing
\[
Y(s) = P(s) U(s),
\]
where
$P(s)$ is the transfer function matrix of the circuit, it follows that
the total power provided to the circuit by the
sources at time $t$ is given by $u^{\transpose}(t)y(t)$. As in the case of
a flexible structure
with colocated sensors and actuators, the
transfer function matrix
$P(s)$  is positive real.

Now suppose that each voltage source is connected in series with a
capacitor, and  that the corresponding system output is the
voltage across this capacitor divided by the capacitance. Also,
suppose that each current source is connected in parallel with an
inductor, and that the corresponding system output is the inductor
current divided by the inductance. This situation, which is
illustrated in Figure \ref{F1.0},  is
analogous to the case of a flexible structure with colocated force
actuation and position measurements since the current through a
capacitor is equal to the capacitance multiplied by the derivative of
the voltage across it. Also, the voltage across an inductor is
equal to the inductance multiplied by the derivative
of the current flowing through it.   Hence  each output variable is
such that its derivative is  a
 variable  that is dual to the corresponding source
 variable. Therefore, the transfer function matrix of the circuit $P(s)$   satisfies
 the
negative-imaginary condition
\[
\jmath(P(\jw) - P^{\transpose}(-\jw)) \geq 0.
\]

\newpage
\begin{figure}[H]
\centering
\mbox{\subfigure[Voltage source with series capacitor voltage
  measurement.]{
\psfrag{uk}{$u_k$}
\psfrag{yk}{$y_k$}
\includegraphics[width=8cm]{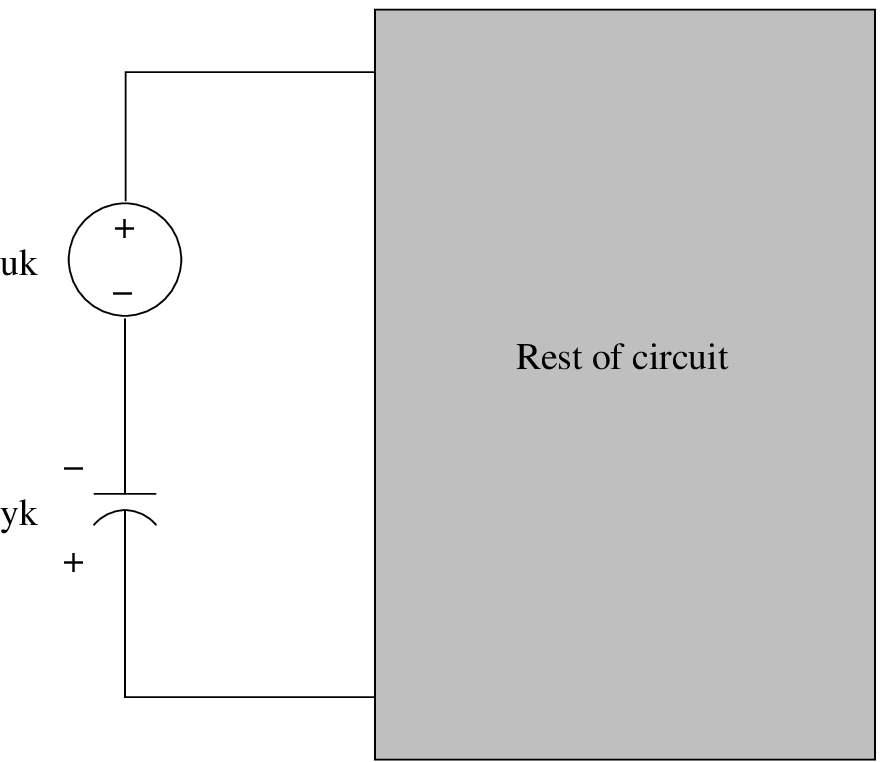}} \quad
\subfigure[Current source with parallel inductor current measurement.]{
\psfrag{uk}{$u_k$}
\psfrag{yk}{$y_k$}
\includegraphics[width=8cm]{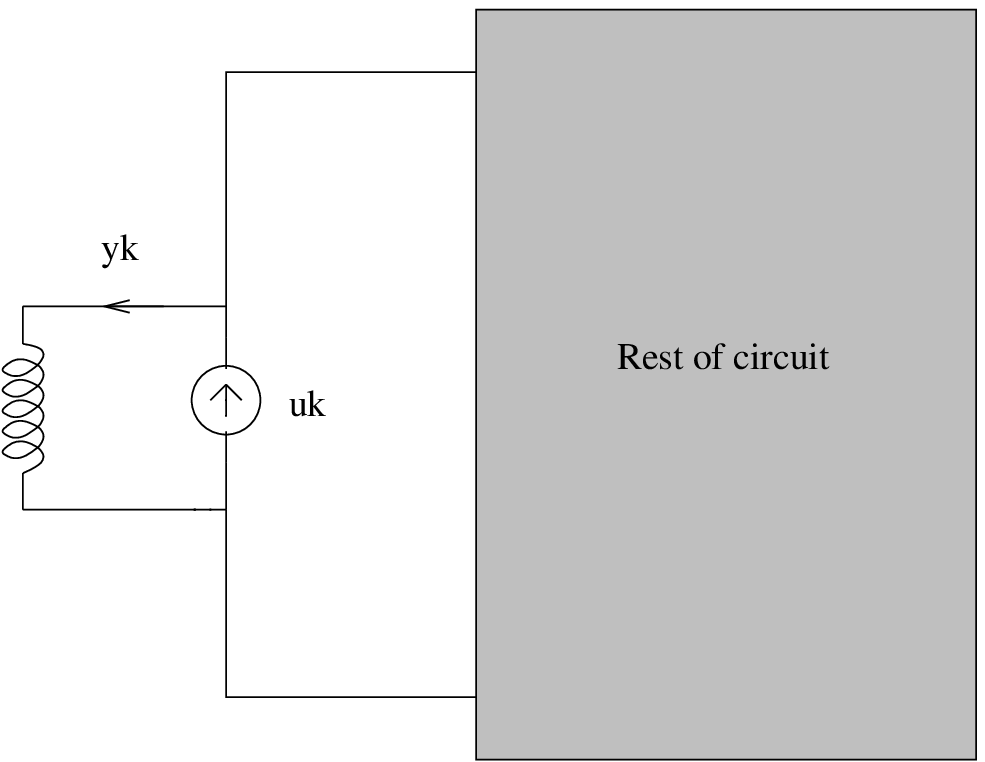}}
}
\caption{A resistor, inductor, capacitor (RLC) electrical
  circuit, where each input is a voltage  or
  current source. Also, each output corresponds to the voltage across a
  capacitor in series with a voltage source or the current through an
  inductor in parallel with a current source. This circuit is
  described by $P(s)$, the
   transfer function matrix from the vector of
  inputs to the vector of outputs. The transfer function matrix
  $P(s)$ is negative-imaginary.
  That is, the  transfer function matrix $P(s)$ has no poles in CRHP
  and satisfies the condition
$
\jmath(P(\jw) - P^{\transpose}(-\jw)) \geq 0
$
for all $\omega \geq 0$.
}
\label{F1.0}
\end{figure}

\newpage
\section{Sidebar 3\\ What Is Finsler's Theorem?}
Finsler's theorem, which is used in the proof of Lemma~\ref{IRC_SNI}, is summarized in the
  following lemma \cite{UHL79}.
\begin{lemma}
\label{Finsler}
  Let $M$ and $N$ be real symmetric  matrices  such that $M$ is
positive semidefinite and
$x^{\transpose}Nx \geq  0$
for all  real $x$ such that $M x = 0$.
Then there exists
$\bar \tau > 0$ such
that $N + \tau M \geq  0$ for all $\tau \geq \bar \tau$.
\end{lemma}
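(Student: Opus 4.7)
The plan is to prove the lemma by a compactness/contradiction argument on the unit sphere in $\mathbb{R}^n$. Assume for contradiction that no such $\bar\tau$ exists; then for each positive integer $k$ one can choose a unit vector $x_k\in\mathbb{R}^n$ such that $x_k^{\transpose}(N+kM)x_k<0$. Because $M\ge 0$, this rearranges to
\[
x_k^{\transpose}Nx_k<-k\,x_k^{\transpose}Mx_k\le 0.
\]

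The first concrete step is to extract limiting information. Since $\|x_k\|=1$, we have $|x_k^{\transpose}Nx_k|\le\|N\|$, so the displayed inequality forces $0\le x_k^{\transpose}Mx_k<\|N\|/k\to 0$. By compactness of the unit sphere, I would pass to a convergent subsequence $x_{k_j}\to\hat x$ with $\|\hat x\|=1$. Continuity of the quadratic form then gives $\hat x^{\transpose}M\hat x=0$, and, since $M$ is positive semidefinite (so that $M=L^{\transpose}L$ for $L=M^{1/2}$), we conclude $L\hat x=0$ and hence $M\hat x=0$. Now the hypothesis applies, giving $\hat x^{\transpose}N\hat x\ge 0$, while the limit of $x_k^{\transpose}Nx_k\le 0$ gives $\hat x^{\transpose}N\hat x\le 0$.

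The main obstacle, flagged by this calculation, is that the compactness step alone only pins $\hat x^{\transpose}N\hat x$ to zero rather than producing an immediate contradiction; closing the argument therefore requires handling this boundary case. To do so I would pass to a spectral picture of $M$, writing $M=U\,\mathrm{diag}(D_1,0)\,U^{\transpose}$ with $U$ orthogonal and $D_1>0$, and partitioning $U^{\transpose}NU$ conformally into blocks $\tilde N_{11},\tilde N_{12},\tilde N_{22}$. The hypothesis then translates to $\tilde N_{22}\ge 0$, and for $\tau$ large enough that $\tilde N_{11}+\tau D_1>0$ the Schur-complement criterion reduces $N+\tau M\ge 0$ to
\[
\tilde N_{22}-\tilde N_{12}^{\transpose}(\tilde N_{11}+\tau D_1)^{-1}\tilde N_{12}\ge 0.
\]
Since $(\tilde N_{11}+\tau D_1)^{-1}$ has operator norm of order $1/\tau$, the negative correction shrinks to zero and is eventually dominated by $\tilde N_{22}$, which yields the lemma for all sufficiently large $\tau$.
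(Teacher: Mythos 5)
The first half of your argument is sound but, as you yourself note, inconclusive; the real problem is that your closing step contains a genuine gap, and in fact it cannot be repaired because the lemma as stated (with the non-strict hypothesis $x^{\transpose}Nx\geq 0$ on the kernel of $M$) is false. After the Schur-complement reduction you need
\[
\tilde N_{22}-\tilde N_{12}^{\transpose}\left(\tilde N_{11}+\tau D_1\right)^{-1}\tilde N_{12}\geq 0,
\]
and you argue that the correction term, being of size $O(1/\tau)$, is ``eventually dominated by $\tilde N_{22}$.'' A positive-semidefinite matrix dominates arbitrarily small positive-semidefinite perturbations only if it is positive definite (more precisely, only if the perturbation vanishes on $\ker\tilde N_{22}$, which here requires $\ker\tilde N_{22}\subseteq\ker\tilde N_{12}$ --- a condition not implied by the hypothesis). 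Concretely, take
\[
M=\left[\begin{array}{cc}1&0\\ 0&0\end{array}\right],\qquad N=\left[\begin{array}{cc}0&1\\ 1&0\end{array}\right].
\]
Then $M\geq 0$, $\ker M=\mathrm{span}\{e_2\}$, and $e_2^{\transpose}Ne_2=0\geq 0$, so all hypotheses hold; yet $\det(N+\tau M)=-1$ for every $\tau$, so $N+\tau M$ is never positive semidefinite. Your compactness argument correctly senses this: it can only force $\hat x^{\transpose}N\hat x=0$, which is exactly the boundary case in which the conclusion breaks down.

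For the record, the paper does not prove this lemma at all; it simply cites Uhlig's survey and gives a $2\times 2$ illustration. The standard Finsler-type statement assumes $x^{\transpose}Nx>0$ for all \emph{nonzero} $x\in\ker M$, and that strict form is precisely what the paper verifies in its only application (the proof of Theorem \ref{IRC_SNI}, where the quadratic form is shown to be strictly positive on nonzero kernel vectors). Under the strict hypothesis one has $\tilde N_{22}>0$, and your Schur-complement argument closes immediately, since $\lambda_{\min}(\tilde N_{22})>0$ eventually exceeds the $O(1/\tau)$ norm of the correction. So your proof is essentially correct for the version of the lemma the paper actually uses, but not for the version it states; to salvage it you should either strengthen the hypothesis to strict positivity on $\ker M$ or add the range condition $\ker\tilde N_{22}\subseteq\ker\tilde N_{12}$.
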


To illustrate Finsler's theorem,
let $M=\left[\begin{array}{cc} 1 & 0 \\ 0 & 0\end{array}\right]$ and
$N=\left[\begin{array}{cc} -1 & 0 \\ 0 & 1\end{array}\right]$.  All nonzero $x$ such that $M x
= 0$ are given by $x=\left[\begin{array}{c} 0 \\ \alpha\end{array}\right]$, where
$\alpha\in\mathbb{R}$ is nonzero.  Then $x^{\transpose}Nx =\alpha^2 > 0$. It
now follows from  Finsler's theorem
 that there exists $\bar\tau > 0$ such that $N + \tau M =
 \left[\begin{array}{cc}
    \tau -1 & 0 \\ 0 & 1\end{array}\right]\geq 0$ for all
$\tau\geq\bar\tau$. In this example,  $\bar\tau=1$.

\newpage
\section{Sidebar 4\\ How Are Rigid-Body Modes Handled?}
\renewcommand{\thetheorem}{S\arabic{theorem}}
\setcounter{theorem}{0}
\renewcommand{\thedefinition}{S\arabic{definition}}
\setcounter{definition}{0}
\renewcommand{\thefigure}{S\arabic{figure}}
\setcounter{figure}{0}

Output feedback control methods rely on output signal information measured through sensors to asymptotically stabilize all the internal states of a system.
In the case of a system that has unobservable modes that are not asymptotically stable, output feedback control cannot asymptotically stabilize the system.
% observability properties are essential in achieving
% the full state stabilization task. Velocity feedback and position feedback are output feedback control
% methods that seek to use velocity information and position information respectively to stabilize all the internal states of a system.
% The effect of position feedback is to increase the structural stiffness, thus increasing the natural frequencies.
% The effect of velocity feedback is to increase the viscous damping, thus increasing the modal damping.
Systems with rigid body modes,
which are characterized by a zero natural frequency, are an  example of systems that cannot be asymptotically stabilized by velocity feedback alone,
and position feedback is essential \cite[pp.~333--336]{M90}. Under velocity feedback alone, systems with rigid body modes can come to rest at a position other than
the origin of the state space.
The unobservability of the position states corresponding to the rigid body modes from the velocity outputs are the cause of this problem \cite[pp.~333--336]{M90}.
%Rigid body modes correspond to poles at the origin in the complex plane.
% These modes arise directly from Newton's second law of motion and
% describe the steady-state behavior of the output response to a specified input signal.

As a result of this problem,
rigid-body modes need special consideration in passivity approaches.  Typically a position feedback is applied before using
the passivity theorem, which is given in ``What Is Positive-real and Passivity Theory?''
Position feedback is applied
% in passivity approaches
% can be achieved by  integrating or filtering the velocity sensor
% output to estimate position and then feeding back the resulting position
% signal
in an inner loop
before applying velocity feedback on the outer
loop. This  technique converts the rigid-body modes
into vibrational modes, which renders the corresponding position states
observable from the velocity outputs of the system.

Now consider positive-position control of systems with rigid-body modes.
The definitions of NI and SNI systems given in Definition \ref{D3} and
Definition \ref{D4}
require that NI and SNI systems have no poles at the origin. Hence,
theorems \ref{NIL} and \ref{thm:main_result} cannot directly handle
rigid-body modes. Although theorems \ref{NIL} and \ref{thm:main_result} cannot handle
rigid-body modes directly, a similar technique to the velocity feedback case involving a position feedback inner loop can also be used on NI systems that  have
rigid-body modes.
This position feedback inner loop is used to convert the rigid body modes into vibrational modes.  Then the result of \cite{XiPL1a}, which generalizes Theorem \ref{thm:main_result}   to allow for modes on the
imaginary axis except at the origin, can be applied  to guarantee internal stability of the overall feedback system.  Thus, the resulting inner feedback loop
consists of unity feedback and  proportional feedforward control
 to convert the rigid-body modes to vibrational modes. Then,
positive-position feedback is applied in the outer loop. An advantage in this case relative to  velocity feedback is that a
position sensor output is already available.

\newpage
\section{Author Information}

\noindent
Ian R. Petersen received the
Ph.D in electrical engineering in 1984 from the University of
Rochester. From 1983 to 1985 he was a postdoctoral fellow at the
Australian National University. In 1985 he joined the  University of
New South Wales at the
Australian Defence Force Academy where he is currently a Scientia
Professor and an Australian Research Council Federation Fellow in the
School of  Engineering and Information Technology. He was
Executive Director for Mathematics, Information and Communications for
the Australian Research Council from 2002 until 2004, and acting
Deputy Vice-Chancellor Research for the University of New South Wales
in 2004 and 2005. He has
served as an associate editor for the {\em IEEE Transactions on Automatic
Control}, {\em Systems and Control Letters}, {\em Automatica}, and
{\em SIAM Journal on
Control and Optimization}.  He is currently
 {\em Automatica} editor for control and estimation theory. He is a Fellow of the IEEE.  His
research interests are in robust and nonlinear control theory, quantum control
theory, and stochastic control theory.

 Contact details:
 School of Information Technology and Electrical Engineering,
  University of New South Wales at the Australian Defence Force
  Academy,      Canberra ACT 2600, Australia,
email:     i.petersen@adfa.edu.au, telephone +61 2 62688446, fax +61 2 62688443.

\noindent Alexander Lanzon received the B.Eng.(Hons). degree in electrical engineering
from the University of Malta in 1995, and the Masters' and Ph.D. degrees in control engineering
from the University of Cambridge in 1997 and 2000, respectively. Before joining the
University of Manchester in 2006, he held academic positions at Georgia Institute of Technology and the Australian National University.
He received earlier research training at Bauman Moscow State Technical University and
industrial training at ST-Microelectronics Ltd., Yaskawa Denki Tokyo Ltd., and
National ICT Australia Ltd. He is a Fellow of the IET, a Senior Member of the IEEE,
and a member of AIAA. His research interests include the fundamental theory of
feedback control systems, robust control theory, and its applications to
aerospace control, in particular, control of new UAV concepts.

Contact details:
Control Systems Centre, School of Electrical and Electronic Engineering,
University of Manchester, Sackville Street, Manchester M13 9PL, UK,
Email: a.lanzon@ieee.org, telephone: +44-161-306-8722, fax: +44-161-306-8722.
\end{document}